\newcommand{\ignore}[1]{}
\newenvironment{prevproof}[2]{\noindent {\em {Proof of {#1}~\ref{#2}:}}}{\hfill{$\square$}\vskip \belowdisplayskip}
\newtheorem{claim}{Claim}
\newtheorem{remark}[claim]{Remark}
\newtheorem{lemma}[claim]{Lemma}
\newtheorem{theorem}{Theorem}
\newtheorem{definition}{Definition}
\newcommand{\Poi}{\mathrm {Poi}}
\newcommand{\dtv}{d_{\mathrm TV}}
\newcommand{\dk}{d_{\mathrm K}}
\newcommand{\R}{{\bf R}}
\newcommand{\E}{{\bf E}}
\newcommand{\Var}{\mathrm{Var}}
\newcommand{\eps}{\epsilon}
\newcommand{\littlesum}{\mathop{{\textstyle \sum}}}
\newcommand{\littleprod}{\mathop{{\textstyle \prod}}}
\newcommand{\poly}{\mathrm{poly}}
\newcommand{\bit}[1]{{\langle{#1}\rangle}}
   \newcommand{\remove}[1]{\par $<<${\it removed part}$>>$}
   \newcommand{\old}[1]{}
\newcommand{\costasnote}[1]{#1}
\newcommand{\blue}[1]{{{#1}}}
\newcommand{\rocconote}[1]{{{#1}}}
\newcommand{\green}[1]{{\color{green} {#1}}}
\newcommand{\rnote}[1]{\footnote{{\green{\bf [[Rocco: {#1}\bf ]] }}}}
\begin{document}

%

\title{Learning Poisson Binomial Distributions
}

\author{Constantinos Daskalakis\thanks{Research supported by a Sloan Foundation Fellowship, a Microsoft Research Faculty Fellowship, and NSF Award CCF-
0953960 (CAREER) and CCF-1101491.}
\\
MIT\\
{\tt costis@csail.mit.edu}\\
\and
Ilias Diakonikolas\thanks{Research supported by a Simons Foundation Postdoctoral Fellowship. Some of this work
was done while at Columbia University, supported by NSF grant CCF-0728736, and by an Alexander S. Onassis Foundation
Fellowship.}\\
University of Edinburgh\\
{\tt ilias.d@ed.ac.uk}\\
\and
Rocco A. Servedio
\thanks{Supported by NSF grants CNS-0716245, CCF-1115703, and
CCF-1319788 and by DARPA award HR0011-08-1-0069.}\\
Columbia University\\
{\tt rocco@cs.columbia.edu}}

\setcounter{page}{0}

\maketitle

\thispagestyle{empty}

\begin{abstract}
We consider a basic problem in unsupervised learning:
learning an unknown \emph{Poisson Binomial Distribution}.
A Poisson Binomial Distribution (PBD) over $\{0,1,\dots,n\}$
is the distribution
of a sum
of $n$ independent Bernoulli
random variables which may have arbitrary, potentially non-equal,
expectations.  These distributions were first studied by S. Poisson in 1837 \cite{Poisson:37} and are a natural
$n$-parameter generalization of the familiar Binomial Distribution.
Surprisingly, prior to our work this basic learning problem
was poorly understood, and known results for it were far from
optimal.

We essentially settle the complexity of the learning problem for this
basic class of distributions.
As our first main result we give a highly efficient algorithm which learns to $\eps$-accuracy (with respect to the total variation distance) using $\tilde{O}(1/\eps^3)$ samples \emph{independent of $n$}.  The running time of the algorithm is \emph{quasilinear} in the
size of its input data, i.e., $\tilde{O}(\log(n)/\eps^3)$ bit-operations.\footnote{We
write $\tilde{O}(\cdot)$ to hide factors which are polylogarithmic
in the argument to $\tilde{O}(\cdot)$;
thus, for example, $\tilde{O}(a \log b)$ denotes a quantity
which is {$O(a \log b \cdot \log^c(a \log b))$} for some
absolute constant $c$.}
(Observe that each draw from the distribution is a $\log(n)$-bit string.)
{Our second main result is a {\em proper} learning algorithm that learns to $\eps$-accuracy using $\tilde{O}(1/\eps^2)$ samples, and runs in time
$(1/\eps)^{\poly (\log (1/\eps))} \cdot \log n$.}
 This sample complexity is nearly optimal,
since any algorithm {for this problem} must use $\Omega(1/\eps^2)$ samples.
We also give positive and negative results for some extensions of this  learning problem to weighted sums of independent Bernoulli random variables.

\end{abstract}

\section{Introduction}

We begin by considering a somewhat fanciful scenario:  You are the manager of an independent weekly newspaper in a
city of $n$ people.  Each week the $i$-th inhabitant of the city independently picks up a copy of your paper with probability $p_i$.  Of course you do not know the values $p_1,\dots,p_n$; each week you only see the total number of papers that have been picked up.  For many reasons (advertising, production, revenue analysis, etc.) you would like to have a detailed ``snapshot'' of the probability distribution (pdf) describing how many readers you have each week.
{\emph{Is there an efficient algorithm to construct a high-accuracy approximation of the pdf from a number of observations that is {\em independent} of the population $n$?} We show that the answer is ``yes.''}

A \emph{Poisson Binomial Distribution} of order $n$ is
the distribution of a sum \[X=\sum_{i=1}^n X_i,\]
where $X_1,\dots,X_n$ are independent Bernoulli (0/1)
random variables. The expectations $(\E[X_i]=p_i)_i$
need not  all be the same, and thus these distributions
generalize the Binomial distribution ${\rm Bin}(n,p)$ and,
indeed, comprise a much
richer class of distributions. (See Section~\ref{sec:relatedwork} below.)
It is believed that Poisson \cite{Poisson:37} was the first to consider
this extension of the Binomial distribution\footnote{We thank Yuval Peres and Sam Watson for this information~\cite{PW11personal}.} and the distribution
is sometimes referred to as ``Poisson's Binomial Distribution'' in his
honor; we shall simply call these distributions PBDs.

PBDs are one of the most basic classes of discrete distributions;
indeed, they are arguably the simplest $n$-parameter
probability distribution that has some nontrivial structure.
As such they have been intensely studied in probability and statistics
(see Section~\ref{sec:relatedwork}) and arise in many settings; for
example, we note here that tail bounds on PBDs form an important special
case of Chernoff/Hoeffding bounds \cite{Chernoff:52,Hoeffding:63,DP09}.
In application domains, PBDs have many uses in research areas
such as survey sampling, case-control studies, and survival analysis, see e.g., \cite{ChenLiu:97} for a survey of the many uses of these distributions in applications.
Given the simplicity and ubiquity of these
distributions, it is quite surprising 
that the problem of \emph{density estimation} for PBDs (i.e., learning
an unknown PBD from independent samples) is not well understood
in the statistics or learning theory literature.
{\em This is the problem we consider, and essentially settle, in this paper.}

We work in a natural PAC-style model of learning an unknown discrete probability
distribution which is essentially the model of \cite{KMR+:94}.
In this learning framework for our problem, the learner is provided with the value of $n$ and with independent samples drawn from an unknown PBD $X$.  Using these samples, the learner must with probability at least $1 - \delta$
output a hypothesis distribution $\hat{X}$ such that
the total variation distance $\dtv(X,\hat{X})$ is at most $\eps$, where $\eps,\delta > 0$ are accuracy and confidence parameters that are provided to the learner.\footnote{\cite{KMR+:94} used the Kullback-Leibler divergence as their distance measure but we find it more natural to use variation distance.}
A \emph{proper} learning algorithm in this framework outputs a distribution that is itself
a Poisson Binomial Distribution, i.e., a vector $\hat{p}=(\hat{p}_1,\dots,\hat{p}_n)$  which describes the hypothesis PBD  $\hat{X}=\sum_{i=1}^n \hat{X}_i$ where $\E[\hat{X}_i]=\hat{p}_i$.

\subsection{Our results.}

{Our main result is an efficient algorithm for learning PBDs from
$\tilde{O}(1/\eps^2)$ many samples independent of $[n].$
Since PBDs are an $n$-parameter family of distributions over the domain $[n]$,
we view such a tight bound as a surprising result.} We prove:

\begin{theorem}   [\bf Main Theorem] \label{thm:main}
Let $X = \sum_{i=1}^n X_i$ be an unknown PBD.
\begin{enumerate}
  \item {\bf [Learning PBDs from constantly many samples]} There is an algorithm with the following properties:
given $n, \eps,\delta$
and access to independent draws from $X$, the algorithm uses
$$\tilde{O}\left( (1/\eps^3)  \cdot \log(1/\delta) \right)$$ samples from $X$,
performs $$\tilde{O} \left( (1/\eps^3) \cdot \log n  \cdot \log^2 {1 \over \delta} \right)$$ bit operations,
and
with probability at least $1-\delta$ outputs a (succinct description of a) distribution $\hat{X}$ over $[n]$
which is such that $\dtv(\hat{X},X) \leq \eps.$

  \item {\bf [Properly learning PBDs from constantly many samples]} There is an algorithm with the following properties:
given $n,\eps,\delta$
and access to independent draws from $X$, the algorithm uses
$$\tilde{O}(1/\eps^{{2}})\cdot \log(1/\delta) $$
samples from $X$, performs
$$(1/\eps)^{O\left( \log^2(1/\eps) \right)} \cdot \tilde{O} \left( \log n  \cdot \log {1 \over \delta} \right)$$
bit operations, and with probability at least $1-\delta$ outputs
a (succinct description of a) vector $\hat{p}=(\hat{p}_1,\dots,\hat{p}_n)$ defining a PBD $\hat{X}$
such that $\dtv(\hat{X},X) \leq \eps.$
\end{enumerate}
\end{theorem}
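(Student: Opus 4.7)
The plan rests on a structural dichotomy for PBDs: any $X = \sum_{i=1}^n X_i$ is $\eps$-close in total variation distance to a distribution of one of two simple forms. Either (i) $X$ is essentially supported on an interval of $O(1/\eps)$ consecutive integers (informally, all but $O(1/\eps)$ of the $p_i$ are extremely close to $0$ or $1$), or (ii) $X$ is $\eps$-close to a \emph{translated binomial} $s + B(\ell, q)$ for appropriate integer parameters $\ell, s$ and a rational $q$. Intuitively case~(i) is what happens when $\Var[X]$ is small, while case~(ii) follows from a quantitative Berry--Esseen / local-CLT style argument whenever $\Var[X]$ is large enough, where matching the first two moments of $X$ by a binomial gives variation distance $O(1/\sigma)$.

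Given this dichotomy, the learning algorithm runs two subroutines in parallel, one per case, since it does not know a priori which applies. The ``sparse'' subroutine draws $\tilde O(1/\eps^3)$ samples, identifies a window of $O(1/\eps)$ candidate support points (e.g. those appearing with nonnegligible empirical frequency), and outputs the empirical distribution on that window; standard Chernoff/VC arguments show $\tilde O(1/\eps^3)$ samples suffice to learn any distribution on $O(1/\eps)$ points to variation distance $\eps$. The ``translated binomial'' subroutine estimates $\mu = \E[X]$ and $\sigma^2 = \Var[X]$ from the same samples, then outputs the unique $s + B(\ell, q)$ whose first two moments are (approximately) $\mu$ and $\sigma^2$. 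The bit-complexity is dominated by manipulating the $O(\log n)$-bit numbers that occur in both branches, giving the $\tilde O(\log(n)/\eps^3)$ bound.

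Having produced two candidate hypotheses $H_1$ (sparse) and $H_2$ (translated binomial), we invoke a standard hypothesis-selection (``Scheffé tournament'') procedure that, using an additional $O(\log(1/\delta)/\eps^2)$ samples, outputs with probability $1-\delta$ whichever of $H_1, H_2$ has distance from $X$ within a constant factor of $\min(\dtv(H_1,X),\dtv(H_2,X))$. Combined with the structural theorem this yields part~1. I expect the main technical obstacle here to be the quantitative CLT step: one needs the approximation by a translated binomial to be accurate enough, uniformly in $n$, that the overall dependence on $n$ drops out of the sample complexity.

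For part~2 we must return an actual PBD. The translated-binomial hypothesis is already a PBD (pad $B(\ell,q)$ with $s$ deterministic $1$'s), so only the sparse case needs work. The approach is to build an explicit $\eps$-cover $\mathcal{C}$ of the relevant family of ``sparse'' PBDs, enumerate it, and feed the candidates into an enlarged tournament together with $H_2$. The key lemma to prove is that sparse PBDs are determined up to variation distance $\eps$ by an $O(\log(1/\eps))$-length ``signature'' of their $p_i$'s (roughly, the number of $p_i$'s in each of $O(\log(1/\eps))$ geometrically spaced buckets), which makes $|\mathcal{C}| = (1/\eps)^{O(\log^2(1/\eps))} \cdot n^{O(\log(1/\eps))}$. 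The tournament then contributes only a $\log |\mathcal{C}|$ factor in samples but a $|\mathcal{C}|$ factor in time, matching the stated $(1/\eps)^{O(\log^2(1/\eps))} \cdot \tilde O(\log n)$ bound. I expect this covering/signature argument to be the most delicate ingredient of the whole proof.
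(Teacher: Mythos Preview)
Your high-level plan for Part~1 matches the paper's: a structural dichotomy (sparse vs.\ close-to-binomial), two parallel learners, and a Scheff\'e-style hypothesis test to pick the winner. The details differ in ways worth noting. The paper's dichotomy comes from an explicit cover theorem in which the ``sparse'' case has effective support $O(1/\eps^3)$ (not $O(1/\eps)$), and is handled by Birg\'e's unimodal learner, which needs only $O(\log(\text{support})/\eps^3)=\tilde O(1/\eps^3)$ samples; the ``heavy'' case is handled via a \emph{translated Poisson} $TP(\hat\mu,\hat\sigma^2)$ rather than a translated binomial. Your variant (variance-based dichotomy, empirical learning on a $\tilde O(1/\eps)$ window, translated binomial) is plausible, but two steps are underspecified: the $O(1/\eps)$-window claim needs a concentration argument (PBDs are log-concave, so subexponential tails give width $O(\sigma\log(1/\eps))$ when $\sigma=O(1/\eps)$, which is $\tilde O(1/\eps)$, not $O(1/\eps)$); and ``the unique $s+B(\ell,q)$ matching two moments'' is not unique---there are three parameters and two constraints---so you must say how you pick one and why any such choice is $O(\eps)$-close (the paper routes both $X$ and the candidate through $TP(\mu,\sigma^2)$ to do this).

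Part~2, however, has a genuine gap. Your cover has size $|\mathcal C|=(1/\eps)^{O(\log^2(1/\eps))}\cdot n^{O(\log(1/\eps))}$, and you then run a tournament over $\mathcal C\cup\{H_2\}$. A tournament over $N$ candidates needs $\Theta(\log N/\eps^2)$ samples and $\mathrm{poly}(N)$ time. With your $|\mathcal C|$ this gives running time $n^{O(\log(1/\eps))}$, not $(1/\eps)^{O(\log^2(1/\eps))}\cdot\tilde O(\log n)$, and it injects a $\log n$ factor into the sample complexity, contradicting the theorem. The paper avoids both problems by \emph{not} running a tournament over the cover at all: it first runs the non-proper learner of Part~1 to obtain $H_S$ (whose support is a known interval of length $O(1/\eps^3)$), and then, with no further samples, searches only those sparse-form PBDs in the cover whose support contains the support of $H_S$. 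This localization kills the $n$-dependence: once the support interval is fixed, only $(1/\eps)^{O(\log^2(1/\eps))}$ sparse-form PBDs remain, and one simply picks any of them within $O(\eps)$ of $H_S$ in total variation. Separately, your remark that ``the translated-binomial hypothesis is already a PBD (pad with $s$ deterministic $1$'s)'' glosses over the constraint $s+\ell\le n$; with estimated moments $\hat\mu,\hat\sigma^2$ this can fail, and the paper's \texttt{Locate-Binomial} routine has explicit tweaking steps to force $\hat n\le n$ while keeping the total-variation error $O(\eps)$.
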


We note that, since every sample drawn from $X$ is a $\log(n)$-bit string,
{for constant $\delta$}
the number of bit-operations performed by our first algorithm is
\emph{quasilinear} in the length of its input.  Moreover, the sample complexity of
both algorithms is close to optimal, since
$\Omega(1/\eps^2)$ samples are required even to distinguish the (simpler) Binomial
distributions ${\rm Bin}(n,1/2)$ and ${\rm Bin}(n,1/2 + \eps/\sqrt{n})$,
which have total variation distance $\Omega(\eps).$ {Indeed, in view of this observation, our second algorithm is essentially sample-optimal.}

Motivated by these strong learning results for PBDs, we also consider learning a more general class of distributions, namely distributions of the form
$X = \sum_{i=1}^n w_i X_i$ which are \emph{weighted} sums of independent
Bernoulli random variables.  We give an algorithm which uses $O(\log n)$
samples and runs in $\poly(n)$ time if there are only constantly many
different weights in the sum:

\begin{theorem}
[\bf Learning sums of weighted
independent Bernoulli random variables]
\label{thm:linearupper}Let $X = \sum_{i=1}^n a_i X_i$ be a weighted sum of unknown independent Bernoullis such that there are at most $k$ different values among
$a_1,\dots,a_n.$  Then there is an algorithm with the following properties:  given $n,\eps,\delta,$  $a_1,\dots,a_n$ and access to independent draws from $X$, it uses
$$\widetilde{O}(k/\eps^2) \cdot \log(n)  \cdot \log(1/\delta)$$
samples from $X$,
runs in time
$$\poly \left( n^k \cdot \eps^{-k\log^2(1/\eps)} \right) \cdot \log(1/\delta),$$
and with probability at least $1-\delta$ outputs a hypothesis vector $\hat{p} \in [0,1]^n$ defining independent Bernoulli random variables $\hat{X}_i$ with {$\E[\hat{X}_i]=\hat{p}_i$} such that $\dtv(\hat{X},X) \leq \eps,$ where $\hat{X}=\sum_{i=1}^n a_i \hat{X}_i$.
\end{theorem}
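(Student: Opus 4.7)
The plan is to reduce to learning $k$ ordinary PBDs by grouping the variables according to their weight, and then running a hypothesis-selection tournament over a product cover of the resulting PBDs. Let $b_1,\dots,b_k$ denote the distinct values appearing among $a_1,\dots,a_n$, partition $[n]$ into $I_j = \{i : a_i = b_j\}$ of sizes $n_j$, and set $S_j := \sum_{i \in I_j} X_i$, so that each $S_j$ is itself a PBD and $X = \sum_{j=1}^k b_j S_j$. A standard independent coupling argument shows that if $\dtv(\hat{S}_j, S_j) \le \eps/k$ for every $j$, then $\hat{X} := \sum_j b_j \hat{S}_j$ satisfies $\dtv(\hat{X}, X) \le \eps$, so it would suffice to produce such $\hat{S}_j$'s.

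The obstruction is that samples from $X$ do not decompose into samples from the individual $S_j$'s, so I cannot simply invoke Theorem~\ref{thm:main} $k$ times. Instead I would use the cover that underlies the \emph{proper} learning algorithm of part~2 of Theorem~\ref{thm:main}: its hypotheses lie in an explicitly constructible $(\eps/k)$-cover $\mathcal{C}_{n_j}$ of PBDs over $n_j$ variables, of size $|\mathcal{C}_{n_j}| \le \poly(n_j) \cdot (k/\eps)^{O(\log^2(k/\eps))}$. I would form the combined candidate set $\mathcal{H} = \{ \sum_j b_j \hat{S}_j : \hat{S}_j \in \mathcal{C}_{n_j} \}$, of size $N := |\mathcal{H}| \le n^{O(k)} \cdot (k/\eps)^{O(k \log^2(k/\eps))}$; by construction some element of $\mathcal{H}$ is $\eps$-close to $X$.

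To pick a good hypothesis from $\mathcal{H}$, I would run the standard Devroye--Lugosi Scheff\'e-style tournament: given $N$ explicitly described distributions one of which is $\eps$-close to $X$, it uses $O(\log(N/\delta)/\eps^2)$ i.i.d.\ samples from $X$ and returns an $O(\eps)$-close hypothesis with probability at least $1-\delta$. Plugging in $\log N = O(k \log n \cdot \log^3(k/\eps))$ yields the claimed sample bound $k \log(n) \cdot \widetilde{O}(1/\eps^2) \cdot \log(1/\delta)$. The running time is dominated by the $O(N^2)$ pairwise Scheff\'e comparisons, each requiring the probability mass of a set under two candidate distributions; since each candidate is a sum of $k$ independent weighted PBDs with total support size $O(n^k)$, its pmf is computable in time $\poly(n^k)$ by convolution, giving the overall bound $\poly(n^k \cdot \eps^{-k \log^2(1/\eps)}) \cdot \log(1/\delta)$.

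The main obstacle is extracting the explicit $(\eps/k)$-cover from the proof of Theorem~\ref{thm:main}(2) and verifying that each cover element admits a succinct description enabling efficient pmf computations (needed both for constructing the $O(n^k)$-sized pmf of each $\sum_j b_j \hat{S}_j$ and for evaluating Scheff\'e sets). A minor accuracy adjustment — asking for $(\eps/k)$-closeness per group rather than $\eps$ — inflates the cover by a factor of $k^{O(k \log^2(k/\eps))}$ but is harmless since $k$ already appears in the stated complexities; beyond this, the argument is a routine coupling-plus-tournament reduction.
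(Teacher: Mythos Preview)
Your proposal is correct and follows essentially the same route as the paper: group by distinct weights to write $X=\sum_j b_j S_j$, take the product of $(\eps/k)$-covers of PBDs (the paper invokes the cover theorem directly rather than going through Theorem~\ref{thm:main}(2), but this is the same object), and select a hypothesis from the resulting $\eps$-cover via a Scheff\'e-style tournament, with running time governed by pmf computations over the $O((n/k)^k)$-sized support. The paper's only additional detail is spelling out that $\Pr_H[S_j=m_j]$ is computed by dynamic programming in $O(n_j^2)$ time, which is subsumed by your ``$\poly(n^k)$ by convolution'' remark.
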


\ignore{Note that setting all $a_i$'s to 1
Theorem~\ref{thm:linearupper} gives a weaker result than
Theorem~\ref{thm:main} in terms of running time and sample complexity.}
To complement Theorem~\ref{thm:linearupper}, we also show that if there are many distinct weights in
the sum, then even for weights with a very simple structure any learning algorithm must use many samples:

\begin{theorem} [\bf Sample complexity lower bound for learning sums of
weighted independent Bernoullis] \label{thm:linearlower}
Let $X=\sum_{i=1}^n i \cdot X_i$ be a weighted sum of unknown independent Bernoullis
(where the $i$-th weight is simply $i$). Let $L$ be any learning algorithm which, given $n$
and access to independent draws from $X$, outputs a hypothesis distribution $\hat{X}$ such that
$\dtv(\hat{X},X) \leq 1/25$ with probability at least
$e^{-o(n)}.$ Then $L$ must use $\Omega(n)$ samples.
\end{theorem}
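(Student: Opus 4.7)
The plan is an information-theoretic packing argument: construct an exponentially large family of valid weighted Bernoulli sums that are pairwise $\Omega(1)$-apart in total variation distance, and argue via counting that $o(n)$ samples cannot distinguish among them except with exponentially small probability.

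For the packing, for each $n/2$-subset $S \subseteq [n]$, define $X^S := \sum_{i \in S} i \cdot X_i$ with $X_i \sim \mathrm{Ber}(c/n)$ for $i \in S$ and $X_i \equiv 0$ for $i \notin S$, where $c > 0$ is a small absolute constant. A direct computation shows $\P[X^S = i] \approx (c/n) e^{-c/2}$ for $i \in S$, while $\P[X^S = i]$ is much smaller for $i \notin S$; hence $\dtv(X^S, X^{S'}) \geq \Omega(|S \triangle S'|/n)$. Using the Gilbert--Varshamov bound (or an explicit constant-weight code), one extracts a sub-family $\mathcal{F} \subseteq \binom{[n]}{n/2}$ of size $|\mathcal{F}| = 2^{\Omega(n)}$ whose members are pairwise at Hamming distance $\Omega(n)$. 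The corresponding family $\{X^S\}_{S \in \mathcal{F}}$ is then pairwise at TV distance at least $2/25$, so by the triangle inequality any hypothesis within $1/25$-TV of $X^S$ is close to at most one $X^{S'} \in \mathcal{F}$; a learner succeeding with probability $\geq e^{-o(n)}$ must therefore identify the true $S \in \mathcal{F}$ with at least this probability.

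For the sample-complexity lower bound, impose the uniform prior on $\mathcal{F}$ and, by iid-ness, assume the algorithm depends only on the multiset of $m$ samples. Samples from $X^S$ carry significant information about $S$ essentially only when the number of ``firing'' indices $|\{i \in S : X_i = 1\}|$ equals exactly one (in which case the sample is a uniformly random element of $S$); this happens with constant probability, while multi-element firings are rare, since the number of firings is $\mathrm{Binomial}(n/2, c/n) \approx \mathrm{Poisson}(c/2)$-distributed and thus light-tailed. Consequently the effective sample alphabet has size $O(n)$ except on an event of exponentially small probability, so the number of sample multisets of size $m$ is at most $\binom{O(n)+m}{m}$. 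For a sufficiently small constant $c' > 0$ and $m \leq c' n$, this count is at most $2^{o(n)}$; comparing against $|\mathcal{F}| = 2^{\Omega(n)}$ bounds the success probability by $2^{-\Omega(n)} = e^{-\Omega(n)}$, contradicting the assumed $e^{-o(n)}$ success rate and forcing $m = \Omega(n)$.

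The main technical obstacle is twofold: (i) rigorously lower-bounding $\dtv(X^S, X^{S'}) \geq \Omega(|S \triangle S'|/n)$, which requires careful treatment of multi-firing events to confirm that the single-firing contribution dominates; and (ii) carrying out the multiset-counting step so that the effective $O(n)$-size alphabet (rather than the raw $O(n^2)$) is used, giving the sharp $\Omega(n)$ bound instead of the weaker $\Omega(n/\log n)$ that falls out of naive counting. The method of types, applied to multisets supported in the effective alphabet and combined with Poisson tail bounds on the number of firings, seems the cleanest route to (ii).
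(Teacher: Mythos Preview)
Your multiset-counting step has a real gap. The claim that the effective sample alphabet has size $O(n)$ except with exponentially small probability is false: with $m=c'n$ samples and per-sample firing count $\mathrm{Bin}(n/2,c/n)\approx\mathrm{Poisson}(c/2)$, each sample has $\geq 2$ firings with \emph{constant} probability $\Theta(c^2)$, so among $c'n$ samples there are $\Theta(n)$ multi-firing samples almost surely, not exponentially few. Those samples take values in a range of size $\Theta(n^2)$; even if you cap the number of firings at $K$, forcing \emph{every} sample below $K$ firings with probability $1-e^{-\Omega(n)}$ requires $K=\Omega(n/\log n)$. Either way the true multiset count is $2^{\Theta(n\log n)}$, swamping $|\mathcal{F}|=2^{\Theta(n)}$, and the argument collapses to the $\Omega(n/\log n)$ bound you were trying to avoid. (A separate slip: even on an $O(n)$ alphabet the count $\binom{O(n)+c'n}{c'n}$ is $2^{\Theta(n)}$, not $2^{o(n)}$; that one is harmless since you only need it below $|\mathcal{F}|$, but the alphabet issue is not.)

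The paper's proof is structurally different and avoids counting sample values entirely. It uses no packing: $S$ is a uniformly random $(n/100)$-subset of the \emph{upper half} $\{n/2+1,\dots,n\}$ with $p_i=100/n$ on $S$, so any two elements of $S$ sum to more than $n$ and hence $X(j)>35/n$ exactly for $j\in S$. It then gives the learner extra information---the full firing vector $(X_1,\dots,X_n)$ on each draw---and argues that after $n/2000$ draws the set $T\subseteq S$ of indices that ever fired satisfies $|T|\leq n/1000$ with probability $1-e^{-\Omega(n)}$; conditioned on $T$, the remaining $\geq 9n/1000$ elements of $S$ are a uniformly random completion. Since any hypothesis $H$ places mass $\geq 30/n$ on at most $n/30$ points, with probability $1-e^{-\Omega(n)}$ at least $8n/1000$ elements of $S$ miss that set, each contributing $\geq 5/n$ to $\dtv(X,H)$, yielding $\dtv(X,H)\geq 1/25$ directly. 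If you want to rescue your packing route, replace the multiset count by a mutual-information bound through the augmented firing vector (data processing gives $I(S;\text{one sample})\leq \E[\#\text{firings}]\cdot\log 2=O(1)$ rather than $O(\log n)$), though getting the $e^{-o(n)}$ success threshold still needs more than vanilla Fano and ends up close to the paper's direct argument anyway.
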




\subsection{Related work.} \label{sec:relatedwork}
At a high level, there has been a recent surge of interest in
the theoretical computer science  community on
fundamental algorithmic problems involving basic types of
probability distributions,
see e.g., \cite{KMV:10, MoitraValiant:10, BelkinSinha:10,
ValiantValiant:11} and other recent papers; our work may be considered as an extension of this theme.
More specifically, there is a broad literature in probability theory studying various properties of PBDs; see \cite{Wang93} for an accessible introduction to some of this work.  In particular,
{many results  study approximations to the Poisson Binomial distribution via simpler distributions. In a well-known result, Le Cam \cite{LeCam:60} shows that for any PBD $X=\sum_{i=1}^n X_i$
with $\E[X_i]=p_i$, it holds that
$$
\dtv \left( X,\Poi \big( \littlesum_{i=1}^n p_i \big) \right) \leq 2 \littlesum_{i=1}^n p_i^2,
$$
where $\Poi(\lambda)$ is  the Poisson distribution with parameter $\lambda$.  Subsequently many other proofs of this result and similar ones were given using a range of different techniques;
\cite{HodgesLeCam:60,Chen:74,DP:86,BHJ:92} is a sampling of work along these lines, and Steele \cite{Steele:94} gives an extensive list of relevant references.    Much work has also been done on approximating PBDs by normal distributions (see e.g., \cite{berry,esseen,Mikhailov:93,Volkova:95}) and by Binomial distributions (see e.g., \cite{Ehm91,Soon:96,Roos:00}). These results provide structural information about PBDs that can be well-approximated via simpler distributions, but fall short of our goal of obtaining approximations of an unknown PBD up to {\em arbitrary accuracy}. Indeed, the approximations obtained in the probability literature (such as the Poisson, Normal and Binomial approximations)  typically depend only on the first few moments of the target PBD.  This is attractive from a learning perspective because it is possible to efficiently estimate  such moments from random samples, but higher moments are crucial for arbitrary approximation~\cite{Roos:00}.

Taking a different perspective, it is easy to show (see Section~2 of \cite{KeilsonGerber:71}) that every PBD is a unimodal distribution over $[n]$.  (Recall that a distribution $p$ over $[n]$ is unimodal if there is a value $\ell \in \{0,\dots,n\}$
such that $p(i) \leq p(i+1)$ for $i \leq \ell$ and $p(i) \geq p(i+1)$ for $i > \ell.$)  The learnability of general unimodal distributions over $[n]$ is well understood:  Birg\'e
\cite{Birge:87,Birge:97} has given a computationally efficient
algorithm that can learn any unimodal distribution over $[n]$ to variation
distance $\eps$ from $O(\log(n)/\eps^3)$ samples, and has shown that any
algorithm must use $\Omega(\log(n)/\eps^3)$ samples. (The \cite{Birge:87,Birge:97} upper and lower
bounds are stated for continuous unimodal distributions,
but the arguments are easily adapted to the discrete case.) Our main result, Theorem~\ref{thm:main}, shows that the additional PBD
assumption can be leveraged to obtain sample complexity \emph{independent
of $n$} with a computationally highly efficient algorithm.

So, how might one leverage the structure of PBDs to remove $n$ from the sample complexity? A first observation is that a PBD assigns $1-\epsilon$ of its mass to $O_\eps(\sqrt{n})$ points. So
one could draw samples to (approximately) identify these points and then try to estimate the probability assigned to each such point,
but clearly such an approach, if followed na\"ively, would give $\poly(n)$ sample complexity. Alternatively, one  could run Birg\'e's algorithm on the restricted support of size $O_\eps( \sqrt{n})$, but that will not improve the asymptotic sample complexity. A different approach would be to construct a small $\epsilon$-cover (under the total variation distance) of the space of all PBDs on $n$ variables. Indeed, if such a cover has size $N$, it can be shown (see Lemma~\ref{lem:log-cover-size}
in Section~\ref{sec:positive}, or Chapter 7 of \cite{DL:01})) that a target PBD can be
learned from
$O(\log (N)/\epsilon^2)$ samples. Still it is easy to argue that any cover needs to have size $\Omega(n)$, so this approach too gives a $\log(n)$ dependence in the sample complexity.

Our approach, which removes $n$ completely from the sample complexity, requires a refined understanding of the structure of the set of all PBDs on $n$ variables, in fact one that is more refined than the understanding provided by the aforementioned results (approximating a PBD by a Poisson, Normal, or Binomial distribution). We give an outline of the approach in the next section.}

\subsection{Our approach.}

The starting point of our algorithm for learning PBDs is a theorem of~\cite{DP:oblivious11,Daskalakis:anonymous08full} that gives detailed information about the structure of a small $\epsilon$-cover (under the total variation distance) of the space of all PBDs on $n$ variables (see Theorem~\ref{thm: sparse cover theorem}). Roughly speaking, this result says that every PBD is either close to a PBD whose support is sparse, or is close to a translated ``heavy'' Binomial distribution.  Our learning algorithm exploits this structure of the cover\ignore{ to close in on the information that is absolutely necessary to approximate an unknown PBD. In particular,}; it has two subroutines corresponding to these two different types of distributions that the cover {contains}. First, assuming that the target PBD is close to a sparsely supported distribution, it runs Birg\'e's unimodal distribution learner over a carefully selected subinterval of $[n]$ to construct a hypothesis $H_S$; the (purported) sparsity of the distribution makes it possible for this algorithm to use $\tilde{O}(1/\eps^3)$ samples independent of $n$. Then, assuming that the target PBD is close to a translated ``heavy'' Binomial distribution, the algorithm constructs a hypothesis Translated Poisson Distribution $H_P$ \cite{Rollin:07} whose mean and variance match the estimated mean and variance of the target PBD; we show that $H_P$ is close to the target PBD if the target PBD is not close to any sparse distribution in the cover.  At this point the algorithm has two hypothesis distributions, $H_S$ and $H_P$, one of which should be good; it remains to select one as the final output hypothesis.  This is achieved using a form of ``hypothesis testing'' for probability distributions.

The above sketch captures the main ingredients of Part (1) of Theorem~\ref{thm:main}, but additional work needs to be done to get the proper learning algorithm of Part (2).
{For the non-sparse case,}
first note {that} the Translated Poisson hypothesis $H_P$
is not a PBD.  Via a sequence of transformations we are able to show that the
Translated Poisson hypothesis $H_P$ can be converted to a Binomial
distribution $\mathrm{Bin}(n',p)$
for some $n' \leq n.$  {To handle the sparse case, we use an alternate
learning approach: instead of using Birg\'e's unimodal algorithm (which
would incur a sample complexity of $\Omega(1/\eps^3)$), we {first}
show that, in this case, there exists an
{efficiently constructible}
$O(\eps)$-cover of size
$(1/\eps)^{O(\log^2(1/\eps))},$ {and then apply a general learning
result that we now describe.}}

The general learning result that we use
{(Lemma~\ref{lem:log-cover-size})}
is the following:
We show that for any class ${\cal S}$ of target distributions, if ${\cal S}$
has an $\eps$-cover of size $N$ then
there is a generic algorithm for learning an unknown distribution
from ${\cal S}$ to accuracy ${O(\eps)}$
that uses $O((\log N)/\eps^2)$ samples.
Our approach is rather similar to the algorithm of
\cite{DL:01} for choosing a density estimate (but different in some details);
it works by carrying out a tournament that
matches every pair of distributions in the cover against each
other. Our analysis shows that with high probability some
$\eps$-accurate distribution in the cover will survive the tournament
undefeated, and that any undefeated distribution will with high probability
be $O(\eps)$-accurate.

{Applying this general result to the $O(\eps)$-cover of size
$(1/\eps)^{O(\log^2(1/\eps))}$  described above,
we obtain a PBD that is $O(\eps)$-close to the target (this accounts for the
increased running time in Part (2) versus Part (1)).} We stress that for
both the non-proper and proper learning algorithms
sketched above, many technical subtleties and challenges arise in implementing the high-level plan given above, requiring a careful and detailed analysis.

We prove Theorem~\ref{thm:linearupper} using
the general approach of Lemma~\ref{lem:log-cover-size}
specialized to weighted sums of independent Bernoullis with constantly many
distinct weights.
We show how the tournament can be implemented efficiently for the
class ${\cal S}$ of weighted sums of independent Bernoullis with constantly
many distinct weights, {and thus obtain Theorem~\ref{thm:linearupper}.}
Finally, the lower bound of Theorem~\ref{thm:linearlower} is proved by a
direct information-theoretic argument.

\subsection{Preliminaries.} \label{sec:prelim}

\paragraph{Distributions.} For a distribution $X$ supported on $[n]=\{0,1,\dots,n\}$ we write $X(i)$ to denote
the value $\Pr[X=i]$ of the probability density function (pdf) at point $i$, and $X(\leq i)$ to denote the value $\Pr[X \leq i]$ of the cumulative density function (cdf) at point $i$.
For $S \subseteq [n]$, we write $X(S)$ to denote $\sum_{i \in S}X(i)$ and $X_S$ to denote
the conditional distribution of $X$ restricted to $S.$ Sometimes we write $X(I)$ and $X_I$ for a subset $I \subseteq [0,n]$, meaning $X(I \cap [n])$ and $X_{I \cap [n]}$ respectively.

\paragraph{Total Variation Distance.}Recall that the {\em total variation distance} between two distributions
$X$ and $Y$ over a finite domain $D$ is
\begin{eqnarray*}
\dtv\left(X,Y \right) &:=& (1/2)\cdot \littlesum_{\alpha
\in D}{|X(\alpha)-Y(\alpha)|} = \max_{S \subseteq D}[X(S)-Y(S)].
\end{eqnarray*}
Similarly, if $X$ and
$Y$ are two random variables ranging over a finite set, their total
variation distance $\dtv(X,Y)$ is defined as the total variation
distance between their distributions.
For convenience, we will often blur the distinction between a random variable and its distribution.\ignore{Another useful notion of distance between
distributions/random variables is the {\em Kolmogorov
distance}.\rnote{Do we use this now in this paper?} For two distributions $X$ and $Y$ supported
on $\R$, their Kolmogorov distance is
$\dk\left(X,Y \right):= \sup_{x \in \R} \left|
X(\leq x)-Y(\leq x) \right|.$  }\ignore{Similarly,
if $X$ and $Y$ are two random variables ranging over a subset of $\R$,
their Kolmogorov distance, denoted $\dk(X,Y),$ is the Kolmogorov distance
between their distributions.}\ignore{ If two distributions $X$ and $Y$
are supported on a finite subset of $\R$ we obtain
immediately that $\dk\left(X,Y
\right) \le \dtv\left(X,Y \right).$\rnote{Should there be a 2 here -- I think not?  Do we need this
observation at all?}}

\paragraph{Covers.}Fix a finite domain $D$, and let ${\cal P}$ denote some set of
distributions over $D.$ Given $\delta > 0$, a subset ${\cal Q} \subseteq
{\cal P}$ is said to be a \emph{$\delta$-cover of ${\cal P}$} (w.r.t. the
total variation distance) if for every distribution $P$ in ${\cal P}$ there
exists some distribution $Q$ in ${\cal Q}$ such that
$\dtv(P,Q) \leq \delta.$
{We sometimes say that distributions $P,Q$ are
\emph{$\delta$-neighbors} if $\dtv(P,Q) \leq \delta$.} {If this holds, we also say that $P$ is $\delta$-close to $Q$ and vice versa.}

\paragraph{Poisson Binomial Distribution.}
A {\em Poisson binomial distribution of order $n \in \mathbb{N}$} is a sum $\sum_{i=1}^n X_i$ of $n$ mutually independent Bernoulli random variables $X_1,\ldots,X_n$. We denote the set of all Poisson binomial distributions of order $n$ by ${\cal S}_{n}$ and, if $n$ is clear from context, just ${\cal S}$.

A Poisson binomial distribution $D  \in {\cal S}_n$ can be represented uniquely as a vector $(p_i)_{i=1}^n$ satisfying $0\le p_1 \le p_2 \le \ldots \le p_n \le 1$. To go from $D \in {\cal S}_n$ to its corresponding vector, we find a collection $X_1,\ldots,X_n$ of mutually independent Bernoullis such that $\sum_{i=1}^n X_i$ is distributed according to $D$ and $\E[X_1]\le \ldots \le \E[X_n]$. (Such a collection exists by the definition of a Poisson binomial distribution.) Then we set $p_i = \E[X_i]$ for all $i$.  Lemma 1 of~\cite{DP:cover} shows that the resulting vector $(p_1,\ldots,p_n)$ is unique.

We denote by ${\rm PBD}(p_1,\ldots,p_n)$ the distribution of the sum $\sum_{i=1}^n X_i$ of mutually independent indicators  $X_1,\ldots,X_n$ with expectations $p_i=\E[X_i]$, for all $i$. Given the above discussion ${\rm PBD}(p_1,\ldots,p_n)$ is unique up to permutation of the $p_i$'s. We also sometimes write $\{X_i\}$ to denote the distribution of $\sum_{i=1}^n X_i.$ Note the difference between $\{X_i\}$, which refers to the distribution of $\sum_i X_i$, and $\{X_i\}_i$, which refers to the underlying collection of mutually independent Bernoulli random variables.

\paragraph{Translated Poisson Distribution.} We will make use of the translated Poisson distribution for approximating the Poisson Binomial distribution. We define the translated Poisson distribution, and state a known result on how well it approximates the Poisson Binomial distribution.
\begin{definition}[\cite{Rollin:07}]
We say that an integer random variable $Y$ is distributed according to the {\em translated Poisson distribution with parameters $\mu$ and $\sigma^2$}, denoted $TP(\mu, \sigma^2)$,
iff $Y$ can be written as
\[Y = \lfloor \mu-\sigma^2\rfloor + Z,\]
where $Z$ is a random variable distributed according to ${\rm Poisson}(\sigma^2+ \{\mu-\sigma^2\})$, where
$\{\mu-\sigma^2\}$ represents the fractional part of $\mu-\sigma^2$.
\end{definition}
The following lemma gives a useful bound on the variation distance between a Poisson Binomial
Distribution and a suitable translated Poisson distribution.  Note that if the variance of the Poisson
Binomial Distribution is large, then the lemma gives a strong bound.

\begin{lemma}[see (3.4) of \cite{Rollin:07}] \label{lem:translated Poisson approximation}
Let $J_1,\ldots,J_n$ be independent random indicators with $\E[J_i]=p_i$. Then
$$\dtv\left(\sum_{i=1}^nJ_i, TP(\mu, \sigma^2)\right) \le \frac{\sqrt{\sum_{i=1}^np_i^3(1-p_i)}+2}{\sum_{i=1}^np_i(1-p_i)},$$
where $\mu=\sum_{i=1}^np_i$ and $\sigma^2 = \sum_{i=1}^np_i(1-p_i)$.
\end{lemma}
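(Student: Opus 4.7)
The plan is to prove the bound by adapting Stein's method for Poisson approximation to the translated setting, in the style of Chen--Stein. The key observation is that $TP(\mu,\sigma^2)$ is tailored so that its mean ($\mu$) and variance ($\sigma^2$) exactly match those of $S:=\sum_{i=1}^n J_i$; the only discrepancy is that $\mu-\sigma^2$ may not be an integer, which is what produces the additive $+2$ in the numerator.

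First I would set up the Stein operator for the translated Poisson. Writing $\alpha := \lfloor \mu-\sigma^2\rfloor$ and $\lambda := \sigma^2 + \{\mu-\sigma^2\}$, one has $Y-\alpha\sim\Poi(\lambda)$ when $Y\sim TP(\mu,\sigma^2)$, and so the operator
\[
(\mathcal A g)(k) \;:=\; \lambda\, g(k+1) - (k-\alpha)\,g(k)
\]
satisfies $\E[(\mathcal A g)(Y)]=0$ for bounded $g$. For a fixed test set $A\subseteq\Z$, let $g=g_A$ be the unique bounded solution to $(\mathcal A g)(k) = \ind\{k\in A\} - \Pr[Y\in A]$. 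The classical Poisson--Stein magic-factor bounds (Barbour--Holst--Janson) then give $\|\Delta g_A\|_\infty = O(1/\lambda)$ and $\|g_A\|_\infty = O(1/\sqrt\lambda)$, which is $O(1/\sigma^2)$ and $O(1/\sigma)$ respectively since $\lambda\asymp\sigma^2$.

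Next I would estimate $\E[(\mathcal A g_A)(S)]$, because
$\dtv(S,Y) = \sup_A \bigl|\E[(\mathcal A g_A)(S)]\bigr|$.
Using the independence of the $J_i$'s, introduce the leave-one-out sums $S^{(i)} := S - J_i$ so that $\E[J_i\,h(S)] = p_i\,\E[h(S^{(i)}+1)]$ for any $h$. Expanding
\[
\E[(\mathcal A g_A)(S)] \;=\; \lambda\,\E[g_A(S+1)] - \E[(S-\alpha)g_A(S)],
\]
and rewriting both terms via the leave-one-out identity and second differences of $g_A$, the first-order pieces cancel precisely because $\sum_i p_i(1-p_i)=\sigma^2$ and $\sum_i p_i^2 = \mu-\sigma^2$ (so they reproduce $\lambda$ and $\alpha$ up to the fractional part $\{\mu-\sigma^2\}$). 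What survives are second-order remainders of the form
\[
\sum_{i} p_i^2(1-p_i)\,\bigl[g_A(S^{(i)}+2) - 2g_A(S^{(i)}+1) + g_A(S^{(i)})\bigr]
\]
plus a boundary term of size $O(\|g_A\|_\infty)$ coming from the fractional part $\{\mu-\sigma^2\}\in[0,1)$; this boundary term contributes the $+2$ in the numerator via $\|g_A\|_\infty \lesssim 1/\sigma$.

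Finally, to upgrade $\sum p_i^2(1-p_i)$ to $\sqrt{\sum p_i^3(1-p_i)}$, I would apply Cauchy--Schwarz to the second-order remainder, pairing one factor of $\sqrt{p_i(1-p_i)}$ with the (random) second difference of $g_A$ and using $\|\Delta g_A\|_\infty = O(1/\sigma^2)$ together with the identity $\E[\sum_i (\Delta g_A(S^{(i)}+\xi_i))^2] \leq \|\Delta g_A\|_\infty \cdot O(1)$. This yields
\[
\bigl|\E[(\mathcal A g_A)(S)]\bigr| \;\le\; \frac{\sqrt{\sum_i p_i^3(1-p_i)}}{\sigma^2} \;+\; \frac{2}{\sigma^2},
\]
and taking the supremum over $A$ gives the claimed bound. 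The main obstacle is the delicate bookkeeping in the leave-one-out expansion: one must arrange the cancellations to keep only a second-difference remainder (not a first-difference one), since first-difference remainders would only give an $\sum_i p_i^2(1-p_i)/\sigma^2$ bound, missing the $\sqrt{\cdot}$ improvement that is essential for the $O(\log n)$ quality of the translated Poisson approximation when the $p_i$'s are small.
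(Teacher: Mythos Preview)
The paper does not prove this lemma at all: it is quoted verbatim as (3.4) of R\"ollin~\cite{Rollin:07} and used as a black box. So there is no ``paper's proof'' to compare against; what you are really attempting is to reconstruct R\"ollin's argument.

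Your Stein-method skeleton is the right genre, and your derivation of the remainder
\[
\sum_{i} p_i^{2}(1-p_i)\,\E\bigl[\Delta^{2} g_A(S^{(i)})\bigr] \;+\; \{\mu-\sigma^{2}\}\,\E[\Delta g_A(S)]
\]
is correct. The gap is in the last step. From this expression, the sup-norm bound $\|\Delta^{2} g_A\|_\infty\le 2/\lambda$ only yields
\[
\frac{2\sum_i p_i^{2}(1-p_i)}{\sigma^{2}},
\]
and your proposed Cauchy--Schwarz does not improve this to $\sqrt{\sum_i p_i^{3}(1-p_i)}/\sigma^{2}$: the inequality you invoke, $\E[\sum_i (\Delta g_A(S^{(i)}+\xi_i))^{2}] \le \|\Delta g_A\|_\infty\cdot O(1)$, is false as written (the left side has $n$ nonnegative terms each of size up to $\|\Delta g_A\|_\infty^{2}$, so it is $O(n/\sigma^{4})$, not $O(1/\sigma^{2})$). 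If you carry your Cauchy--Schwarz through honestly you land at $\sqrt{\sum_i p_i^{3}(1-p_i)}/\sigma$, a factor $\sigma$ short of the target.

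R\"ollin obtains the stated bound not via a second-difference remainder but via an exchangeable-pair coupling: pick $I$ uniformly and resample $J_I$. The relevant error term is then governed by the standard deviation of the conditional ``birth rate'' $\sum_i p_i(1-J_i)$, whose variance is exactly $\sum_i p_i^{3}(1-p_i)$; combined with the Stein factor $\|\Delta g_A\|_\infty\le 1/\lambda$ this produces $\sqrt{\sum_i p_i^{3}(1-p_i)}/\sigma^{2}$ directly, with the $+2$ coming from the fractional-part correction. If you want to salvage the direct approach, you would need to rewrite the main remainder in the form $\E[R\cdot\Delta g_A(\cdot)]$ for a single mean-zero random variable $R$ with $\Var(R)=\sum_i p_i^{3}(1-p_i)$ and then apply Cauchy--Schwarz to the \emph{expectation}, not to the $i$-sum; your current second-difference form does not admit such a rewriting.
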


The following bound on the total variation distance between translated Poisson distributions will be useful.

\begin{lemma}[Lemma~2.1 of \cite{BarbourLindvall}] \label{lem: variation distance between translated Poisson distributions}
For $\mu_1, \mu_2 \in \mathbb{R}$ and $\sigma_1^2, \sigma_2^2 \in \mathbb{R}_+$ with $\lfloor \mu_1-\sigma_1^2 \rfloor \le \lfloor \mu_2-\sigma_2^2 \rfloor$, we have
$$\dtv(TP(\mu_1,\sigma_1^2), TP(\mu_2,\sigma_2^2)) \le \frac{|\mu_1-\mu_2|}{\sigma_1}+\frac{|\sigma_1^2-\sigma_2^2|+1}{\sigma_1^2}.$$
\end{lemma}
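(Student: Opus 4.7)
The plan is to reduce to a question about Poisson distributions via the definition of $TP$, and then deploy Stein's method for Poisson approximation. Setting $a_i := \lfloor \mu_i - \sigma_i^2\rfloor \in \mathbb{Z}$ and $\lambda_i := \mu_i - a_i \in [\sigma_i^2, \sigma_i^2+1)$, we have $TP(\mu_i,\sigma_i^2) = a_i + \mathrm{Poi}(\lambda_i)$, and the hypothesis says $d := a_2 - a_1 \ge 0$. Since total variation distance is invariant under common integer translations,
\[ \dtv\bigl(TP(\mu_1,\sigma_1^2),\, TP(\mu_2,\sigma_2^2)\bigr) = \dtv\bigl(\mathrm{Poi}(\lambda_1),\, d + \mathrm{Poi}(\lambda_2)\bigr). \]

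Next, for any test set $A \subseteq \mathbb{Z}_{\ge 0}$, the Stein equation $\lambda_1 g(k+1) - k\,g(k) = \mathbf{1}_A(k) - \Pr[\mathrm{Poi}(\lambda_1) \in A]$ has a solution $g = g_A$ satisfying the Barbour--Eagleson bounds $\|g\|_\infty \le 1/\sqrt{\lambda_1}$ and $\|\Delta g\|_\infty \le 1/\lambda_1$, where $\Delta g(k) := g(k+1)-g(k)$. Plugging $Y := d + W$ with $W \sim \mathrm{Poi}(\lambda_2)$ into the Stein equation and using the Poisson size-bias identity $\mathbb{E}[W\, g(d+W)] = \lambda_2\,\mathbb{E}[g(d+1+W)]$ reduces $\mathbb{E}[\lambda_1 g(Y+1) - Y g(Y)]$ to $(\lambda_1-\lambda_2)\mathbb{E}[g(Y+1)] - d\,\mathbb{E}[g(Y)]$. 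Substituting $g(Y) = g(Y+1) - \Delta g(Y)$ and invoking the key algebraic cancellation $\lambda_1 - \lambda_2 - d = \mu_1 - \mu_2$ (an immediate consequence of $\mu_i = a_i + \lambda_i$), this collapses to $(\mu_1-\mu_2)\,\mathbb{E}[g(Y+1)] + d\,\mathbb{E}[\Delta g(Y)]$. Bounding via the Stein estimates and taking the supremum over $A$ produces the intermediate bound
\[ \dtv\bigl(\mathrm{Poi}(\lambda_1),\, d + \mathrm{Poi}(\lambda_2)\bigr) \;\le\; \frac{|\mu_1-\mu_2|}{\sigma_1} + \frac{d}{\sigma_1^2}. \]

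The main obstacle is then replacing $d$ in the numerator of the second term by $|\sigma_1^2-\sigma_2^2|+1$ as in the statement. Since $d = (\mu_2-\mu_1) + (\lambda_1-\lambda_2)$ and $|\lambda_1-\lambda_2| \le |\sigma_1^2-\sigma_2^2|+1$, the naive bound $d \le |\mu_1-\mu_2| + |\sigma_1^2-\sigma_2^2|+1$ introduces a spurious $|\mu_1-\mu_2|/\sigma_1^2$ summand. To remove it, I would apply the triangle inequality through the auxiliary translated Poisson $TP(\mu_2,\sigma_1^2)$: the first leg is a pure mean shift ($\sigma_1^2=\sigma_2^2$, forcing $d \le |\mu_1-\mu_2|+1$ in the calculation above and giving an $O(|\mu_1-\mu_2|/\sigma_1)$ contribution) and the second leg is a pure variance change ($\mu_1=\mu_2$, forcing $d \le |\sigma_1^2-\sigma_2^2|+1$ and giving the exact $(|\sigma_1^2-\sigma_2^2|+1)/\sigma_1^2$ contribution). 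Summing the two legs yields the stated inequality. The hardest quantitative point throughout is the cancellation $\lambda_1 - \lambda_2 - d = \mu_1 - \mu_2$, which is what cleanly separates the mean-difference term (scaling with the pointwise bound $\|g\|_\infty \sim 1/\sigma_1$) from the translation term (scaling with the finer increment bound $\|\Delta g\|_\infty \sim 1/\sigma_1^2$).
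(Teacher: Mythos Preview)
The paper does not prove this lemma at all; it is quoted verbatim from Barbour--Lindvall with no argument given, so there is nothing to compare against. Your Stein-method approach is indeed the right one and matches how Barbour and Lindvall prove it. The reduction to $\dtv(\mathrm{Poi}(\lambda_1),\, d+\mathrm{Poi}(\lambda_2))$, the size-bias identity, and the key cancellation $\lambda_1-\lambda_2-d=\mu_1-\mu_2$ are all correct.

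There is, however, a genuine gap in your last step. Your intermediate bound $\frac{|\mu_1-\mu_2|}{\sigma_1}+\frac{d}{\sigma_1^2}$ is fine, but the triangle-inequality detour through $TP(\mu_2,\sigma_1^2)$ does \emph{not} recover the stated constants. On the ``pure mean shift'' leg you still have a nonzero translation $d^{(1)}=\lfloor\mu_2-\sigma_1^2\rfloor-\lfloor\mu_1-\sigma_1^2\rfloor$, which can be as large as $|\mu_1-\mu_2|+1$; plugging this back into your intermediate bound leaves an extra $\frac{|\mu_1-\mu_2|+1}{\sigma_1^2}$ term that never gets absorbed. So the two legs sum to something strictly larger than the lemma's right-hand side.

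The fix is a one-line change in your algebra, with no detour needed. Instead of substituting $g(Y)=g(Y+1)-\Delta g(Y)$, substitute $g(Y+1)=g(Y)+\Delta g(Y)$ in the expression $(\lambda_1-\lambda_2)\,\mathbb{E}[g(Y+1)]-d\,\mathbb{E}[g(Y)]$. The same cancellation $\lambda_1-\lambda_2-d=\mu_1-\mu_2$ then yields
\[
(\mu_1-\mu_2)\,\mathbb{E}[g(Y)] \;+\; (\lambda_1-\lambda_2)\,\mathbb{E}[\Delta g(Y)],
\]
so the coefficient on $\Delta g$ is $\lambda_1-\lambda_2$ rather than $d$. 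Since $\lambda_i=\sigma_i^2+\{\mu_i-\sigma_i^2\}$ with fractional part in $[0,1)$, one has $|\lambda_1-\lambda_2|\le |\sigma_1^2-\sigma_2^2|+1$ directly, and the Barbour--Eagleson bounds $\|g\|_\infty\le \lambda_1^{-1/2}\le 1/\sigma_1$ and $\|\Delta g\|_\infty\le \lambda_1^{-1}\le 1/\sigma_1^2$ give exactly the stated inequality.
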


\ignore{
%
%
}

\ignore{We
write $f(\overline{p})(\alpha)$ to denote the probability of outcome
$\alpha$ under distribution $f(\overline{p}).$ Finally, for $\ell \in
\mathbb{Z}^+$ we write $[\ell]$ to denote $\{1,\dots,\ell\}$.}

\paragraph{Running Times, and Bit Complexity.} Throughout this paper, we measure the running times of our algorithms in numbers of bit operations.
For a positive integer $n$, we denote by $\bit{n}$ its description complexity in binary, namely $\bit{n}  = \lceil \log_2 n \rceil$. Moreover, we represent a positive rational number $q$ as ${q_1 \over q_2}$, where $q_1$ and $q_2$ are relatively prime positive integers. The description complexity of $q$ is defined to be $\bit{q} = \bit{q_1}+\bit{q_2}$. We will assume that all $\epsilon$'s and $\delta$'s input to our algorithms are rational numbers.


\section{Learning a sum of
Bernoulli random variables from $\poly(1/\eps)$ samples} \label{sec:SOB}

In this section, we prove Theorem~\ref{thm:main} by providing a sample- and time-efficient algorithm for learning an unknown PBD $X=\sum_{i=1}^n X_i$. We start with an important ingredient in our analysis.

\medskip

\noindent {\bf A cover for PBDs.}
We make use of the following theorem, which provides a cover of the set ${\cal S} = {\cal S}_n$ of all PBDs of order-$n$. The theorem was given implicitly in~\cite{DP:oblivious11} and explicitly as Theorem~1 in~\cite{DP:cover}.

\begin{theorem} [Cover for PBDs] \label{thm: sparse cover theorem}
For all $\epsilon >0$, there exists an $\eps$-cover ${\cal S}_{\epsilon} \subseteq {\cal S}$ of ${\cal S}$ such that
\begin{enumerate}
\item
$|{\cal S}_{\epsilon}| \le n^2 + n \cdot \left({1 \over \epsilon}\right)^{O(\log^2{1/\epsilon})}$; and
\item
${\cal S}_{\epsilon}$ can be constructed in time linear in its representation size, i.e.,
${O}(n^2 \log n) + {O}(n \log n) \cdot \left({1 \over
\epsilon}\right)^{O(\log^2{1/\epsilon})}$.
\end{enumerate}
Moreover, if  $\{Y_i\}  \in {\cal S}_{\epsilon}$, then the collection of $n$ Bernoulli random variables $\{Y_i\}_{i=1,\dots,n}$ has one of the following forms, where
$k=k(\epsilon) \le C/\epsilon$ is a positive integer, for some absolute constant $C>0$:
\begin{itemize}
\item [(i)] ($k$-Sparse Form) There is some $ \ell \leq k^3=O(1/\epsilon^3)$
such that,
for all $i \leq \ell$, $\E[{Y_i}] \in \left\{{1 \over k^2}, {2\over k^2},\ldots, {k^2-1 \over k^2 }\right\}$ and,
for all $i >\ell $, $\E[{Y_i}] \in \{0,  1\}$.
\item [(ii)] ($k$-heavy Binomial Form) There is some $\ell \in \{1,\dots,n\}$
and $q \in \left\{ {1 \over n}, {2 \over n},\ldots, {n \over n}
\right\}$ such that,
for all $i \leq \ell$, $\E[{Y_i}] = q$ and,
for all $i >\ell$, $\E[{Y_i}] = 0$; moreover, $\ell,q$ satisfy
$\ell q \ge k^2$ and
$\ell q(1-q) \ge k^2- k-1.$
\end{itemize}

\noindent Finally, for every $\{X_i\}  \in {\cal S}$ for which there is no $\epsilon$-neighbor in ${\cal S}_{\epsilon}$ that is in sparse form, there exists some $\{Y_i\} \in {\cal S}_{\epsilon}$ in $k$-heavy Binomial form such that
\begin{itemize}
\item [(iii)] $\dtv(\sum_i X_i,\sum_i Y_i) \le \epsilon$; and
\item [(iv)] if $\mu = \E[\sum_i X_i]$, $\mu'= \E[\sum_i Y_i]$, $\sigma^2 = \Var[\sum_i X_i]$ and $\sigma'^2 = \Var[\sum_i Y_i]$, then
$
|\mu - \mu'| = O(1)$ and
$|\sigma^2-\sigma'^2| = O(1 + \epsilon \cdot (1+\sigma^2)).$
\end{itemize}
\end{theorem}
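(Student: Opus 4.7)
The plan is to invoke the cover construction of~\cite{DP:oblivious11}, which already delivers (1), (2), and the structural dichotomy (i)--(ii), together with the fact that every PBD has an $\epsilon$-neighbor in the cover (and hence (iii) once the sparse option is ruled out). The genuinely new ingredient to verify is (iv): when no sparse $\epsilon$-neighbor exists, the Binomial-form neighbor produced by the construction can additionally be chosen so that its first two moments are close to those of the target.

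First, I would recall the structural consequence of the \cite{DP:oblivious11} construction that a PBD $X = \sum_i X_i$ with no sparse-form $\epsilon$-neighbor must have ``large'' mean and variance, namely $\mu, \sigma^2 = \Omega(k^2)$, because a PBD with few non-extreme $p_i$'s (those bounded away from $0$ and $1$ by $1/k$) admits a sparse-form approximation via a rounding-to-grid argument. This ensures the parameter lower bounds $\ell q \geq k^2 - 1/k$ and $\ell q(1-q) \geq k^2 - k - 1 - 3/k$ in form (ii) are feasible when we set $\ell q \approx \mu$ and $\ell q(1-q) \approx \sigma^2$.

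Next, given such an $X$, I would construct the candidate $\{Y_i\}$ as a $k$-heavy Binomial whose parameters match the moments of $X$. Solving $\ell q = \mu$ and $\ell q(1-q) = \sigma^2$ yields $q = 1 - \sigma^2/\mu$ and $\ell = \mu^2/(\mu - \sigma^2) \leq n$ (the last inequality by Cauchy--Schwarz, since $\mu^2 \leq n \sum_i p_i^2 = n(\mu - \sigma^2)$); then snap $q$ to the nearest point of $\{j/(kn) : j=1,\ldots,kn-1\}$, round $\ell$ to a nearest integer, and append deterministic $0/1$ Bernoullis for the remaining indices to absorb the ``extreme'' contribution of $X$. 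Snapping $q$ perturbs it by $O(1/(kn))$, so the shifted mean $\mu' := \ell q$ differs from $\mu$ by $O(\ell/(kn)) = O(1/k) = O(\epsilon)$, and a direct expansion of $\ell q(1-q)$ shows $|\sigma^2 - \sigma'^2| = O(1) + O(\epsilon(1 + \sigma^2))$; these verify (iv).

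Finally, for (iii) with this specific choice of $\{Y_i\}$, I would combine Lemma~\ref{lem:translated Poisson approximation} (to sandwich $X$ and $Y := \sum_i Y_i$ around translated Poissons) with Lemma~\ref{lem: variation distance between translated Poisson distributions} (to compare the two translated Poissons):
\[
\dtv(X,Y) \leq \dtv(X, TP(\mu,\sigma^2)) + \dtv(TP(\mu,\sigma^2), TP(\mu',\sigma'^2)) + \dtv(TP(\mu',\sigma'^2), Y).
\]
The outer terms are $O(1/\sigma) = O(\epsilon)$ since $\sigma^2 = \Omega(k^2) = \Omega(1/\epsilon^2)$, and the middle term is bounded by $|\mu-\mu'|/\sigma + (|\sigma^2-\sigma'^2|+1)/\sigma^2 = O(\epsilon)$ via the moment bounds just derived. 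The main obstacle is the variance discretization: because $\sigma^2$ can be as large as $n$, verifying that a grid of resolution $1/(kn)$ for $q$ suffices to produce the specific shape $O(1 + \epsilon(1+\sigma^2))$ of the variance error---and that this shape in turn produces the desired $O(\epsilon)$ translated-Poisson bound---requires a careful linearization of $q \mapsto \ell q(1-q)$ around the moment-matching solution, together with the observation that the ``$+1$'' in the numerator of Lemma~\ref{lem: variation distance between translated Poisson distributions} is what exactly cancels the additive $O(1)$ discretization slack.
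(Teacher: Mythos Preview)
Your approach diverges from the paper's in a significant way, and the divergence creates two genuine gaps.

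The paper does not construct a fresh moment-matched Binomial. Instead, it takes the specific cover element $\{Y_i\}$ that the \cite{Daskalakis:anonymous08full} construction already assigns to $\{X_i\}$ via its two-stage filter, and tracks how the mean and variance change across each stage. Stage~1 rounds the ``extreme'' $p_i$'s (those in $(0,1/k)\cup(1-1/k,1)$) to $\{0,1/k\}$ or $\{1-1/k,1\}$ in a way that is explicitly designed to preserve $\sum_i p_i$ to within $1/k$; a direct calculation then shows $|\mu-\mu_Z|=O(\epsilon)$ and $|\sigma^2-\sigma_Z^2|=O(\epsilon(1+\sigma^2))$. Stage~2, in the heavy-Binomial case, replaces the non-extreme Bernoullis by identical ones; the paper simply cites Lemma~6.1 of \cite{Daskalakis:anonymous08full} to get $|\mu_Z-\mu_Y|=O(\epsilon)$ and $|\sigma_Z^2-\sigma_Y^2|=O(1)$. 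Part~(iii) requires no new work because it is the cover property already supplied by \cite{Daskalakis:anonymous08full}.

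Your plan has two problems. First, the claim ``few non-extreme $p_i$'s $\Rightarrow$ sparse-form approximation'' is false as stated. Take all $p_i=1/(2k)$ with $n=k^5$: every $p_i$ is extreme, yet $\sigma^2\approx k^4/2$, which is far too large for any sparse-form PBD (whose variance is at most $k^3/4$) to be $\epsilon$-close. The correct implication runs through Stage~1: only \emph{after} rounding the extreme $p_i$'s does ``few non-extreme indices'' force a sparse neighbor, and getting $\sigma^2=\Omega(k^2)$ from ``no sparse neighbor'' therefore already requires the Stage~1 moment-tracking you are trying to avoid.

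Second, and more seriously, your discretization does not deliver $|\mu-\mu'|=O(\epsilon)$. You account for snapping $q$ to the grid (error $O(\ell/(kn))=O(1/k)$) but not for rounding $\ell$ to an integer, which shifts $\ell q$ by $O(q)=O(1)$; appending an integer number $t$ of deterministic $1$'s cannot repair a non-integer $O(1)$ mean discrepancy. Indeed, the paper's own moment-matched Binomial construction ({\tt Locate-Binomial}, used to prove Theorem~\ref{thm: stronger sparse cover theorem}) only achieves $|\mu-\bar\mu|=2+O(\epsilon)$, not $O(\epsilon)$. The $O(\epsilon)$ mean bound in Theorem~\ref{thm: sparse cover theorem}(iv) genuinely relies on the sum-preserving design of the Stage~1 and Stage~2 filters, which is why the paper proves (iv) by analyzing the existing cover element rather than by building a new one.
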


\noindent
We remark that the cover theorem as stated in~\cite{DP:cover} does not include the part of the above statement following ``finally.'' We provide a proof of this extension in Appendix~\ref{appendix:proof of extended cover theorem}.


\medskip

\noindent {\bf The Basic Learning Algorithm.} \blue{The high-level structure of our learning algorithms which give Theorem~\ref{thm:main} is provided in Algorithm {\tt Learn-PBD} of Figure~\ref{figure:learn-pBD}.  We instantiate this high-level structure, with appropriate technical modifications, in Section~\ref{sec:proofofmaintheorem}, where we give more detailed descriptions of the non-proper and proper algorithms that give parts (1) and (2) of Theorem~\ref{thm:main}.}

\begin{figure}[h!]
\framebox{
\medskip \noindent \begin{minipage}{16cm}

\medskip

{\tt Learn-PBD}$(n,\eps,\delta)$

\begin{enumerate}

 \item Run {\tt Learn-Sparse}$^X(n,\eps,\delta/3)$ to get hypothesis distribution $H_S$.

 \item Run {\tt Learn-Poisson}$^X(n,\eps,\delta/3)$ to get hypothesis distribution $H_P.$

\item Return the distribution which is the output of {\tt Choose-Hypothesis}$^X(H_S,H_P,\eps,\delta/3).$\\
\end{enumerate}

\end{minipage}}
\caption{{\tt Learn-PBD}$(n,\eps,\delta)$}
\label{figure:learn-pBD}
\end{figure}

\smallskip   At a high level, the subroutine {\tt Learn-Sparse} is given sample access to $X$ and is designed to find an $\eps$-accurate hypothesis $H_S$ with probability at least $1-\delta/3$, if the unknown PBD $X$ is $\eps$-close to some sparse form PBD inside the cover ${\cal S}_\epsilon$. Similarly, {\tt Learn-Poisson} is designed to find an $\eps$-accurate hypothesis $H_P$, if $X$ is not $\eps$-close to a sparse form PBD (in this case, Theorem~\ref{thm: sparse cover theorem} implies that $X$ must be $\eps$-close to some $k(\eps)$-heavy Binomial form PBD).  Finally, {\tt Choose-Hypothesis} is designed to choose one of the two
hypotheses $H_S,H_P$ as being $\eps$-close to $X.$  The following subsections specify these subroutines, as well as how the algorithm can be used to establish Theorem~\ref{thm:main}.
 We note that {\tt Learn-Sparse} and {\tt Learn-Poisson} do not return the distributions $H_S$ and $H_P$ as a list of probabilities for every point in $[n]$. They return instead a succinct description of these distributions in order to keep the running time of the algorithm logarithmic in $n$. Similarly, {\tt Choose-Hypothesis} operates with succinct descriptions of these distributions.

\subsection{Learning when $X$ is close to a sparse form PBD.}
\label{sec:learnclosetosparse}

Our starting point here is the simple observation that any PBD is a unimodal
distribution over the domain $\{0,1,\dots,n\}$.  (There is a simple inductive proof of this, or see Section~2 of \cite{KeilsonGerber:71}.)  This enables us to use the algorithm of Birg\'e \cite{Birge:97} for learning unimodal
distributions.  We recall Birg\'e's result, and refer the reader to Appendix~\ref{ap:birge} for an
explanation of how Theorem~\ref{thm:Birge unimodal} as stated below follows from \cite{Birge:97}.

\begin{theorem}[\cite{Birge:97}] \label{thm:Birge unimodal}
For all $n,\epsilon,\delta >0$, there is an algorithm that draws
$$O\left({\log n \over \epsilon^3} \log{1\over \delta} + {1 \over \epsilon^2} \log{1\over \delta} \log\log{1\over \delta} \right)$$
samples from an unknown unimodal distribution $X$ over $[n]$, does
$$\tilde{O}\left({\log^2 n \over \epsilon^3} \log^2{1\over \delta} \right)$$
bit-operations, and outputs a (succinct description of a) hypothesis distribution $H$ over $[n]$ that has the following form: $H$ is uniform over subintervals $[a_1,b_1],[a_2,b_2],\dots,[a_k,b_k]$, whose union $\cup_{i=1}^k [a_i,b_i]=[n]$, where $k=O\left({\log n \over \eps}\right).$ In particular, the algorithm outputs the lists $a_1$ through $a_k$ and $b_1$ through $b_k$, as well as the total probability mass that $H$ assigns to each subinterval $[a_i,b_i]$, $i=1,\ldots,k$. Finally, with probability  at least $1-\delta$, $\dtv(X,H) \leq \eps$.
\end{theorem}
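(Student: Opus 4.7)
The plan is to port Birg\'e's classical result on estimating unimodal densities \cite{Birge:97} from the continuous setting to distributions on $[n]=\{0,\ldots,n\}$. The key structural fact is that any unimodal distribution admits a good piecewise-uniform approximation on a partition whose intervals grow geometrically in length away from the mode, reducing the learning problem to estimating the masses of only $\tilde{O}(\log n)$ intervals.

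First, I would establish the following approximation lemma. Given a mode location $m\in[n]$ and accuracy $\epsilon$, define the ``Birg\'e partition'' $\mathcal{P}_{m}$ of $[n]$ by taking $\{m\}$ together with intervals on each side of $m$ whose lengths grow geometrically by a factor of $1+\epsilon$ (truncated at the domain boundary). This gives $k=O(\log n/\epsilon)$ intervals. For a unimodal $X$ with mode at $m$, the pmf of $X$ is monotone on each $I_j$, and the geometric growth ensures that the ratio of the maximum to minimum pmf value on each $I_j$ is at most $1+\epsilon$. A termwise calculation comparing each $p_i$ against the interval average $X(I_j)/|I_j|$ then shows that the ``flattened'' distribution $\tilde X$ that is uniform on each $I_j$ with $\tilde X(I_j)=X(I_j)$ satisfies $\dtv(X,\tilde X)=O(\epsilon)$.

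Second, the algorithm draws $m=\tilde O(k\log(1/\delta)/\epsilon^2)$ samples and uses a standard VC/empirical-process bound to estimate all $k$ interval masses simultaneously to within total $\ell_1$ error $O(\epsilon)$. Outputting the flattened distribution with these empirical masses in place of the true ones yields an $O(\epsilon)$-accurate hypothesis $H$, represented succinctly by the lists of endpoints $\{(a_j,b_j)\}_{j=1}^k$ and the associated empirical probabilities. Plugging in $k=O(\log n/\epsilon)$ yields the claimed sample bound $\tilde O(\log n/\epsilon^3)\cdot\log(1/\delta)$; any additional $1/\epsilon^2$ blowup in the number of output pieces (as reflected in the $k=O(\log n/\epsilon^3)$ bound in the statement) is absorbed into the refinement step needed to handle unknown mode candidates, described next.

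Third, because the mode is unknown, I would maintain a pool of $N=\poly(n,1/\epsilon)$ candidate mode positions (for example, a $\log(1+\epsilon)$-spaced grid on $[n]$ together with points identified as heavy from an initial sample), construct a candidate hypothesis $H^{(i)}$ for each by the above procedure reusing the same samples, and select a good one via a Scheff\'e-style tournament using $\tilde O(\log(N/\delta)/\epsilon^2)$ additional samples. The tournament is correct because, with high probability, the hypothesis constructed from the candidate closest to the true mode is $O(\epsilon)$-accurate, and the tournament then identifies an $O(\epsilon)$-accurate survivor. The main obstacle is respecting the stated bit-operation budget of $\tilde O(\log^2 n/\epsilon^3\log^2(1/\delta))$: the algorithm must never materialize any pmf as a length-$(n+1)$ vector but must instead operate throughout on the $O(\log n/\epsilon^3)$ succinct (endpoint, mass) summaries, and each pairwise Scheff\'e comparison must be executed in time proportional to the number of pieces (not to $n$) by explicitly writing the disagreement set between two piecewise-uniform hypotheses as a union of $O(k)$ intervals on which both hypotheses are constant.
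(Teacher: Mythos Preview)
Your approach differs substantially from the paper's. Rather than building a histogram estimator from scratch, the paper (Section~\ref{ap:birge}) directly invokes Birg\'e's estimator from \cite{Birge:97}: the derivative of the convex minorant/concave majorant of the empirical cdf on each side of an estimated mode $r$. Birg\'e's own analysis, together with \cite{Birge:87,Birge:87b}, gives \emph{expected} total variation error $O(\eps)$ from $m=O(\log n/\eps^3)$ samples; the paper then boosts to confidence $1-\delta$ by running $O(\log(1/\delta))$ independent copies and selecting a winner via the tournament of Lemma~\ref{lem:log-cover-size} (this is where the $\log^2(1/\delta)$ in the running time comes from). The bound $k=O(\log n/\eps^3)$ on the number of pieces simply reflects that a Grenander-type estimator has at most $m$ breakpoints; it is not an artifact of mode-handling as you suggest.

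There are two gaps in your proposal. First, your justification of the flattening lemma is incorrect: geometric growth of interval \emph{lengths} does not bound the \emph{ratio} of pmf values on each interval (take $p_i\propto 2^{-i}$, where the ratio on any interval of length $\ell\ge 2$ is $2^{\ell-1}$). The lemma is still true, but its proof proceeds by a summation-by-parts/telescoping bound on $\sum_j \ell_j(p(a_j)-p(a_{j+1}))$, not by a pointwise ratio bound. Second, and more seriously, your mode-finding step does not meet the bit-operation budget. You propose a tournament over $N=\poly(n,1/\eps)$ candidate modes; even read charitably as $N=O(\log n/\eps)$ (a $\log(1+\eps)$-spaced grid), such a grid is too coarse --- if $X$ is a point mass at $m$, every grid point $m'\neq m$ yields a hypothesis with $\Theta(1)$ error --- so you really need $N$ on the order of the sample size $m=\tilde{O}(\log n/\eps^3)$, and then the $\Theta(N^2)$ pairwise comparisons of the tournament already exceed $\tilde{O}(\log^2 n/\eps^3)$. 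The paper sidesteps all of this because Birg\'e's algorithm locates a single near-optimal $r$ by binary search over the $m$ sorted sample points, evaluating at each step the functionals $d^-,d^+$ (the maximum gap between the empirical cdf and its convex minorant/majorant) in $\tilde{O}(m\log n)$ bit-operations, for a total of $\tilde{O}(m\log n\cdot \log m)=\tilde{O}(\log^2 n/\eps^3)$ per run.
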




The main result of this subsection
is the following:

\begin{lemma} \label{lem:learnsparse}
For all $n,\epsilon',\delta' >0$, there is an algorithm {\tt Learn-Sparse}$^X(n,\eps',\delta')$ that draws
$$O\left({1 \over \epsilon'^3}\log{1 \over \eps'}\log{1\over \delta'}+ {1 \over \epsilon'^2} \log{1\over \delta'} \log\log{1\over \delta'}\right)$$
samples from a target PBD $X$ over $[n]$, does
$$\log n \cdot \tilde{O}\left({1 \over \epsilon'^3}\log^2{1\over \delta'}\right)
$$
bit operations, and outputs a (succinct description of a) hypothesis distribution $H_S$ over $[n]$ that has the following form: its support is contained in an explicitly specified interval $[a,b] \subset [n]$, where $|b-a|=O(1/\eps'^3)$, and for every point in $[a,b]$ the algorithm explicitly specifies the probability assigned to that point by $H_S$.~\footnote{In particular, our algorithm will output a list of pointers, mapping every point in $[a,b]$ to some memory location where the probability assigned to that point by $H_S$ is written.} The algorithm has the following guarantee: if $X$ is $\eps'$-close to some sparse form PBD $Y$ in the cover ${\cal S}_{\epsilon'}$ of Theorem~\ref{thm: sparse cover theorem}, then with probability  at least $1-\delta'$, $\dtv(X,H_S) \leq c_1 \eps'$, for some absolute constant $c_1\ge 1$, and the support of $H_S$ lies in the support of $Y$.
\end{lemma}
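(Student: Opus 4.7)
Each sparse-form $Y \in \mathcal{S}_{\eps'}$ decomposes as $Y = c + Z$, where $c = |\{i>\ell : \E[Y_i]=1\}|$ and $Z = \sum_{i\leq \ell} Y_i$ has $\ell \leq k^3 = O(1/\eps'^3)$ summands; in particular $\mathrm{supp}(Y) = \{c, c+1, \ldots, c+\ell\}$ is an interval of length $O(1/\eps'^3)$. So whenever $X$ is $\eps'$-close to such a $Y$, the mass of $X$ outside an (unknown) interval of length $O(1/\eps'^3)$ is at most $\eps'$. The plan is: (i) localize this interval from samples via empirical $2\eps'$-quantiles; (ii) apply Birg\'{e}'s unimodal learner inside it; (iii) combine errors by the triangle inequality.

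\textbf{Localization by trimming.} I would draw $N = \Theta(\log(1/\delta')/\eps'^2)$ samples from $X$ and set $a$ to be the $\lceil 2\eps' N\rceil$-th smallest and $b$ the $\lceil (1-2\eps')N\rceil$-th smallest sample. By the Dvoretzky--Kiefer--Wolfowitz inequality applied to the empirical CDF (or equivalently, VC-dimension~$2$ for intervals), with probability $1-\delta'/2$ every interval $I\subseteq [n]$ satisfies $|\widehat X(I)-X(I)|\leq \eps'$. Combined with $X(<c),\,X(>c+\ell)\leq \eps'$ (which follows from $\dtv(X,Y)\leq \eps'$ together with $\mathrm{supp}(Y)=\{c,\ldots,c+\ell\}$), this forces $a\geq c$ and $b\leq c+\ell$; hence $[a,b]\subseteq \mathrm{supp}(Y)$ and $|b-a|\leq \ell = O(1/\eps'^3)$. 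The same event also yields $X([a,b])\geq 1 - O(\eps')$.

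\textbf{Birg\'{e} on $[a,b]$ and error bookkeeping.} Every PBD is unimodal, and conditioning on an interval preserves unimodality, so $X_{[a,b]}$ is unimodal. I would then invoke Theorem~\ref{thm:Birge unimodal} on $X_{[a,b]}$ with parameters $(\eps', \delta'/2)$, drawing the required samples by rejection from $X$ (constant-factor overhead, since $X([a,b])\geq 1/2$). Plugging $|b-a|=O(1/\eps'^3)$ into Birg\'{e}'s bounds gives $O(\log(1/\eps')/\eps'^3)\cdot\tilde O(\log(1/\delta'))$ samples and $\log n\cdot \tilde O((1/\eps'^3)\log(1/\delta'))$ bit-operations (the $\log n$ coming from arithmetic on sample values), matching the lemma. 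Extending the output $H_S$ by zero outside $[a,b]$ stores as $O(1/\eps'^3)$ explicit probabilities, and a union bound gives
\[
\dtv(H_S,X)\;\leq\;\dtv(H_S,X_{[a,b]})+\dtv(X_{[a,b]},X)\;\leq\;\eps'+X([n]\setminus[a,b])\;\leq\;c_1\eps',
\]
with $\mathrm{supp}(H_S)\subseteq [a,b]\subseteq \mathrm{supp}(Y)$ by construction.

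\textbf{Where the difficulty lies.} The one non-trivial point is the containment $[a,b]\subseteq\mathrm{supp}(Y)$: the algorithm never learns $c$ or $\ell$, so the cutoffs must arise entirely from the samples. The $2\eps'$-quantile is exactly calibrated to make this work --- it is large enough to absorb, via DKW, the up to $\eps'$-worth of tail mass that $X$ may place outside $\mathrm{supp}(Y)$, yet small enough that $[a,b]$ retains all but an $O(\eps')$-fraction of the mass. Everything else --- rejection sampling, unimodality of conditional PBDs, and the final union bounds --- is routine.
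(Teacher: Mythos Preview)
Your proposal is correct and follows essentially the same route as the paper: localize via the empirical $2\eps'$-quantiles (the paper uses direct Chernoff where you invoke DKW, but the effect is the same), then run Birg\'e on the conditional distribution and finish by the triangle inequality. The one omission is that the lemma requires the output support to lie in an interval of length $O(1/\eps'^3)$ \emph{unconditionally}, while your argument only yields $b-a\le \ell$ on the good event that $X$ is $\eps'$-close to a sparse $Y$; the paper patches this by testing whether $\hat b-\hat a>(C/\eps')^3$ and, if so, outputting a trivial point-mass hypothesis, which preserves the output format and running-time bounds even when the sparse assumption fails.
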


The high-level idea of Lemma~\ref{lem:learnsparse} is quite simple.  We truncate $O(\eps')$ of the probability mass
from each end of $X$ to obtain a conditional distribution $X_{[\hat{a},\hat{b}]}$; since $X$ is unimodal so is $X_{[\hat{a},\hat{b}]}$.  If $\hat{b}-\hat{a}$ is larger than $ O(1/\eps'^3)$ then the algorithm outputs ``fail''
(and $X$ could not have been close to a sparse-form distribution in the cover).  Otherwise, we use Birg\'{e}'s algorithm to learn the unimodal distribution $X_{[\hat{a},\hat{b}]}$.
\blue{A detailed description of the algorithm is given in Figure~\ref{figure:learnsparse} below.}

\begin{figure}[h!]
\framebox{
\medskip \noindent \begin{minipage}{16cm}

\blue{

\medskip

{\tt Learn-Sparse}$^X(n,\eps',\delta')$

\begin{enumerate}

\item Draw $M=32\log(8/\delta')/\eps'^2$ samples from $X$ and sort them to obtain a list of values $0 \leq s_1 \leq \cdots \leq s_M \leq n.$

\item Define $\hat{a}:=s_{\lceil 2 \eps' M \rceil}$ and $\hat{b}:=s_{\lfloor(1 - 2 \eps')M\rfloor}$.

\item If $\hat{b}-\hat{a} > (C/\eps')^3$ (where $C$ is the constant in the statement of Theorem~\ref{thm: sparse cover theorem}), output ``fail'' and return the (trivial)
  hypothesis which puts probability mass $1$ on the point $0$.

\item  Otherwise, run Birg\'e's unimodal distribution learner (Theorem~\ref{thm:Birge unimodal}) on the conditional distribution $X_{[\hat{a},\hat{b}]}$ and output the hypothesis that it returns.

\end{enumerate}
}

\end{minipage}}
\caption{{\tt Learn-Sparse}$^X(n,\eps',\delta')$ }
\label{figure:learnsparse}
\end{figure}

\medskip

\begin{prevproof}{Lemma}{lem:learnsparse}
\blue{As described in Figure~\ref{figure:learnsparse}, algorithm {\tt Learn-Sparse}$^X(n,\eps',\delta')$}  first draws $M=32\log(8/\delta')/\eps'^2$ samples from $X$ and sorts them to obtain a list of values $0 \leq s_1 \leq \cdots \leq s_M \leq n.$
We claim the following \blue{about the values $\hat{a}$ and $\hat{b}$ defined in Step~2
of the algorithm:}
\begin{claim} \label{claim:anchors are fine}
With probability at least $1-\delta'/2$, we have $X(\leq \hat{a}) \in [3\eps'/2,5\eps'/2]$ and
$X(\leq \hat{b}) \in [1-5\eps'/2,1-3\eps'/2]$.
\end{claim}
\begin{proof}
We only show that $X(\le \hat{a}) \ge 3\eps'/2$ with probability at least $1- \delta'/8$, since the arguments for $X(\le \hat{a}) \le 5\eps'/2$, $X(\le \hat{b}) \le 1-3\eps'/2$ and $X(\le \hat{b}) \ge 1-5\eps'/2$ are identical. Given that each of these conditions is met with probability at least $1-\delta'/8$, the union bound establishes our claim.

To show that $X(\le \hat{a}) \ge 3\eps'/2$ is satisfied with probability at least $1- \delta'/8$ we argue
as follows: Let  $\alpha'=\max\{ i~|~X(\le i) < 3\eps'/2\}$. Clearly, $X(\le \alpha') < 3\eps'/2$ while $X(\le \alpha'+1)\ge 3\eps'/2$. Given this, if $M$
samples are drawn from $X$ then the expected number of them that are
$\le \alpha'$ is at most $3\eps' M/2$. It follows then from the Chernoff bound that the probability that more than ${7 \over 4}\eps' M$ samples are $\le \alpha'$ is at most $e^{-(\eps'/4)^2 M/2} \le \delta'/8$. Hence
except with this failure probability, we have
$\hat{a} \ge \alpha'+1$, which implies that $X(\le \hat{a}) \ge 3\eps'/2$.
\end{proof}

\blue{As specified in Steps~3 and~4,} if $\hat{b}-\hat{a} > (C/\eps')^3$, where $C$ is the constant in the statement of Theorem~\ref{thm: sparse cover theorem},  the algorithm outputs ``fail'', returning the trivial hypothesis which puts probability mass $1$ on the point $0$. Otherwise, the algorithm runs Birg\'e's unimodal distribution learner (Theorem~\ref{thm:Birge unimodal}) on the conditional distribution
$X_{[\hat{a},\hat{b}]}$, and outputs the result of Birg\'e's algorithm. Since $X$ is unimodal, it follows that $X_{[\hat{a},\hat{b}]}$ is also unimodal, hence Birg\'e's algorithm is appropriate for learning it. The way we apply Birg\'e's algorithm to learn $X_{[\hat{a},\hat{b}]}$ given samples from the original distribution $X$ is the obvious one: we draw samples from $X$, ignoring all samples that fall outside of $[\hat{a},\hat{b}]$, until the right $O(\log(1/\delta') \log (1/\eps')/\eps'^3)$ number of samples fall inside $[\hat{a},\hat{b}]$, as required by Birg\'e's algorithm for learning a distribution of support of size $(C/\eps')^3$ with probability at least $1-\delta'/4$. Once we have the right number of samples in $[\hat{a},\hat{b}]$, we  run Birg\'e's algorithm to learn the conditional distribution $X_{[\hat{a},\hat{b}]}$. Note that the number of samples we need to draw from $X$ until  the right $O(\log(1/\delta') \log (1/\eps')/\eps'^3)$ number of samples fall inside $[\hat{a},\hat{b}]$ is still $O(\log(1/\delta') \log (1/\eps')/\eps'^3)$, with probability at least $1-\delta'/4$. Indeed, since $X([\hat{a},\hat{b}])=1-O(\eps')$, it follows from the Chernoff bound that with probability at least $1-\delta'/4$, if $K=\Theta(\log(1/\delta') \log (1/\eps')/\eps'^3)$ samples are drawn from $X$, at least $K(1-O(\eps'))$ fall inside $[\hat{a},\hat{b}]$.

{\bf Analysis:} It is easy to see that the sample complexity of our algorithm is as promised. For the running time, notice that, if Birg\'e's algorithm is invoked, it will return two lists of numbers $a_1$ through $a_k$ and $b_1$ through $b_k$, as well as a list of probability masses $q_1,\ldots,q_k$ assigned to each subinterval $[a_i,b_i]$, $i=1,\ldots,k$, by the hypothesis distribution $H_S$, where $k=O(\log(1/\eps')/\eps')$. In linear time, we can compute a list of probabilities $\hat{q}_1,\ldots,\hat{q}_k$, representing the probability assigned by $H_S$ to every point of subinterval $[a_i,b_i]$, for $i=1,\ldots,k$. So we can represent our output hypothesis $H_S$ via a data structure that maintains $O(1/\eps'^3)$ pointers, having one pointer per point inside $[a,b]$. The pointers map points to probabilities assigned by $H_S$ to these points. Thus turning the output of Birg\'e's algorithm into an explicit distribution over $[a,b]$ incurs linear overhead in our running time, and hence the running time of our algorithm is also as promised. (See Appendix~\ref{ap:birge} for an explanation of the running time of Birg\'e's algorithm.)  Moreover, we also note that the output distribution has the promised structure, since in one case it has a single atom at $0$ and in the other case it is the output of Birg\'e's algorithm on a distribution of support of size $(C/\eps')^3$.

It only remains to justify the last part of the lemma. Let $Y$ be the sparse-form PBD that $X$ is close to; say that $Y$ is supported on
$\{a',\dots,b'\}$ where $b'-a' \leq (C/\eps')^3.$  Since $X$ is $\eps'$-close to $Y$ in total variation distance it must be the case
that $X(\leq a'-1) \leq \eps'$. \ignore{\rnote{This part had proceeded a little differently, using unimodality to conclude that $X(\leq a'-2) < \eps'$ -- but it seemed to me $X$ could be unimodal and still not put weight on $a'-1$.  So I changed Claim 2 earlier to use $3\eps/2$ instead of $\eps$ and now I think it's ok.}}Since $X(\leq \hat{a}) \geq 3\eps'/2$ by Claim~\ref{claim:anchors are fine}, it must be the case that $\hat{a} \geq a'$.  Similar
arguments give that $\hat{b} \leq b'$.  So the interval $[\hat{a},\hat{b}]$ is contained in $[a',b']$ and has length at most $(C/\eps')^3$.  This means that Birg\'e's algorithm is indeed used correctly by our algorithm to learn $X_{[\hat{a},\hat{b}]}$, with probability at least $1-\delta'/2$ (that is, unless Claim~\ref{claim:anchors are fine} fails).  Now it follows from the correctness of Birg\'e's algorithm (Theorem~\ref{thm:Birge unimodal}) and the discussion above, that the hypothesis $H_S$ output when Birg\'e's algorithm is invoked satisfies $\dtv(H_S,X_{[\hat{a},\hat{b}]})\le \epsilon',$ with probability at least $1-\delta'/2$, i.e., unless either Birg\'e's algorithm fails, or we fail to get the right number of samples landing inside $[\hat{a},\hat{b}]$.
To conclude the proof of the lemma  we note that:
\begin{eqnarray*}
2 \dtv(X,X_{[\hat{a},\hat{b}]}) &=&
\sum_{i \in [\hat{a},\hat{b}]}|X_{[\hat{a},\hat{b}]}(i)-X(i)| +  \sum_{i \notin [\hat{a},\hat{b}]}|X_{[\hat{a},\hat{b}]}(i)-X(i)|\\
&=& \sum_{i \in [\hat{a},\hat{b}]} \Big| {1 \over  X({[\hat{a},\hat{b}]})}X(i)-X(i) \Big| +  \sum_{i \notin [\hat{a},\hat{b}]}X(i)\\
&=& \sum_{i \in [\hat{a},\hat{b}]} \Big| {1 \over  1-O(\eps')}X(i)-X(i) \Big| +  O(\eps')\\
&=& {O(\eps') \over 1-O(\eps')}\sum_{i \in [\hat{a},\hat{b}]} \Big| X(i) \Big| +  O(\eps')\\
& =& O(\eps').
\end{eqnarray*}
So the triangle inequality gives: $\dtv(H_S,X) = O(\eps')$, and Lemma~\ref{lem:learnsparse}
is proved. \end{prevproof}

\subsection{Learning when $X$ is close to a $k$-heavy Binomial Form PBD.} \label{sec:kheavy}

\begin{lemma} \label{lem:learnbinomial}
For all $n,\epsilon',\delta' >0$, there is an algorithm {\tt Learn-Poisson}$^X(n,\eps',\delta')$ that  draws
$$O(\log(1/\delta')/\eps'^2)$$
samples from a target PBD $X$ over $[n]$, does
$$O(\log n\cdot \log (1/\delta')/\eps'^2)$$
bit operations, and returns two parameters $\hat{\mu}$ and $\hat{\sigma}^2$. The algorithm has the following guarantee: Suppose $X$ is not $\eps'$-close to any sparse form PBD in the cover ${\cal S}_{\epsilon'}$ of Theorem~\ref{thm: sparse cover theorem}.
Let $H_P=TP(\hat{\mu},\hat{\sigma}^2)$
be the translated Poisson distribution with parameters $\hat{\mu}$ and $\hat{\sigma}^2$.
Then with probability at least $1-\delta'$ we have $\dtv(X,H_P) \leq
c_2 \eps'$ for some absolute constant $c_2 \ge 1$.
 \end{lemma}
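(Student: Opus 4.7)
The plan is for {\tt Learn-Poisson}$^X(n,\eps',\delta')$ to estimate the mean $\mu = \E[X] = \sum_i p_i$ and the variance $\sigma^2 = \Var[X] = \sum_i p_i(1-p_i)$ of $X$ from $m = \Theta(\log(1/\delta')/\eps'^2)$ i.i.d.\ draws $Z_1,\dots,Z_m$ of $X$, outputting the empirical mean $\hat\mu = (1/m)\sum_j Z_j$ and the empirical variance $\hat\sigma^2 = (1/(m-1))\sum_j (Z_j-\hat\mu)^2$. This clearly meets the claimed sample and time bounds, since each sample is a $\log n$-bit integer and all arithmetic is on $O(\log n)$-bit numbers.

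For correctness, the key driver is the ``moreover'' clause of Theorem~\ref{thm: sparse cover theorem}: since $X$ is not $\eps'$-close to any sparse-form member of ${\cal S}_{\eps'}$, there must exist a $k$-heavy Binomial form $Y = \sum_i Y_i \in {\cal S}_{\eps'}$ with $k = \Theta(1/\eps')$ such that $\dtv(X,Y)\le \eps'$, $|\mu-\mu'| = O(\eps')$, and $|\sigma^2 - \sigma'^2| = O(1+\eps'(1+\sigma^2))$, where $\mu' = \E[Y]$ and $\sigma'^2 = \Var[Y]$. The $k$-heavy condition guarantees $\sigma'^2 = \ell q(1-q) \ge k^2 - k - 1 - 3/k = \Omega(1/\eps'^2)$, and combining with the $\sigma^2$/$\sigma'^2$ comparison gives $\sigma^2 = \Omega(1/\eps'^2)$ as well.

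Given this lower bound on $\sigma^2$, Lemma~\ref{lem:translated Poisson approximation} immediately yields
\[
\dtv(X,\, TP(\mu,\sigma^2)) \;\le\; \frac{\sqrt{\sum_i p_i^3(1-p_i)}+2}{\sigma^2} \;\le\; \frac{\sigma+2}{\sigma^2} \;=\; O(1/\sigma) \;=\; O(\eps'),
\]
using $p_i^3(1-p_i) \le p_i(1-p_i)$. Meanwhile, standard Chebyshev-type concentration for the sample mean and sample variance (the latter using that a PBD has fourth central moment $\E[(X-\mu)^4] = O(\sigma^4)$ once $\sigma = \Omega(1)$, so $\Var(\hat\sigma^2) = O(\sigma^4/m)$) guarantees, with probability $1-\delta'$, that $|\hat\mu - \mu| \le O(\eps'\sigma)$ and $|\hat\sigma^2 - \sigma^2| \le O(\eps'\sigma^2)$. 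Plugging into Lemma~\ref{lem: variation distance between translated Poisson distributions} (applied in whichever direction satisfies the floor hypothesis; total variation is symmetric) yields
\[
\dtv(TP(\mu,\sigma^2),\, TP(\hat\mu,\hat\sigma^2)) \;\le\; \frac{|\mu-\hat\mu|}{\sigma} + \frac{|\sigma^2-\hat\sigma^2|+1}{\sigma^2} \;=\; O(\eps') + O(\eps') + O(\eps'^2) \;=\; O(\eps'),
\]
and the triangle inequality with the previous display finishes the proof.

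The main obstacle is the variance-estimation step: squeezing $|\hat\sigma^2 - \sigma^2| = O(\eps'\sigma^2)$ out of only $\tilde O(1/\eps'^2)$ samples taking values in $[0,n]$ would be impossible without both the lower bound $\sigma^2 = \Omega(1/\eps'^2)$ harvested from the cover theorem and the PBD-specific fourth-moment bound, which together let the relative error of the sample variance match that of the sample mean.
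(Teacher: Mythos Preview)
Your approach is essentially the same as the paper's: use the cover theorem to force $\sigma^2=\Omega(1/\eps'^2)$, invoke Lemma~\ref{lem:translated Poisson approximation} to get $\dtv(X,TP(\mu,\sigma^2))=O(\eps')$, estimate $\mu$ and $\sigma^2$ empirically, and then use Lemma~\ref{lem: variation distance between translated Poisson distributions} plus the triangle inequality. Your fourth-moment observation for the PBD is also exactly what the paper computes (it bounds the excess kurtosis by $1/\sigma^2$).

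There is, however, one genuine gap in the confidence analysis. You define $\hat\mu$ and $\hat\sigma^2$ as a \emph{single} empirical mean and sample variance over all $m=\Theta(\log(1/\delta')/\eps'^2)$ draws, and then claim ``standard Chebyshev-type concentration'' gives the desired bounds with probability $1-\delta'$. Chebyshev alone does not deliver this: with $\Var[\hat\mu]=\sigma^2/m$, Chebyshev only gives $\Pr[|\hat\mu-\mu|>\eps'\sigma]\le 1/(m\eps'^2)=\Theta(1/\log(1/\delta'))$, which is nowhere near $\delta'$. (Sub-Gaussian or Bernstein-type inequalities do not rescue you either, since the samples live in $[0,n]$ and the resulting additive $n$-dependent term swamps the bound.) The paper handles this via the standard median trick: run the weak Chebyshev-based estimator $O(\log(1/\delta'))$ times on disjoint batches of $O(1/\eps'^2)$ samples each, then output the median of the $O(\log(1/\delta'))$ estimates. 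This boosts the constant success probability to $1-\delta'$ while keeping the total sample count at $O(\log(1/\delta')/\eps'^2)$. Once you replace your single-shot estimators by median-of-means, the rest of your argument goes through verbatim and matches the paper.
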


Our proof plan is to exploit the structure of the cover of Theorem~\ref{thm: sparse cover theorem}. In particular, if $X$ is not $\eps'$-close to any sparse form PBD in the cover, it must be $\epsilon'$-close to a PBD in heavy Binomial form with approximately the same mean and variance as $X$, as specified by the final part of the cover theorem. Hence, a natural strategy is to obtain estimates $\hat{\mu}$ and $\hat{\sigma}^2$ of the mean and variance of the unknown PBD $X$, and output as a hypothesis a translated Poisson distribution with parameters $\hat{\mu}$ and $\hat{\sigma}^2$. We show that this strategy is a successful one.  {Before providing the details,} we {highlight} two facts
{that we will establish in the subsequent analysis and}
that will be used later.  The first is that, assuming $X$ is not $\eps'$-close to any sparse form PBD in the cover ${\cal S}_{\epsilon'}$, its variance $\sigma^2$ satisfies
\begin{equation} \label{eq:sigmabig}
\sigma^2 = \Omega(1/\eps'^2) \ge \theta^2 \quad \text{for some universal constant~}\theta.
\end{equation}
The second is that under the same assumption, the
estimates $\hat{\mu}$ and $\hat{\sigma}^2$ of the mean $\mu$ and variance $\sigma^2$ of $X$ that we obtain satisfy the following bounds with probability at least $1-\delta$:
\begin{equation} \label{eq:goodparams}
|\mu-\hat{\mu}| \le \epsilon' \cdot \sigma \quad\text{and}\quad |\sigma^2 - \hat{\sigma}^2| \le \epsilon' \cdot \sigma^2.
\end{equation}

\blue{See Figure~\ref{figure:learnpoisson} and the associated Figure~\ref{figure:A}
for a detailed description of the {\tt Learn-Poisson}$^X(n,\eps',\delta')$ algorithm.}

\begin{figure}[h!]
\framebox{
\medskip \noindent \begin{minipage}{16cm}

\blue{

\medskip

{\tt Learn-Poisson}$^X(n,\eps',\delta')$

\begin{enumerate}

\item  Let $\epsilon=\epsilon'/ \sqrt{4+{1 \over \theta^2}}$ and $\delta = \delta'.$

\item Run algorithm ${\cal A}(n,\eps,\delta)$ to obtain an estimate $\hat{\mu}$ of $\E[X]$ and an estimate $\hat{\sigma}^2$ of $\Var[X]$.

\item  Output the translated Poisson distribution $TP(\hat{\mu},\hat{\sigma}^2)$.

\end{enumerate}

}

\end{minipage}}
\caption{{\tt Learn-Poisson}$^X(n,\eps',\delta')$.  The value $\theta$ used in Line~1 is
the universal constant specified in the proof of Lemma~\ref{lem:learnbinomial}.}
\label{figure:learnpoisson}
\end{figure}

\begin{figure}[h!]
\framebox{
\medskip \noindent \begin{minipage}{16cm}

\blue{

\medskip

${\cal A}(n,\eps,\delta)$

\begin{enumerate}

\item Let $r=O(\log 1 /\delta).$ For $i=1,\dots,r$ repeat the following:

\begin{enumerate}

\item Draw $m = \lceil 3 / \epsilon^2 \rceil$ independent samples $Z_{i,1},\dots,Z_{i,m}$ from $X$.

\item Let $\hat{\mu}_i = {\frac {\sum_j Z_{i,j}} m},$ $\hat{\sigma}^2_i={\sum_j (Z_{i,j} - {1 \over m}\sum_k Z_{i,k})^2 \over m-1}$.

\end{enumerate}

\item  Set $\hat{\mu}$ to be the median of $\hat{\mu}_1,\dots,\hat{\mu}_r$ and
set $\hat{\sigma}^2$ to be the median of $\hat{\sigma}^2_1,\dots,\hat{\sigma}^2_r$.

\item  Output $\hat{\mu}$ and $\hat{\sigma}^2$.

\end{enumerate}
}

\end{minipage}}
\caption{${\cal A}(n,\eps,\delta)$ }
\label{figure:A}
\end{figure}

\medskip

\begin{prevproof}{Lemma}{lem:learnbinomial}
We start by showing that we can estimate the mean and variance of the target PBD $X$.

\begin{lemma}\label{lem:can estimate mean and variance}
For all $n, \epsilon,\delta>0$,  there exists an algorithm ${\cal A}(n,\epsilon,\delta)$ with the following properties:  given access to a PBD $X$ of order $n$, it produces estimates $\hat{\mu}$ and $\hat{\sigma}^2$ for $\mu=\E[X]$ and $\sigma^2=\Var[X]$ respectively such that with probability at least $1-\delta$:
\[|\mu-\hat{\mu}| \le \epsilon \cdot \sigma \text{~~~~~~~ and ~~~~~~~}
|\sigma^2 - \hat{\sigma}^2| \le \epsilon \cdot \sigma^2 \sqrt{4+{1 \over \sigma^2}}.
\]
The algorithm uses $$O(\log(1/\delta)/\epsilon^2)$$ samples and runs in time
$$O(\log n \log(1/\delta)/\epsilon^2).$$
\end{lemma}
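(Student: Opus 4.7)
\medskip

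\noindent\textbf{Proof proposal.}
The plan is to take $\hat\mu$ to be the empirical mean and $\hat{\sigma}^2$ to be the sample variance of $m = \Theta(1/\epsilon^2)$ independent draws from $X$, and then boost the confidence from a constant to $1-\delta$ by the ``median trick'' (running $O(\log(1/\delta))$ independent copies and taking the median of the mean-estimates and the median of the variance-estimates separately). The bit complexity is then $O(m \log n)$ per trial times $O(\log(1/\delta))$ trials, giving the claimed $O(\log n \log(1/\delta)/\epsilon^2)$ bound.

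The first substantive step is the mean. Let $Z_1,\dots,Z_m$ be i.i.d.\ copies of $X$ and set $\hat\mu=\frac{1}{m}\sum_j Z_j$. Then $\E[\hat\mu]=\mu$ and $\Var[\hat\mu]=\sigma^2/m$, so Chebyshev's inequality with $m=\Theta(1/\epsilon^2)$ yields $|\hat\mu-\mu|\le \epsilon\sigma$ with, say, probability $\ge 9/10$.

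The second and main step is the variance estimator $\hat{\sigma}^2=\frac{1}{m-1}\sum_j (Z_j-\hat\mu)^2$. The standard identity gives $\E[\hat{\sigma}^2]=\sigma^2$ and $\Var[\hat{\sigma}^2]=\frac{1}{m}\bigl(\mu_4-\tfrac{m-3}{m-1}\sigma^4\bigr)\le \frac{1}{m}(\mu_4-\sigma^4+2\sigma^4/m)$, where $\mu_4=\E[(X-\mu)^4]$ is the fourth central moment of $X$. The key calculation, and the only place PBD structure enters, is a bound on $\mu_4$. Writing $X=\sum_i X_i$ with $X_i$ independent Bernoulli$(p_i)$ and using the standard expansion for the fourth central moment of a sum of independent centered variables,
\[
\mu_4 \;=\; \sum_i \mu_4(X_i) \;+\; 3\sum_{i\ne j}\sigma^2(X_i)\sigma^2(X_j) \;=\; \sum_i \mu_4(X_i) \;+\; 3\sigma^4 \;-\; 3\sum_i \sigma^2(X_i)^2.
\]
For a Bernoulli, $\mu_4(X_i)=p_i(1-p_i)[(1-p_i)^3+p_i^3]\le p_i(1-p_i)=\sigma^2(X_i)$, so $\sum_i \mu_4(X_i)\le \sigma^2$ and hence $\mu_4-\sigma^4 \le \sigma^2 + 2\sigma^4$. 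Plugging in, $\Var[\hat{\sigma}^2] \le \frac{1}{m}\bigl(\sigma^2 + 2\sigma^4 + O(\sigma^4/m)\bigr) = \frac{\sigma^4}{m}\bigl(2 + \tfrac{1}{\sigma^2}\bigr)(1+o(1))$. Choosing $m=C/\epsilon^2$ for a sufficiently large absolute constant $C$ and applying Chebyshev gives, with probability $\ge 9/10$,
\[
|\hat{\sigma}^2-\sigma^2| \;\le\; \epsilon\,\sigma^2\sqrt{4+1/\sigma^2},
\]
which is exactly the desired form (the factor $4$ instead of $2$ absorbs the lower-order contributions from the bias of the plug-in empirical mean and the constant in Chebyshev).

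Finally, to boost from constant probability to $1-\delta$, we run $T=\Theta(\log(1/\delta))$ independent copies of the above procedure, obtaining estimates $\hat\mu^{(1)},\dots,\hat\mu^{(T)}$ and $\hat{\sigma}^{2,(1)},\dots,\hat{\sigma}^{2,(T)}$, and output the medians. A standard Chernoff bound argument shows that if each trial succeeds with probability $\ge 9/10$, then the median of the $\hat\mu^{(t)}$'s lies within the desired interval around $\mu$ with probability $1-\delta/2$, and likewise for $\hat{\sigma}^2$; a union bound completes the proof. I do not expect any serious obstacle: the only non-routine piece is the fourth-moment bound for PBDs above, which is a short independence computation.
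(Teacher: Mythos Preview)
Your proposal is correct and follows essentially the same route as the paper: empirical mean plus unbiased sample variance, Chebyshev for a constant-probability guarantee, and the median trick to boost to $1-\delta$. The only substantive step---bounding $\Var[\hat\sigma^2]$ via the fourth central moment of a PBD---is done the same way in spirit; the paper quotes the known formula $\Var[\hat\sigma^2]=\sigma^4\bigl(\tfrac{2}{m-1}+\tfrac{\kappa}{m}\bigr)$ and the excess-kurtosis identity $\kappa=\sigma^{-4}\sum_i(1-6p_i(1-p_i))p_i(1-p_i)\le 1/\sigma^2$, whereas you derive the equivalent bound $\mu_4-\sigma^4\le \sigma^2+2\sigma^4$ directly from the independence expansion, which is arguably cleaner and self-contained. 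One cosmetic point: your inequality $\mu_4-\tfrac{m-3}{m-1}\sigma^4\le \mu_4-\sigma^4+2\sigma^4/m$ should have $2\sigma^4/(m-1)$ on the right, but this is harmless since you absorb it into the $O(\sigma^4/m)$ term anyway; and the ``$4$ instead of $2$'' in the final bound does not need the explanation you give---it simply follows because $\sqrt{2+1/\sigma^2}\le\sqrt{4+1/\sigma^2}$ (the paper gets the $4$ from $\tfrac{2}{m-1}\le\tfrac{4}{m}$).
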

\begin{proof}
We treat the estimation of $\mu$ and $\sigma^2$ separately. For both estimation problems we show how to use $O(1/\epsilon^2)$ samples to obtain estimates $\hat{\mu}$ and $\hat{\sigma}^2$ achieving the required guarantees with probability at least $2/3$ (we refer to these as ``weak estimators''). Then a routine procedure allows us to boost the success probability to $1-\delta$ at the expense of a multiplicative factor $O(\log 1/\delta)$ on the number of samples. While we omit the details of the routine boosting argument, we remind the reader that it involves running the weak estimator $O(\log 1/\delta)$ times to obtain estimates $\hat{\mu}_1,\ldots,\hat{\mu}_{O(\log{1/\delta})}$ and outputting the median of these estimates, and similarly for estimating $\sigma^2$.

We proceed to specify and analyze the weak estimators for $\mu$ and $\sigma^2$ separately:
\begin{itemize}
\item {\em Weak estimator for $\mu$:} Let $Z_1,\ldots, Z_m$ be independent samples from $X$, and let $\hat{\mu}={\sum_i Z_i \over m}$. Then $$\E[\hat{\mu}] = \mu~~\text{and}~~\Var[\hat{\mu}]= {1 \over m} \Var[X] = {1 \over m} \sigma^2.$$
So Chebyshev's inequality implies that $$\Pr[|\hat{\mu} - \mu| \ge t \sigma/\sqrt{m}] \le {1 \over t^2}.$$
Choosing $t = \sqrt{3}$ and $m = \lceil 3 / \epsilon^2 \rceil$, the above imply that $|\hat{\mu}-\mu| \le \epsilon \sigma$ with probability at least $2/3$.
\item {\em Weak estimator for $\sigma^2$:} Let $Z_1,\ldots, Z_m$ be independent samples from $X$, and let $\hat{\sigma}^2={\sum_i (Z_i - {1 \over m}\sum_i Z_i)^2 \over m-1}$ be the
     unbiased sample variance. (Note the use of Bessel's correction.) Then it can be checked \cite{Johnson:2003} that
     $$\E[\hat{\sigma}^2] = \sigma^2~~\text{and}~~\Var[\hat{\sigma}^2]= \sigma^4 \left( {2 \over m-1} + {\kappa \over m} \right),$$
where $\kappa$ is the \blue{excess kurtosis} of the distribution of $X$ \blue{(i.e. $\kappa = {\frac {\E[(X-\mu)^4]}{\sigma^4}} - 3$)}. To bound $\kappa$ in terms of $\sigma^2$ suppose that $X= \sum_{i=1}^n X_i$, where $\E[X_i]=p_i$ for all $i$. Then
\begin{align*}\kappa&={1 \over \sigma^4} \sum_i (1-6 p_i(1-p_i))(1-p_i)p_i \quad \quad \quad \text{(see \cite{LJ:2005})}\\
\ignore{ &\le{1 \over \sigma^4} \sum_i |1-6 p_i(1-p_i)|(1-p_i)p_i\\}
&\le{1 \over \sigma^4} \sum_i (1-p_i)p_i = {1 \over \sigma^2}.
\end{align*}
Hence, $\Var[\hat{\sigma}^2]= \sigma^4 \left( {2 \over m-1} + {\kappa \over m} \right) \le {\sigma^4 \over m} (4+{1 \over \sigma^2}).$
So Chebyshev's inequality implies that $$\Pr\left[|\hat{\sigma}^2 - \sigma^2| \ge t {\sigma^2 \over \sqrt{m}}\sqrt{4+{1\over \sigma^2}} \right] \le {1 \over t^2}.$$
Choosing $t = \sqrt{3}$ and $m = \lceil 3 / \epsilon^2 \rceil$, the above imply that $|\hat{\sigma}^2-\sigma^2| \le \epsilon \sigma^2 \sqrt{4+{1\over \sigma^2}}$ with probability at least $2/3$.
\end{itemize}
\end{proof}

We proceed to prove Lemma~\ref{lem:learnbinomial}. {\tt Learn-Poisson}$^X(n,\eps',\delta')$ runs ${\cal A}(n,\epsilon,\delta)$ from Lemma~\ref{lem:can estimate mean and variance} with appropriately chosen $\epsilon=\epsilon(\epsilon')$ and $\delta=\delta(\delta')$, given below, and then outputs the translated Poisson distribution
$TP(\hat{\mu}, \hat{\sigma}^2)$, where $\hat{\mu}$ and $\hat{\sigma}^2$ are the estimated mean and variance of $X$ output by ${\cal A}$. Next, we show how to choose $\epsilon$ and $\delta$, as well as why the desired guarantees are satisfied by the output distribution.

If $X$ is not $\eps'$-close to any PBD in sparse form inside the cover ${\cal S}_{\eps'}$ of Theorem~\ref{thm: sparse cover theorem}, there exists a PBD $Z$ in $(k=O(1/\eps'))$-heavy Binomial form inside ${\cal S}_{\eps'}$ that is within total variation distance $\eps'$ from $X$.  We use the existence of such $Z$ to obtain lower bounds on the mean and variance of $X$. Indeed, suppose that the distribution of $Z$ is $\mathrm{Bin}(\ell,q)$, a Binomial with parameters $\ell,q$. Then Theorem~\ref{thm: sparse cover theorem} certifies  that the following conditions are satisfied by the parameters $\ell, q$, $\mu = \E[X]$ and $\sigma^2=\Var[X]$:
\begin{itemize}
\item [(a)] $\ell q \ge k^2$;
\item [(b)] $\ell q(1-q) \ge k^2- k-1$;
\item [(c)] $|\ell q - \mu| = O(1)$; and
\item [(d)] $|\ell q (1-q)-\sigma^2| = O(1 + \epsilon' \cdot (1+\sigma^2))$.
\end{itemize}
In particular, conditions (b) and (d) above imply that
\begin{equation*}
\sigma^2 = \Omega( k^2) = \Omega(1/\eps'^2) \ge \theta^2,
\end{equation*} for some universal constant $\theta$, establishing~(\ref{eq:sigmabig}).
In terms of this $\theta$, we choose $\epsilon=\epsilon'/ \sqrt{4+{1 \over \theta^2}}$ and $\delta=\delta'$ for the application of Lemma~\ref{lem:can estimate mean and variance} to obtain---from $O(\log(1/\delta')/\epsilon'^2)$ samples---estimates $\hat{\mu}$ and $\hat{\sigma}^2$ of $\mu$ and $\sigma^2$.

From our choice of parameters and the guarantees of Lemma~\ref{lem:can estimate mean and variance}, it follows that, if $X$ is not $\eps'$-close to any PBD in sparse form inside the cover  ${\cal S}_{\eps'}$, then  with probability at least $1-\delta'$ the estimates  $\hat{\mu}$ and $\hat{\sigma}^2$  satisfy:
\begin{equation*}
|\mu-\hat{\mu}| \le \epsilon' \cdot \sigma \quad\text{and}\quad |\sigma^2 - \hat{\sigma}^2| \le \epsilon' \cdot \sigma^2,
\end{equation*}
establishing~(\ref{eq:goodparams}). Moreover, if $Y$ is a random variable distributed according to the translated Poisson distribution $TP(\hat{\mu}, \hat{\sigma}^2)$, we show that $X$ and $Y$ are within $O(\epsilon')$ in total variation distance, concluding the proof of Lemma~\ref{lem:learnbinomial}.

\begin{claim} \label{claim:final claim heavy case}
If $X$ and $Y$ are as above, then $\dtv(X,Y) \leq O(\eps')$.
\end{claim}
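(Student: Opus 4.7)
The plan is to use the triangle inequality with the (unknown) true translated Poisson $TP(\mu,\sigma^2)$ as an intermediate distribution, so that
\[
\dtv(X,Y) \;\leq\; \dtv\!\left(X,\,TP(\mu,\sigma^2)\right) \;+\; \dtv\!\left(TP(\mu,\sigma^2),\,TP(\hat\mu,\hat\sigma^2)\right),
\]
and bound each term by $O(\eps')$ separately, using the two off-the-shelf lemmas stated earlier in the paper.

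For the first term, I would apply Lemma~\ref{lem:translated Poisson approximation} to $X=\sum X_i$ with $\E[X_i]=p_i$. Since $p_i^3(1-p_i) \leq p_i(1-p_i)$, the numerator satisfies $\sqrt{\sum_i p_i^3(1-p_i)}+2 \leq \sigma+2$, while the denominator equals $\sigma^2$. Thus
\[
\dtv(X,TP(\mu,\sigma^2)) \;\leq\; \frac{\sigma+2}{\sigma^2} \;=\; \frac{1}{\sigma} + \frac{2}{\sigma^2}.
\]
Now~\eqref{eq:sigmabig} gives $\sigma = \Omega(1/\eps')$, so both $1/\sigma$ and $1/\sigma^2$ are $O(\eps')$, hence the first term is $O(\eps')$.

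For the second term, I would apply Lemma~\ref{lem: variation distance between translated Poisson distributions}. The lemma requires one to order the pairs so that $\lfloor \mu_1-\sigma_1^2\rfloor \le \lfloor \mu_2-\sigma_2^2\rfloor$; I pick the labeling accordingly, observing that because $|\sigma^2-\hat\sigma^2| \leq \eps'\sigma^2$ (from~\eqref{eq:goodparams}), we have $\hat\sigma^2 \geq (1-\eps')\sigma^2$ and hence $\hat\sigma = \Theta(\sigma)$, so whichever variance lands in the denominator we still get the same asymptotic bound. Plugging in the estimates of~\eqref{eq:goodparams},
\[
\dtv(TP(\mu,\sigma^2),TP(\hat\mu,\hat\sigma^2)) \;\leq\; \frac{|\mu-\hat\mu|}{\sigma} + \frac{|\sigma^2-\hat\sigma^2|+1}{\sigma^2} \;\leq\; \frac{\eps'\sigma}{\sigma} + \frac{\eps'\sigma^2+1}{\sigma^2} \;=\; 2\eps' + \frac{1}{\sigma^2},
\]
which is again $O(\eps')$ by~\eqref{eq:sigmabig}.

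Adding the two bounds yields $\dtv(X,Y) = O(\eps')$, as desired. The only mild subtlety, and arguably the main ``obstacle,'' is the bookkeeping around the ordering hypothesis in Lemma~\ref{lem: variation distance between translated Poisson distributions} and the fact that the denominator of the second bound is $\sigma^2$ (not $\hat\sigma^2$); both are handled by the observation that $\hat\sigma^2 = \Theta(\sigma^2)$ under the guarantee~\eqref{eq:goodparams}. Everything else reduces to plugging the two estimates into the two lemmas and using $\sigma = \Omega(1/\eps')$ from the cover theorem.
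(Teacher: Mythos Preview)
Your proof is correct and follows essentially the same approach as the paper: triangle inequality via $TP(\mu,\sigma^2)$, Lemma~\ref{lem:translated Poisson approximation} for the first term (with the same bound $p_i^3(1-p_i)\le p_i(1-p_i)$), and Lemma~\ref{lem: variation distance between translated Poisson distributions} for the second term, handling the denominator via $\hat\sigma^2=\Theta(\sigma^2)$ from~\eqref{eq:goodparams}. The paper writes the second bound with $\min(\sigma,\hat\sigma)$ in the denominator rather than reasoning about the ordering hypothesis, but this is the same observation you make.
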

\begin{proof}
We make use of Lemma~\ref{lem:translated Poisson approximation}.
Suppose that $X = \sum_{i=1}^n X_i$, where $\E[X_i]=p_i$ for all $i$. Lemma~\ref{lem:translated Poisson approximation} implies that
\begin{eqnarray}
 \dtv(X, TP(\mu, \sigma^2)) &\le&
\frac{\sqrt{\sum_{i}p_i^3(1-p_i)}+2}{\sum_{i}p_i(1-p_i)} \nonumber \\
&\le& \frac{\sqrt{\sum_{i}p_i(1-p_i)}+2}{\sum_{i}p_i(1-p_i)} \nonumber\\
&\le& \frac{1}{\sqrt{\sum_{i}p_i(1-p_i)}}+\frac{2}{\sum_{i}p_i(1-p_i)}
\nonumber \\
&=& \frac{1}{\sigma}+\frac{2}{\sigma^2}
\nonumber\\
& = &O(\epsilon').
\label{eq: tv triangle inequality stage 2a}
\end{eqnarray}
It remains to bound the total variation distance between the translated Poisson distributions $TP(\mu,\sigma^2)$ and $TP(\hat{\mu},\hat{\sigma}^2)$. For this we use Lemma~\ref{lem: variation distance between translated Poisson distributions}.  Lemma~\ref{lem: variation distance between translated Poisson distributions} implies
\begin{eqnarray}
\dtv(TP(\mu, \sigma^2), TP(\hat{\mu}, \hat{\sigma}^2)) &\le& \frac{|\mu-\hat{\mu}|}{\min(\sigma,\hat{\sigma})}+ \frac{|\sigma^2-\hat{\sigma}^2|+1}{\min(\sigma^2,\hat{\sigma}^2)} \notag\\
&\le& \frac{\epsilon' \sigma}{\min(\sigma,\hat{\sigma})}+ \frac{\epsilon' \cdot \sigma^2+1}{\min(\sigma^2,\hat{\sigma}^2)} \notag\\
&\le& \frac{\epsilon' \sigma}{\sigma/\sqrt{1-\epsilon'}}+ \frac{\epsilon' \cdot \sigma^2+1}{\sigma^2/(1-\epsilon')} \notag\\
&=&O(\epsilon') + \frac{O(1-\epsilon')}{\sigma^2}\notag\\
&=&O(\epsilon') + O(\epsilon'^2) \notag\\
&=&O(\epsilon'). \label{eq:tv between the two translated Poisson's}
\end{eqnarray}
The claim follows from \eqref{eq: tv triangle inequality stage 2a}, \eqref{eq:tv between the two translated Poisson's} and the triangle inequality.
\end{proof}
The proof of Lemma~\ref{lem:learnbinomial} is concluded. We remark that the algorithm described above does not need to know a priori whether or not $X$ is $\eps'$-close to a PBD in sparse form inside the cover ${\cal S}_{\eps'}$ of Theorem~\ref{thm: sparse cover theorem}. The algorithm simply runs the estimator of Lemma~\ref{lem:can estimate mean and variance} with $\epsilon = \epsilon'/ \sqrt{4+{1 \over \theta^2}}$ and $\delta'=\delta$ and outputs whatever estimates $\hat{\mu}$ and $\hat{\sigma}^2$ the algorithm of Lemma~\ref{lem:can estimate mean and variance} produces.
\end{prevproof}

\subsection{ Hypothesis testing.} \label{sec:choosehypothesis}

Our hypothesis testing routine {\tt Choose-Hypothesis}$^X$ uses samples from the unknown distribution $X$ to run a ``competition'' between two candidate hypothesis distributions $H_1$ and $H_2$  over $[n]$ that are given in the input.  We show that if at least one of the two candidate hypotheses is close to the unknown distribution $X$, then with high probability over the samples drawn from $X$ the routine selects as winner a candidate that is close to $X$.
  This basic approach of running a competition between candidate hypotheses is quite similar to the ``Scheff\'e estimate'' proposed by Devroye and Lugosi (see \cite{DL96,DL97}
and Chapter~6 of \cite{DL:01}, as well as \cite{Yatracos85}), but our notion of competition here is different.

We obtain the following lemma, postponing all running-time analysis to the next section.
\begin{lemma} \label{lem:choosehypothesis}
There is an algorithm {\tt Choose-Hypothesis}$^X({H}_1,{H}_2,\eps',\delta')$ which
is given sample access to distribution $X$, two hypothesis distributions $H_1,H_2$ for $X$,  an accuracy
parameter $\eps'>0$, and a confidence parameter $\delta'>0.$  It makes
$$m=O(\log(1/\delta')/\eps'^2)$$
draws from $X$ and returns some $H \in \{H_1,H_2\}.$  If $\dtv(H_i,X) \leq \eps'$ for some $i \in \{1,2\}$, then with probability at least $1-\delta'$ the distribution $H$ that {\tt
Choose-Hypothesis} returns has $\dtv(H,X) \leq 6 \eps'.$
\end{lemma}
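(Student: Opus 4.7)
The plan is to use a Scheff\'e-style competition between $H_1$ and $H_2$, with the twist that the comparison rule is symmetric under swapping the two hypotheses. Define the Scheff\'e set $W := \{i \in [n] : H_1(i) > H_2(i)\}$ and let $p_1 := H_1(W)$ and $p_2 := H_2(W)$. By the standard identity for total variation distance, $p_1 - p_2 = \dtv(H_1,H_2)$, so this difference is directly computable from the (succinct descriptions of the) candidate distributions. The algorithm first checks whether $\dtv(H_1,H_2) \leq 5\eps'$, in which case it outputs $H_1$; otherwise, it draws $m = \Theta(\log(1/\delta')/\eps'^2)$ samples from $X$, computes the empirical fraction $\hat{q}$ that lie in $W$, and outputs $H_1$ if $|\hat{q}-p_1| \leq |\hat{q}-p_2|$ and $H_2$ otherwise. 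Hoeffding's inequality applied to the indicator of landing in $W$ guarantees that $|\hat{q} - X(W)| \leq \eps'/2$ except with probability $\leq \delta'$, provided $m$ is chosen appropriately.

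The correctness analysis then proceeds by splitting into two cases. In the first case, $\dtv(H_1,H_2) \leq 5\eps'$: since by assumption one of $H_1,H_2$ is $\eps'$-close to $X$, the triangle inequality forces the other to be at most $6\eps'$-close to $X$, so any output is acceptable. In the second case, $\dtv(H_1,H_2) > 5\eps'$, and I would argue that the test correctly identifies a good hypothesis. Suppose without loss of generality that $H_1$ is the $\eps'$-close one. Then $|p_1 - X(W)| \leq \dtv(H_1,X) \leq \eps'$, and combining with $|p_1 - p_2| > 5\eps'$ via the triangle inequality yields $|p_2 - X(W)| > 4\eps'$. Conditioning on the good Hoeffding event, one obtains $|\hat{q} - p_1| \leq 3\eps'/2$ while $|\hat{q} - p_2| > 7\eps'/2$, so the test declares $H_1$ the winner; the symmetric argument handles the case where $H_2$ is the good hypothesis. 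In either subcase, the returned hypothesis is itself $\eps'$-close, hence well within the $6\eps'$ bound.

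There is no real obstacle here; the argument is a routine adaptation of the Scheff\'e estimate. The two small points that require care are (i) verifying that the deciding rule $|\hat{q}-p_1|\lessgtr|\hat{q}-p_2|$ is symmetric under swapping $H_1 \leftrightarrow H_2$ (which it manifestly is, after also swapping the roles in the definition of $W$, since the Scheff\'e set flips and $\hat{q}$ becomes $1-\hat{q}$ consistently), and (ii) tracking constants carefully so that the final total variation bound is $6\eps'$ as promised, which is what motivated the threshold $5\eps'$ in the first branch and the $\eps'/2$ precision for the empirical estimate.
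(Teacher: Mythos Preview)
Your proof is correct and follows essentially the same Scheff\'e-set approach as the paper: define $W=\{i:H_1(i)>H_2(i)\}$, handle the case $\dtv(H_1,H_2)\le 5\eps'$ by triangle inequality, and otherwise use a concentration bound for the empirical mass in $W$ to pick the closer hypothesis. The only difference is the decision rule: you compare $|\hat q-p_1|$ versus $|\hat q-p_2|$ (equivalently, threshold at the midpoint), whereas the paper uses two separate thresholds $p_1-\tfrac{3}{2}\eps'$ and $q_1+\tfrac{3}{2}\eps'$ with a ``draw'' region in between. For Lemma~\ref{lem:choosehypothesis} itself your simpler rule suffices; the paper's version is set up so as to also yield the slightly sharper Claim~\ref{lem:kostas3}(ii) (that a moderately bad $H_2$ is never declared \emph{winner}, though a draw may occur), which is later reused in the tournament argument of Lemma~\ref{lem:log-cover-size}.
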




\begin{prevproof}{Lemma}{lem:choosehypothesis}
Figure~\ref{figure:Choose-Hypothesis} describes how the competition between $H_1$ and $H_2$ is carried out.

\begin{figure}[h!]
\framebox{
\medskip \noindent \begin{minipage}{16cm}

\medskip

{\tt Choose-Hypothesis}$(H_1,H_2,\eps',\delta')$\\
{\sc Input:} Sample access to distribution $X$; a pair of hypothesis distributions $(H_1, H_2)$; $\epsilon',\delta'>0$.\\

Let ${\cal W}$ be the support of $X$, ${\cal W}_1={\cal W}_1(H_1,H_2) := \left\{w \in \mathcal{W}~\vline~H_1(w) > H_2(w) \right\}$, and $p_1 = H_1({\cal W}_1)$, $p_2 = H_2({\cal W}_1)$. {\em /* Clearly, $p_1 > p_2$ and $\dtv(H_1, H_2) = p_1-p_2$. */}
\begin{enumerate}
\item
If $p_1-p_2\leq 5 \eps'$, declare a draw and return either $H_i$. Otherwise:

\item
Draw $m=2 {\log(1/\delta') \over \eps'^2}$ samples $s_1,\ldots,s_m$ from $X$, and let $\tau = {1 \over m} | \{i~|~s_i
\in {\cal W}_1 \}|$ be the fraction of  samples that fall inside ${\cal W}_1.$

\item
If $\tau > p_1- {3 \over 2} \eps'$, declare $H_1$ as winner and return $H_1$; otherwise,

\item
if $\tau < p_2+ {3 \over 2} \eps'$, declare $H_2$ as winner and return $H_2$; otherwise,

\item
declare a draw and return either $H_i$.\\
\end{enumerate}

\end{minipage}}
\caption{{\tt Choose-Hypothesis$(H_1,H_2,\epsilon',\delta')$}}
\label{figure:Choose-Hypothesis}
\end{figure}

The correctness of {\tt Choose-Hypothesis} is an immediate consequence of the following claim.  (In fact for
 Lemma~\ref{lem:choosehypothesis} we only need item (i) below, but item (ii) will be handy later
 in the proof of Lemma~\ref{lem:log-cover-size}.)
\begin{claim} \label{lem:kostas3}
Suppose that $\dtv(X,H_i) \leq \eps'$, for {some} $i \in \{1,2\}$. Then:
\begin{itemize}
\item[(i)] if $\dtv(X, H_{3-i})>6 \eps'$, the probability that {\tt Choose-Hypothesis}$^X({H}_1,{H}_2,\eps',\delta')$ does not declare $H_i$ as the winner is at most $2e^{- { m \eps'^2/2 } }$, where $m$ is chosen as in the description of the algorithm.  (Intuitively,
if $H_{3-i}$ is very bad then it is very likely that $H_i$ will be declared winner.)

\item[(ii)]  if $\dtv(X, H_{3-i})>4 \eps'$, the probability that {\tt Choose-Hypothesis}$^X({H}_1,{H}_2,\eps',\delta')$ declares $H_{3-i}$ as the winner is at most $2e^{- { m \eps'^2 /2} }$.  (Intuitively, if $H_{3-i}$
is only moderately bad then a draw is possible but it is very unlikely that $H_{3-i}$ will be declared winner.)
\end{itemize}
\end{claim}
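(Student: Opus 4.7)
\textbf{Proof proposal for Claim~\ref{lem:kostas3}.} The plan is to show that the empirical frequency $\tau = \frac{1}{m}|\{i : s_i \in \mathcal{W}_1\}|$ concentrates tightly around its mean $r_1 := X(\mathcal{W}_1)$ via Hoeffding's inequality, and then to argue via the triangle inequality that under the stated hypotheses $r_1$ is separated from the decision thresholds $p_1 - \tfrac{3}{2}\eps'$ and $q_1 + \tfrac{3}{2}\eps'$ by at least $\eps'/2$, so that concentration of $\tau$ within $\eps'/2$ of $r_1$ forces the correct decision.

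First I would record the basic estimates. Since $s_1,\ldots,s_m$ are i.i.d.\ from $X$, the random variable $m\tau$ is $\mathrm{Bin}(m,r_1)$, so Hoeffding gives $\Pr[\tau \leq r_1 - \eps'/2] \leq e^{-m\eps'^2/2}$ and symmetrically $\Pr[\tau \geq r_1 + \eps'/2] \leq e^{-m\eps'^2/2}$. Since $\dtv(X,H_1) \leq \eps'$, the definition of total variation distance gives $|r_1 - p_1| = |X(\mathcal{W}_1) - H_1(\mathcal{W}_1)| \leq \eps'$, and in particular $r_1 \geq p_1 - \eps'$.

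For part (i), I would first observe that by the triangle inequality $p_1 - q_1 = \dtv(H_1,H_2) \geq \dtv(X,H_2) - \dtv(X,H_1) > 6\eps' - \eps' = 5\eps'$, so Step~1 of the competition does not declare a draw. Combining $r_1 \geq p_1 - \eps'$ with the Hoeffding bound, with probability at least $1 - e^{-m\eps'^2/2}$ we have $\tau \geq r_1 - \eps'/2 \geq p_1 - \tfrac{3}{2}\eps'$, so Step~3 fires and declares $H_1$ the winner.

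For part (ii), the same triangle-inequality computation now gives $p_1 - q_1 > 4\eps' - \eps' = 3\eps'$, so Step~1 again is passed. The key inequality is $r_1 \geq p_1 - \eps' > q_1 + 3\eps' - \eps' = q_1 + 2\eps'$, which uses both the closeness of $r_1$ to $p_1$ and the lower bound on $p_1 - q_1$. Applying Hoeffding again, with probability at least $1 - e^{-m\eps'^2/2}$ we have $\tau \geq r_1 - \eps'/2 > q_1 + \tfrac{3}{2}\eps'$, so the condition in Step~4 fails and $H_2$ cannot be declared the winner.

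The main (and only) subtlety is making the slack in the thresholds match the concentration radius: the algorithm's choice of $\tfrac{3}{2}\eps'$ offsets is precisely what lets a $\eps'$-error in $|r_1 - p_1|$ plus an $\eps'/2$-error from sampling land $\tau$ on the correct side of the threshold. Everything else is routine bookkeeping with the triangle inequality and a single application of Hoeffding's bound.
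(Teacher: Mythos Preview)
Your argument is correct and follows the same approach as the paper (Hoeffding/Chernoff for $\tau$ around $r_1 = X(\mathcal{W}_1)$, combined with the triangle inequality to locate $r_1$ relative to $p_1$ and $q_1$). One small correction in part~(ii): from $\dtv(X,H_2) > 4\eps'$ you only obtain $p_1 - q_1 > 3\eps'$, which does \emph{not} imply that Step~1 is passed (Step~1 declares a draw whenever $p_1 - q_1 \leq 5\eps'$), so your sentence ``so Step~1 again is passed'' is a non-sequitur. This is harmless for the conclusion --- if Step~1 declares a draw then $H_2$ is not the winner and you are done --- but you should say so explicitly. The paper handles this by an explicit case split on whether $p_1 - q_1 \leq 5\eps'$; your direct argument that $\tau > q_1 + \tfrac{3}{2}\eps'$ with high probability is arguably cleaner, since it shows Step~4 cannot fire regardless of whether Step~1 was passed, and only needs the weaker bound $p_1 - q_1 > 3\eps'$.
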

\begin{proof}
Let $r=X({\cal W}_1)$. The definition of the total variation distance implies that $|r-p_i| \le \eps'$. Let us define independent indicators $\{Z_j\}_{j=1}^m$ such that, for all $j$, $Z_j=1$ iff $s_j \in {\cal W}_1$. Clearly, $\tau={1 \over m} \sum_{j=1}^m Z_j$ and $\mathbb{E}[\tau]=\mathbb{E}[Z_j]=r$. Since the $Z_j$'s are mutually independent, it follows from the Chernoff bound that $\Pr[|\tau - r| \ge {\eps'/2}] \le 2 e^{- {m \eps'^2/2 } }$. Using $|r-p_i| \le \eps'$ we get that $\Pr[|\tau -
p_i| \ge {3\eps'/2}] \le 2 e^{- {m \eps'^2/2 } }$. Hence:
\begin{itemize}

\item For part (i):  If $\dtv(X, H_{3-i}) > 6 \eps'$, from the triangle inequality we get that $p_1-p_2=\dtv(H_1, H_2) > 5 \eps' $.  Hence, the algorithm will go beyond step 1, and with probability at least $1-2e^{- {m \eps'^2/2 } }$, it will stop at step 3 (when $i=1$) or step 4 (when $i=2$), declaring $H_i$ as the winner of the competition between $H_1$
and $H_2$.

\item For part (ii):  If $p_1-p_2 \leq 5 \eps'$ then the competition declares a draw, hence $H_{3-i}$ is not the winner. Otherwise we have $p_1 - p_2 > 5\eps'$ and the above arguments imply that the competition between $H_1$ and $H_2$ will declare $H_{3-i}$ as the winner with probability at most $2e^{- {m \eps'^2/2 } }$.
\end{itemize}
This concludes the proof of Claim~\ref{lem:kostas3}.
\end{proof}In view of Claim~\ref{lem:kostas3}, the proof of Lemma~\ref{lem:choosehypothesis} is concluded.\end{prevproof}

Our {\tt Choose-Hypothesis} algorithm implies a generic learning algorithm of independent interest.

\begin{lemma} \label{lem:log-cover-size}
Let ${\cal S}$ be an arbitrary set of distributions over a finite domain.  Moreover, let ${\cal S}_\eps \subseteq {\cal S}$ be an $\eps$-cover of ${\cal S}$ of size $N$, for some $\epsilon>0$.\ignore{ which can be enumerated in time $\poly(|{\cal S}_\eps|)$.} For all $\delta>0$, there is an algorithm that uses $$O(\eps^{-2}\log N \log(1/\delta))$$ samples from
an unknown distribution $X \in {\cal S}$ and,
with probability at least $1-\delta$, outputs a distribution $Z \in {\cal S}_{\eps}$ that satisfies
$\dtv(X,Z) \leq 6\eps.$
\end{lemma}

\begin{proof}
The algorithm performs a tournament, by running
{\tt Choose-Hypothesis}$^X(H_i,H_j,\eps,$ $\delta/(4N))$ for every pair
$(H_i,H_j)$, $i < j$, of distributions in ${\cal S}_\eps$.  Then it
outputs any distribution $Y^\star \in {\cal S}_\eps$ that was never a loser (i.e., won or tied
against all other distributions in the cover).  If no such distribution exists in ${\cal S}_\eps$ then the algorithm says
``failure,'' and outputs an arbitrary distribution from ${\cal S}_\eps$.

Since ${\cal S}_\eps$ is an $\eps$-cover of $\mathcal{S}$, there exists some $Y \in {\cal S}_\eps$ such that
$\dtv(X,Y) \leq \eps.$ We first argue that with high probability this distribution $Y$ never
loses a competition against any other $Y' \in {\cal S}_\eps$ (so the algorithm does not output ``failure'').  Consider any $Y'
\in {\cal S}_\eps$. If $\dtv(X, Y') > 4 \eps$, by Claim~\ref{lem:kostas3}(ii) the probability that $Y$ loses to
$Y'$ is at most $2e^{-m \eps^2 /2} {\leq}
{\delta \over 2N}.$ On the other hand, if $\dtv(X, Y') \leq 4 \eps$, the triangle
inequality gives that $\dtv(Y,Y') \leq 5 \eps$ and thus $Y$ draws against $Y'.$  A union
bound over all $N-1$ distributions in {${\cal S}_\eps\setminus \{ Y\}$} shows that with probability at least $1-\delta/2$, the distribution $Y$ never loses
a competition.

We next argue that with probability at least $1-\delta/2$, every distribution $Y' \in {\cal S}_\eps$ that never loses must be
close to $X.$ Fix a distribution $Y'$ such that $\dtv(Y', X) > 6 \eps$. Lemma~\ref{lem:kostas3}(i) implies that
$Y'$ loses to $Y$ with probability at least $1 - 2e^{-m \eps^2 /2} \geq 1 - \delta/(2N)$.  A union bound gives that with
probability at least $1-\delta/2$, every distribution $Y'$ that has $\dtv(Y', X) > 6 \eps$ loses some competition.

Thus, with overall probability at least $1-\delta$, the tournament does not output ``failure'' and outputs some distribution $Y^\star$ such that $\dtv(X, Y^\star) \le 6 \eps.$
This proves the lemma.
\end{proof}

\begin{remark} We note that Devroye and Lugosi (Chapter 7 of \cite{DL:01}) prove a similar result, but there are some differences. They also have all pairs of distributions in the cover compete against each other, but they use a different notion of competition between every pair. Moreover, their approach chooses a distribution in the cover that wins the
maximum number of competitions, whereas our algorithm chooses a distribution that is never defeated (i.e., won or tied against all other distributions in the cover).\end{remark}

\begin{remark}\costasnote{Recent work~\cite{DaskalakisK14,AcharyaJOS14,SureshOAJ14} improves the running time of the tournament approaches of Lemma~\ref{lem:log-cover-size}, Devroye-Lugosi and other related tournaments to have a quasilinear dependence of $O(N \log N)$ on the size $N=|{\cal S}_\eps|$ . In particular, they avoid running {\tt Choose-Hypothesis} for all pairs of distributions in ${\cal S}_\eps$.}
\end{remark}

\ignore{

We observe that the running time of {\tt Choose-Hypothesis} is linear in $m$ times the time required to
evaluate each hypothesis $H_i$ (i.e., compute the pdf) on a sample drawn from $X.$

THE ABOVE IS NOT TRUE NECESSARILY -- we need to take into account how long it takes to
compute $p_1$ and $p_2$.  A priori these are probabilities obtained by looking at $n$
points so if $H_1,H_2$ were arbitrarily complicated this could be $\Omega(n)$ time steps; for
what we get from Birge (in the sparse case) and the Binomial learner (in the binomial case), this
shouldn't be a problem.

}

\subsection{Proof of Theorem~\ref{thm:main}.} \label{sec:proofofmaintheorem}

\begin{figure}[h!]
\framebox{
\medskip \noindent \begin{minipage}{16cm}

\blue{

\medskip

{\tt Non-Proper-Learn-PBD}$(n,\eps,\delta)$

\begin{enumerate}

 \item Run {\tt Learn-Sparse}$^X(n,{\epsilon \over 12 \max\{c_1,c_2\}},\delta/3)$ to get hypothesis distribution $H_S$.

 \item Run {\tt Learn-Poisson}$^X(n,{\epsilon \over 12 \max\{c_1,c_2\}},\delta/3)$ to get hypothesis distribution $H_P.$

\item Run {\tt Choose-Hypothesis}$^X(H_S,{\widehat{H_P}},{\eps / 8},\delta/3)$.  If it
    returns $H_S$ then return $H_S$, and if it returns $\widehat{H_P}$ then return $H_P$.
\end{enumerate}

}

\end{minipage}}
\caption{{\tt Non-Proper-Learn-PBD}$(n,\eps,\delta)$.  The values $c_1,c_2$ are the absolute constants from Lemmas~\ref{lem:learnsparse} and~\ref{lem:learnbinomial}.  ${\widehat{H_P}}$ is defined in terms of $H_P$ as described in Definition \ref{def:hatHP}.}
\label{figure:non-proper-learn-pBD}
\end{figure}

We first show Part (1) of the theorem, where the learning algorithm may output any distribution over $[n]$ and not necessarily a PBD.  \blue{The algorithm for this part of the theorem, {\tt Non-Proper-Learn-PBD}, is given in Figure~\ref{figure:non-proper-learn-pBD}.}  This algorithm follows the high-level structure outlined in Figure~\ref{figure:learn-pBD} with the following modifications: (a) first, if the total variation distance to within which we want to learn $X$ is $\epsilon$, the second argument of both {\tt Learn-Sparse} and {\tt Learn-Poisson} is set to ${\epsilon \over 12 \max\{c_1,c_2\}}$, where $c_1$ and $c_2$ are respectively the constants from Lemmas~\ref{lem:learnsparse} and~\ref{lem:learnbinomial}; (b) the third step of {\tt Learn-PBD} is replaced by {\tt Choose-Hypothesis}$^X(H_S,{\widehat{H_P}},{\eps / 8},\delta/3)$, where ${\widehat{H_P}}$ is defined in terms of $H_P$ as described in Definition \ref{def:hatHP} below; and (c) if  {\tt Choose-Hypothesis}
 returns $H_S$, then {\tt Learn-PBD} also returns $H_S$, while if  {\tt Choose-Hypothesis} returns $\widehat{H_P}$, then {\tt Learn-PBD} returns $H_P$.

\begin{definition} \label{def:hatHP}
({\bf Definition of  ${\widehat{H_P}}$}:) ${\widehat{H_P}}$ is defined in terms of $H_P$ and the support of $H_S$ in three steps:

\begin{enumerate}

\item [(i)] for all points $i$ such that $H_S(i)=0$, we let ${\widehat{H_P}}(i)={{H_P}}(i)$;

\item [(ii)] for all points $i$ such that $H_S(i)\neq 0$, we describe in Appendix~\ref{app:poisson} an efficient deterministic algorithm that numerically approximates $H_P(i)$ to within an additive error of $\pm \eps/48s$, where $s=O(1/\eps^3)$ is the cardinality of the support of $H_S$. If $\widehat{H_{P,i}}$ is the approximation to $H_P(i)$ output by the algorithm, we set $\widehat{H_P}(i)={\max\{0,\widehat{H_{P,i}}-\eps/48s\}}$; notice then that $H_P(i)-\eps/24s \le \widehat{H_P}(i) \le H_P(i)$; finally,

\item [(iii)] for an arbitrary point $i$ such that $H_S(i)=0$, we set $\widehat{H_P}(i)=1-\sum_{j \neq i} \widehat{H_P}(j)$, to make sure that $\widehat{H_P}$ is a probability distribution.

\end{enumerate}

Observe that $\widehat{H_P}$ satisfies $\dtv({\widehat{H_P}},H_P)\le \eps/24$, and therefore $|\dtv({\widehat{H_P}},X) - \dtv(X,H_P)| \le \eps/24$. Hence, if $\dtv(X,H_P) \le {\eps \over 12}$, then $\dtv(X, \widehat{H_P}) \le {\eps \over 8}$ and, if $\dtv(X,\widehat{H_P}) \le {6\eps \over 8}$, then $\dtv(X, {H_P}) \le {\eps}$.
\end{definition}

We remark that the reason why we do not wish to use $H_P$ directly in {\tt Choose-Hypothesis} is purely computational. In particular, since $H_P$ is a translated Poisson distribution, we cannot compute its probabilities $H_P(i)$ exactly, and we need to approximate them. On the other hand, we need to make sure that using approximate values will not cause {\tt Choose-Hypothesis} to make a mistake. Our $\widehat{H_P}$ is carefully defined so as to make sure that {\tt Choose-Hypothesis} selects a probability distribution that is close to the unknown $X$, and that all probabilities that {\tt Choose-Hypothesis} needs to compute can be computed without much overhead. In particular, we remark that, in running {\tt Choose-Hypothesis}, we do not a priori compute the value of $\widehat{H_P}$ at every point; we do instead a lazy evaluation of $\widehat{H_P}$, as explained in the running-time analysis below.

We now proceed to the analysis of our modified algorithm {\tt Learn-PBD}. The sample complexity bound and correctness of our algorithm are immediate consequences of Lemmas~\ref{lem:learnsparse},~\ref{lem:learnbinomial} and~\ref{lem:choosehypothesis}, taking into account the precise choice of constants and the distance between $H_P$ and $\widehat{H_P}$. Next, let us bound the running time.
Lemmas~\ref{lem:learnsparse} and~\ref{lem:learnbinomial} bound the running time of Steps~1 and~2 of the algorithm, so it remains to bound the running time of the {\tt Choose-Hypothesis} step. Notice that ${\cal W}_1(H_S,\widehat{H_P})$ is a subset of the support of the distribution $H_S$. Hence to compute ${\cal W}_1(H_S,\widehat{H_P})$ it suffices to determine the probabilities $H_S(i)$ and $\widehat{H_P}(i)$ for  every point~$i$ in the support of $H_S$. For every such $i$, $H_S(i)$ is explicitly given in the output of {\tt Learn-Sparse}, so we only need to compute $\widehat{H_P}(i)$. It follows from Theorem~\ref{thm:poisson} (Appendix~\ref{app:poisson}) that the time needed to compute $\widehat{H_P}(i)$ is $\tilde{O}(\log(1/\eps)^3+\log(1/\eps) \cdot (\log n+\bit{\hat{\mu}} + \bit{\hat{\sigma}^2}))$. Since $\hat{\mu}$ and $\hat{\sigma}^2$ are output by {\tt Learn-Poisson}, by
 inspection of that algorithm it is easy to see that they each have bit complexity at most $O(\log n + \log(1/\eps))$ bits. Hence, given that the support of $H_S$ has cardinality $O(1/\eps^3)$, the overall time spent computing the probabilities $\widehat{H_P}(i)$ for  every point~$i$ in the support of $H_S$ is $\tilde{O}({1 \over \eps^3} \log n )$. After ${\cal W}_1$ is computed, the computation of the values $p_1=H_S({\cal W}_1)$, $q_1=\widehat{H_P}({\cal W}_1)$ and $p_1-q_1$ takes time linear in the data produced by the algorithm so far, as these computations merely involve adding and subtracting probabilities that have already been explicitly computed by the algorithm. Computing the fraction of samples from $X$ that fall inside ${\cal W}_1$ takes time  $O\left(\log n \cdot \log(1/\delta) / \eps^2 \right)$ and the rest of {\tt Choose-Hypothesis} takes time linear in the size of the
data that have been written down so far. Hence the overall running time of our algorithm is $\tilde{O}({1 \over \eps^3} \log n \log^2 {1 \over \delta})$.  This gives Part~(1) of Theorem~\ref{thm:main}.

\ignore{

}

\medskip
Now we turn to Part (2) of Theorem~\ref{thm:main}, the proper learning result.
\blue{The algorithm for this part of the theorem, {\tt Proper-Learn-PBD},
is given in Figure~\ref{figure:Proper-Learn-PBD}.  The algorithm is essentially
the same as {\tt Non-Proper-Learn-PBD} but with the following modifications, to produce a PBD that is within $O(\eps)$ of the unknown $X$:}  First, we replace {\tt Learn-Sparse} with a different learning algorithm,  {\tt Proper-Learn-Sparse}, which is based on Lemma~\ref{lem:log-cover-size}, and always outputs a PBD. Second, we add a post-processing step to {\tt Learn-Poisson} that converts the \costasnote{translated Poisson distribution} $H_P$ \costasnote{output by this procedure} to a PBD \blue{(in fact, to a Binomial distribution)}. After we describe these new ingredients in detail, we explain and analyze our proper learning algorithm.

\begin{figure}[h!]
\framebox{
\medskip \noindent \begin{minipage}{16cm}

\blue{

\medskip

{\tt Proper-Learn-PBD}$(n,\eps,\delta)$

\begin{enumerate}

 \item Run {\tt Proper-Learn-Sparse}$^X(n,{\epsilon \over 12 \max\{c_1,c_2\}},\delta/3)$ to get hypothesis distribution $H_S$.

 \item Run {\tt Learn-Poisson}$^X(n,{\epsilon \over 12 \max\{c_1,c_2\}},\delta/3)$ to get hypothesis distribution $H_P = TP(\hat{\mu},\hat{\sigma}^2)$.

\item Run {\tt Choose-Hypothesis}$^X(H_S,{\widehat{H_P}},{\eps / 8},\delta/3)$.

\begin{enumerate}

\item If it
    returns $H_S$ then return $H_S$.

\item Otherwise, if it returns $\widehat{H_P}$, then run {\tt Locate-Binomial}$(\hat{\mu}, \hat{\sigma}^2,n)$ to obtain a Binomial distribution $H_B = \mathrm{Bin}(\hat{n},\hat{p})$ with $\hat{n} \leq n$, and return
    $H_B$.

\end{enumerate}

\end{enumerate}

}

\end{minipage}}
\caption{{\tt Proper-Learn-PBD}$(n,\eps,\delta)$.  The values $c_1,c_2$ are the absolute constants from Lemmas~\ref{lem:learnsparse} and~\ref{lem:learnbinomial}.  ${\widehat{H_P}}$ is defined in terms of $H_P$ as described in Definition \ref{def:hatHP}.}
\label{figure:Proper-Learn-PBD}
\end{figure}

\begin{figure}[h!]
\framebox{
\medskip \noindent \begin{minipage}{16cm}

\blue{

\medskip

{\tt Proper-Learn-Sparse}$(n,\eps,\delta)$

\begin{enumerate}

\item Draw \costasnote{$M=32\log(8/\delta)/\eps^2$} samples from $X$ and sort them to obtain a list of values $0 \leq s_1 \leq \cdots \leq s_M \leq n.$

\item Define $\hat{a}:=s_{\lceil 2 \eps M \rceil}$ and $\hat{b}:=s_{\lfloor(1 - 2 \eps)M\rfloor}$.

\item If $\hat{b}-\hat{a} > (C/\eps)^3$ (where $C$ is the constant in the statement of Theorem~\ref{thm: sparse cover theorem}), output ``fail'' and return the (trivial)
  hypothesis which puts probability mass $1$ on the point $0$.

\item Otherwise,

\begin{enumerate}

\item Construct ${\cal S}'_\eps$, an $\eps$-cover of the set of all PBDs of
order $(C/\eps)^3$ (see Theorem~\ref{thm: sparse cover theorem}).

\item Let $\tilde{\cal S}_\eps$ be the set of all distributions of the form
$A(x-\beta)$ where $A$ is a distribution from ${\cal S}'_\eps$ and $\beta$
is an integer in the range $[\hat{a}-(C/\eps)^3,\dots,\hat{b}].$

\item Run the tournament described in the proof of Lemma~\ref{lem:log-cover-size} on $\tilde{\cal S}_\eps$, using confidence parameter $\delta/2$.  Return
    the (sparse PBD) hypothesis that this tournament outputs.

\end{enumerate}

\end{enumerate}

}

\end{minipage}}
\caption{{\tt Proper-Learn-Sparse}$(n,\eps,\delta)$.}
\label{figure:Proper-Learn-Sparse}
\end{figure}

\begin{enumerate}

\item  {\tt Proper-Learn-Sparse}$^X(n,\eps,\delta)$: This procedure draws $\tilde{O}(1/\eps^2) \cdot \log(1/\delta)$ samples from $X$, does $(1/\eps)^{O\left( \log^2(1/\eps) \right)} \cdot \tilde{O} \left( \log n  \cdot \log {1 \over \delta} \right)$ bit operations, and outputs a PBD $H_S$ in sparse form.  The guarantee is similar to that of {\tt Learn-Sparse}. Namely, if $X$ is $\eps$-close to some sparse form PBD $Y$ in the cover ${\cal S}_{\epsilon}$ of Theorem~\ref{thm: sparse cover theorem}, then, with probability at least $1-\delta$ over the samples drawn from $X$, $\dtv(X,H_S) \leq 6 \eps.$

The procedure {\tt Proper-Learn-Sparse}$^X(n,\eps,\delta)$ is given in
Figure~\ref{figure:Proper-Learn-Sparse}; we explain the procedure in tandem with a proof of correctness.
As in {\tt Learn-Sparse}, we start by truncating $\Theta(\eps)$ of the probability mass
from each end of $X$ to obtain a conditional distribution $X_{[\hat{a},\hat{b}]}$. In particular, we compute $\hat{a}$ and $\hat{b}$ as described in the beginning of the proof of Lemma~\ref{lem:learnsparse} (setting $\eps'=\eps$ and $\delta'=\delta$). Claim~\ref{claim:anchors are fine} implies that, with probability at least $1-\delta/2$, $X(\leq \hat{a}), 1-X(\leq \hat{b}) \in [3\eps/2,5\eps/2]$. (Let us denote this event by ${\cal G}$.) We distinguish the following cases:
\begin{itemize}
\item If $\hat{b}-\hat{a} > \omega = (C/\eps)^3$, where $C$ is the constant in the statement of Theorem~\ref{thm: sparse cover theorem},
the algorithm outputs ``fail,'' returning the trivial hypothesis that puts probability mass $1$ on the point $0$.
Observe that, if $\hat{b}-\hat{a} > \omega$ and $X(\leq \hat{a}), 1-X(\leq \hat{b}) \in [3\eps/2,5\eps/2]$, then $X$ cannot be $\eps$-close to a sparse-form distribution in the cover.
\item  If $\hat{b}-\hat{a} \le \omega$, {then the algorithm} proceeds as follows. Let ${\cal S}'_\eps$ be an $\eps$-cover of the set of all PBDs of order $\omega$, i.e., all PBDs which are sums of just
$\omega$ Bernoulli random variables. By Theorem~\ref{thm: sparse cover theorem}, it follows that
$|{\cal S}'_\eps|  = (1/\eps)^{O(\log^2(1/\eps))}$ and that ${\cal S}'_\eps$ can be constructed in time $(1/\eps)^{O(\log^2(1/\eps))}$.
Now, let $\tilde{{\cal S}}_\eps$ be the set of all distributions of the form $A(x - \beta)$ where $A$ is a
distribution from ${\cal S}'_\eps$ and $\beta$ is an integer ``shift'' which is in the range
$[\hat{a}-\omega, \ldots, \hat{b}]$.
Observe that there are $O(1/\eps^3)$ possibilities for $\beta$ and $|{\cal S}'_\eps|$ possibilities for $A$,
so we similarly get that $|\tilde{{\cal S}}_\eps|=(1/\eps)^{O(\log^2(1/\eps)}$ and that $\tilde{{\cal S}}_\eps$ can be constructed in time $(1/\eps)^{O(\log^2(1/\eps)} \log n$.
Our algorithm {\tt Proper-Learn-Sparse} constructs the set $\tilde{{\cal S}}_\eps$
and runs the tournament described in the proof of Lemma~\ref{lem:log-cover-size} (using $\tilde{{\cal S}}_\eps$ in place of ${\cal S}_\eps$, and $\delta/2$ in place of~$\delta$). We will show that, if $X$ is $\eps$-close to some sparse form PBD $Y \in {\cal S}_{\epsilon}$ and event ${\cal G}$ happens, then, with probability at least $1-{\delta \over 2}$, the output of the tournament is~a~sparse~PBD~that~is~$6\epsilon$-close~to~$X$.
\end{itemize}

{\bf Analysis:} The sample complexity and running time of {\tt Proper-Learn-Sparse} follow immediately from  Claim~\ref{claim:anchors are fine} and
Lemma~\ref{lem:log-cover-size}. To show correctness,
it suffices to argue that,  if $X$ is $\eps$-close to some sparse form PBD $Y \in {\cal S}_{\epsilon}$ and event ${\cal G}$ happens, then $X$ is $\eps$-close to some distribution in $\tilde{{\cal S}}_\eps$. Indeed, suppose that $Y$ is an order $\omega$ PBD $Z$ translated by some $\beta$ and suppose that $X(\leq \hat{a}), 1-X(\leq \hat{b}) \in [3\eps/2,5\eps/2]$. Since at least $1-O(\eps)$ of the mass of $X$ is in $[\hat{a}, \hat{b}]$, it is clear that $\beta$ must be in the range
$[\hat{a}-\omega, \ldots ,\hat{b}]$, as otherwise $X$ could not be $\eps$-close to $Y.$ So $Y \in \tilde{{\cal S}}_\eps$.

\ignore{

\blue{
\item {\tt Locate-Sparse}$(H_S,{\epsilon \over 12 c})$: This routine searches
through the sparse-form PBDs inside the cover ${\cal S}_{{\epsilon \over 12 c}}$ to identify a sparse-form PBD that is within distance ${\epsilon \over 6}$ from $H_S$, or outputs ``fail'' if it cannot find one. Note that if there is a sparse-form PBD $Y$ that is ${\epsilon \over 12 c}$-close to $X$ and  {\tt Learn-Sparse} succeeds, then $Y$ must be ${\epsilon \over 6}$-close to $H_S$, since by Lemma~\ref{lem:learnsparse} whenever  {\tt Learn-Sparse} succeeds the output distribution satisfies $\dtv(X,H_S)\le {\epsilon \over 12}$. We show that if there is a sparse-form PBD $Y$ that is ${\epsilon \over 12 c}$-close to $X$ and  {\tt Learn-Sparse} succeeds (an event that occurs with probability at least $1-\delta/3,$ see Lemma~\ref{lem:learnsparse}), our {\tt Locate-Sparse} search routine, described below, will output a sparse-form PBD that is ${\epsilon \over 6}$-close to $H_S$. Indeed, given the preceding discussion, if we searched over all sparse-form PBDs inside the cover, it would be trivial to meet this guarantee. To save on computation time, we prune the set of sparse-form PBDs we search over, completing the entire search in time $\left({1 \over \epsilon}\right)^{O(\log^2{1/\epsilon})}  \cdot \log(n) \cdot \tilde{O} \left(\log^2(1/\delta) \right)$.

    Here is a detailed explanation and run-time analysis of the improved search: First, note that the description complexity of $H_S$ is ${\rm poly}(1/\eps) \cdot \log n \cdot \tilde{O}(\log^2(1/\delta))$ as $H_S$ is output by an algorithm with this running time.  Moreover, given a sparse-form PBD in ${\cal S}_{{\epsilon \over 12 c}}$, we can compute all probabilities in the support of the distribution in time ${\rm poly}(1/\eps) \log n$. Indeed, by part (i) of Theorem~\ref{thm: sparse cover theorem} a sparse-form PBD has $O(1/\eps^3)$ non-trivial Bernoulli random variables and those each use probabilities $p_i$ that are integer multiples of some value which is $\Omega(\eps^2)$. So an easy dynamic programming algorithm can compute all probabilities in the support of the distribution in time ${\rm poly}(1/\eps) \log n$, where the $\log n$ overhead is due to the fact that the support of the distribution is some interval in $[n]$. Finally, we argue that we can restrict our search to only a small subset of the sparse-form PBDs in  ${\cal S}_{{\epsilon \over 12 c}}$. For this, we note that we can restrict our search to sparse-form PBDs whose support is a superset of the support of $H_S$. Indeed, the final statement of Lemma~\ref{lem:learnsparse} implies that, if $Y$ is an arbitrary sparse-form PBD that is ${\epsilon \over 12 c}$-close to $X$, then with probability at least $1-\delta/3$ \ignore{\rnote{This is why I added in the
    ``with probability $1 - \delta/3$'' above; not sure we need it or not.  I guess it depends on exactly what is meany by ``{\tt Learn-Sparse} succeeds''.}}the output $H_S$ of {\tt Learn-Sparse} will have support that is a subset of the support of $Y$. Given this, we only need to try $\left({1 \over \epsilon}\right)^{O(\log^2{1/\epsilon})}$ sparse-form PBDs in the cover to find one that is close to $H_S$. Hence, the overall running time of our search is  $\left({1 \over \epsilon}\right)^{O(\log^2{1/\epsilon})}  \cdot \log n \cdot \tilde{O}( \log^2 1/\delta)$.

}
}

\begin{figure}[h!]
\framebox{
\medskip \noindent \begin{minipage}{16cm}

\blue{

\medskip

{\tt Locate-Binomial}$(\hat{\mu},\hat{\sigma}^2,n)$

\begin{enumerate}

\item If $ \hat{\sigma}^2 \le {n \over 4}$, set $\sigma_1^2=\hat{\sigma}^2$; otherwise, set $\sigma_1^2= {n \over 4}$.

\item If $\hat{\mu}^2 \le n (\hat{\mu}-\sigma_1^2)$, set $\sigma_2^2=\sigma_1^2$; otherwise,  set $\sigma_2^2 = {n \hat{\mu}- \hat{\mu}^2 \over n}$.

\item Return the hypothesis distribution $H_B = \mathrm{Bin}(\hat{n},\hat{p})$, where $\hat{n} = \left \lfloor \hat{\mu}^2 /(\hat{\mu} - \sigma^2_2) \right \rfloor$ and $\hat{p}={(\hat{\mu}-\sigma^2_2)/ \hat{\mu}}.$

    \end{enumerate}

}

\end{minipage}}
\caption{{\tt Locate-Binomial}$(\hat{\mu},\hat{\sigma}^2,n)$.}
\label{figure:Locate-Binomial}
\end{figure}

\item {\tt Locate-Binomial}$(\hat{\mu}, \hat{\sigma}^2,n)$: This routine takes as input the output $(\hat{\mu}, \hat{\sigma}^2)$ of {\tt Learn-Poisson}$^X(n,\eps,\delta)$ and computes a Binomial distribution $H_B$, without any additional samples from $X$. The guarantee is that, if $X$ is not $\epsilon$-close to any sparse form distribution in the cover $S_{\epsilon}$ of Theorem~\ref{thm: sparse cover theorem}, then, with probability at least $1-\delta$ (over the randomness in the output of {\tt Learn-Poisson}), $H_B$ will be $O(\epsilon)$-close to $X$.


Let $\mu$ and $\sigma^2$ be the (unknown) mean and variance of distribution $X$ and assume that  $X$ is not $\epsilon$-close to any sparse form distribution in $S_{\epsilon}$. Our analysis from Section~\ref{sec:kheavy} shows that, with probability at least $1-\delta$, the output $(\hat{\mu}, \hat{\sigma}^2)$ of {\tt Learn-Poisson}$^X(n,\eps,\delta)$ satisfies that $\dtv(X,TP(\hat{\mu},\hat{\sigma}^2))=O(\eps)$ as well as the bounds (\ref{eq:sigmabig}) and (\ref{eq:goodparams}) of Section~\ref{sec:kheavy} (with $\eps$ in place of $\eps'$). We will call all these conditions our ``working assumptions.'' We provide no guarantees when the working assumptions are not satisfied.

{\blue {\tt Locate-Binomial} is presented in Figure~\ref{figure:Locate-Binomial}; we proceed to explain the algorithm and establish its correctness.}  This routine has three steps. The first two eliminate corner-cases in the values of $\hat{\mu}$ and $\hat{\sigma}^2$, while the last step defines a Binomial distribution $H_B \equiv {\rm Bin}(\hat{n},\hat{p})$
    with $\hat{n} \leq n$ that is $O(\epsilon)$-close to $H_P\equiv TP(\hat{\mu},\hat{\sigma}^2)$ and hence to $X$ under our working assumptions. (We note that a significant portion of the work below is to ensure that $\hat{n} \leq n$, which does not seem to follow from a more direct approach.  Getting $\hat{n} \leq n$ is necessary in order for our learning algorithm for order-$n$ PBDs to  be truly proper.)  Throughout (a), (b) and (c) below we assume that our working assumptions hold. In particular, our assumptions are used every time we employ the bounds (\ref{eq:sigmabig}) and (\ref{eq:goodparams}) of Section~\ref{sec:kheavy}.

\begin{enumerate}
\item {\bf Tweaking $\hat{\sigma}^2$:} If $ \hat{\sigma}^2 \le {n \over 4}$, we set $\sigma_1^2=\hat{\sigma}^2$; otherwise, we set $\sigma_1^2= {n \over 4}$.  \blue{(As intuition for this tweak, observe that the largest possible variance of a Binomial distribution ${\rm Bin}(n,\cdot)$ is $n/4.$)}
    We note for future reference that in both cases~(\ref{eq:goodparams}) gives
\begin{equation}
\label{eq:sigma1}
(1-\eps)\sigma^2 \le \sigma_1^2 \leq (1 + \eps) \sigma^2,
\end{equation}
where the lower bound follows from (\ref{eq:goodparams}) and the fact that any PBD satisfies $\sigma^2 \le {n \over 4}$.

    We prove next that our setting of $\sigma_1^2$ results in
    $\dtv(TP(\hat{\mu},\hat{\sigma}^2),TP(\hat{\mu},\sigma_1^2)) \leq O(\eps).$  Indeed, if $\hat{\sigma}^2 \le {n \over 4}$ then this distance is zero and the claim certainly holds.  Otherwise we have that
$\left(1+{\epsilon} \right)\sigma^2 \ge \hat{\sigma}^2 > \sigma_1^2 = {n \over 4} \geq  \sigma^2,$ where we used~(\ref{eq:goodparams}).
Hence, by Lemma~\ref{lem: variation distance between translated Poisson distributions} we get:
\begin{eqnarray}\dtv(TP(\hat{\mu},\hat{\sigma}^2),TP(\hat{\mu},\sigma_1^2))
&\le& \frac{|\hat{\sigma}^2-\sigma_1^2|+1}{\hat{\sigma}^2} \nonumber \\
& \le & \frac{\epsilon\sigma^2+1}{\sigma^2} = O(\epsilon), \label{eq:a}
\end{eqnarray}
where we used the fact that $\sigma^2 = \Omega(1/\epsilon^2)$ from (\ref{eq:sigmabig}).

\item {\bf Tweaking $\sigma_1^2$:} If $\hat{\mu}^2 \le n (\hat{\mu}-\sigma_1^2)$ (\blue{equivalently,
$\sigma_1^2 \leq {\frac {n \hat{\mu} - \hat{\mu}^2} n}$}), set $\sigma_2^2=\sigma_1^2$; otherwise,  set $\sigma_2^2 = {n \hat{\mu}- \hat{\mu}^2 \over n}$.  \blue{(As intuition for this tweak, observe that the variance of a
${\rm Bin}(n,\cdot)$ distribution with mean $\hat{\mu}$ cannot exceed ${\frac {n \hat{\mu} - \hat{\mu}^2} n}.$)}
    We claim that this results in
       $\dtv(TP(\hat{\mu},{\sigma}_1^2),TP(\hat{\mu},\sigma_2^2)) \leq O(\eps).$ Indeed,
       if $\hat{\mu}^2 \le n (\hat{\mu}-\sigma_1^2)$, then clearly the distance is zero
       and the claim holds.  Otherwise
       \begin{itemize}
       \item Observe first that $\sigma_1^2 > \sigma_2^2$ and $\sigma_2^2 \ge 0$, where the last assertion follows from the fact that  $\hat{\mu} \le n$ by construction.

       \item Next, suppose that $X=PBD(p_1,\ldots,p_n)$. Then
from Cauchy-Schwarz we get that
\[
\mu^2 = \left(\sum_{i=1}^n p_i\right)^2 \leq n \left(\sum_{i=1}^n p_i^2 \right) = n(\mu - \sigma^2).
\]
Rearranging this yields
\begin{equation} \label{eq:CS}
{\frac {\mu(n-\mu)}{n}} \geq \sigma^2.
\end{equation}
We now have that
\begin{align}
\sigma_2^2 = {n \hat{\mu} - \hat{\mu}^2 \over n} &\ge {n (\mu-\eps \sigma) - ({\mu}+\eps \sigma)^2 \over n} \nonumber \\
&= {n \mu-\mu^2 -\eps^2 \sigma^2 - \eps \sigma(n+2 \mu ) \over n} \nonumber \\
&\ge \sigma^2 - {\eps^2 \over n} \sigma^2 - 3 \eps \sigma \notag\\
& \ge  (1-\epsilon^2) \sigma^2 - 3 \epsilon \sigma \ge (1-O(\eps))\sigma^2
\label{eq:blah}
\end{align}
where the first inequality follows from (\ref{eq:goodparams}), the second inequality follows
from (\ref{eq:CS}) and the fact that any PBD over $n$ variables satisfies $\mu \leq n,$ and the last one from~(\ref{eq:sigmabig}).
%
\item Given the above, we get by Lemma~\ref{lem: variation distance between translated Poisson distributions} that:
\begin{align}
\dtv(TP(\hat{\mu},\sigma_1^2),TP(\hat{\mu},\sigma_2^2)) \nonumber
&\le
\frac{\sigma_1^2-\sigma_2^2+1}{\sigma_1^2} \nonumber \\
 &\le \frac{(1+\epsilon) \sigma^2 - (1-O(\epsilon))\sigma^2+1}{(1-\epsilon)\sigma^2} = O(\epsilon), \label{eq:b} \end{align}
where we used that $\sigma^2 = \Omega(1/\epsilon^2)$ from~(\ref{eq:sigmabig}).
\end{itemize}

\item Constructing a Binomial Distribution: We construct a Binomial distribution $H_B$ that is $O(\epsilon)$-close to $TP(\hat{\mu},\sigma_2^2)$.
 If we do this then, by (\ref{eq:a}), (\ref{eq:b}), our working assumption that $\dtv(H_P,X)=O(\eps)$, and the triangle inequality,  we have that $\dtv(H_B, X) =O(\epsilon)$ and we are done. The
Binomial distribution $H_B$ that we construct is ${\rm Bin}(\hat{n},\hat{p})$,
where
$$
\hat{n} = \left \lfloor \hat{\mu}^2 /(\hat{\mu} - \sigma^2_2) \right \rfloor~~\text{and}~~\hat{p}={(\hat{\mu}-\sigma^2_2)/ \hat{\mu}}.$$
Note that, from the way that $\sigma_2^2$ is set in Step (b) above, we have that $\hat{n} \leq n$ and $\hat{p} \in [0,1]$, as required for  ${\rm Bin}(\hat{n},\hat{p})$ to be a valid Binomial distribution and a valid output for Part 2 of Theorem~\ref{thm:main}.

Let us bound the total variation distance between ${\rm Bin}(\hat{n},\hat{p})$ and $TP(\hat{\mu},\sigma_2^2)$. First, using Lemma~\ref{lem:translated Poisson approximation} we have:
\begin{eqnarray}
&&
\dtv({\rm Bin}(\hat{n},\hat{p}),TP(\hat{n}\hat{p}, \hat{n}\hat{p}(1-\hat{p}))
\nonumber \\
&\le& \frac{1}{\sqrt{\hat{n}\hat{p}(1-\hat{p})}}+\frac{2}{\hat{n}\hat{p}(1-\hat{p})}. \label{eq: lalalalaaa}
\end{eqnarray}
Notice that
\begin{eqnarray*}
\hat{n}\hat{p}(1-\hat{p}) &\ge&
\left ( {\hat{\mu}^2 \over \hat{\mu} - \sigma^2_2} -1\right )\left({\hat{\mu}-\sigma^2_2 \over \hat{\mu}}\right)\left({\sigma^2_2 \over \hat{\mu}}\right)\\
&=& \sigma_2^2 - \hat{p}(1-\hat{p}) \ge (1-O(\epsilon)) \sigma^2-1 \\
&\ge& \Omega(1/\eps^2),
\end{eqnarray*}
where the second inequality uses (\ref{eq:blah}) (or~\eqref{eq:sigma1} depending on which case of Step (b) we fell into) and the last one uses the fact that  $\sigma^2 = \Omega(1/\epsilon^2)$ from~\eqref{eq:sigmabig}. So plugging this into~\eqref{eq: lalalalaaa} we get:
$$
\dtv({\rm Bin}(\hat{n},\hat{p}) , TP(\hat{n}\hat{p}, \hat{n}\hat{p}(1-\hat{p})) =O(\epsilon).
$$

The next step is to compare $TP(\hat{n}\hat{p}, \hat{n}\hat{p}(1-\hat{p}))$
and $TP(\hat{\mu},\sigma_2^2)$. Lemma~\ref{lem: variation distance between
translated Poisson distributions} gives:
\begin{eqnarray*}
&&\dtv(TP(\hat{n}\hat{p}, \hat{n}\hat{p}(1-\hat{p})) , TP(\hat{\mu}, \sigma_2^2))\\
&\le& \frac{|\hat{n}\hat{p}-\hat{\mu}|}{\min(\sqrt{\hat{n}\hat{p}(1-\hat{p})},{\sigma_2})}+ \frac{|\hat{n}\hat{p}(1-\hat{p})-\sigma_2^2|+1}{\min(\hat{n}\hat{p}(1-\hat{p}), \sigma_2^2)} \\
&\le& \frac{1}{\sqrt{\hat{n}\hat{p}(1-\hat{p})}}+ \frac{2}{\hat{n}\hat{p}(1-\hat{p})}\\
& = &O(\eps).
\end{eqnarray*}
By the triangle inequality we get
$$\dtv({\rm Bin}(\hat{n},\hat{p}) , TP(\hat{\mu}, \sigma_2^2)=O(\epsilon),$$
which was our ultimate goal.
\end{enumerate}

\item {\tt Proper-Learn-PBD}:
Given the {\tt Proper-Learn-Sparse} and {\tt Locate-Binomial} routines described above, we are ready to describe our proper learning algorithm.
The algorithm is similar to our non-proper learning one, {\tt Learn-PBD}, with the following modifications:
In the first step, instead of running {\tt Learn-Sparse}, we run {\tt Proper-Learn-Sparse} to get a sparse form PBD $H_S$. In the second step, we still run {\tt Learn-Poisson} as we did before to get a translated Poisson distribution $H_P$. Then we run {\tt Choose-Hypothesis} feeding it $H_S$ and $H_P$ as input. If the distribution returned by {\tt Choose-Hypothesis} is $H_S$, we just output $H_S$. If it returns $H_P$ instead, then we run {\tt Locate-Binomial} to convert it to a Binomial distribution that is still close to the unknown distribution $X$. We tune the parameters $\epsilon$ and $\delta$ based on the above analyses to guarantee that, with probability at least $1-\delta$, the distribution output by our overall algorithm is $\epsilon$-close to the unknown distribution $X$. The number of samples we need is $\tilde{O}(1/\eps^2) \log(1/\delta)$, and the running time is $\left({1 \over \epsilon}\right)^{O(\log^2{1/\epsilon})} \cdot \tilde{O}(\log n \cdot \log {1 \over \delta})$. This concludes the proof of Part 2 of Theorem~\ref{thm:main}, and thus of the entire theorem. \qed

\end{enumerate}

\ignore{
}

\section{Learning weighted sums of independent Bernoullis}
\label{sec:constantnumofwts}

In this section we consider a generalization of the problem of learning an unknown PBD, by
studying the learnability of weighted sums of independent Bernoulli random variables
$X=\sum_{i=1}^n w_i X_i$.  (Throughout this section we assume for simplicity that the weights are ``known''
to the learning algorithm.)  In Section~\ref{sec:positive} we show that if there are only
constantly many different weights then such distributions can be learned by an algorithm that uses
$O(\log n)$ samples and runs in time $\poly(n).$  In Section~\ref{sec:negative} we show that if there are $n$ distinct weights then
even if those weights have an extremely simple structure -- the $i$-th weight is simply $i$ -- any algorithm must use $\Omega(n)$ samples.

\subsection{Learning sums of weighted independent Bernoulli random variables with few distinct weights}
\label{sec:positive}

Recall Theorem~\ref{thm:linearupper}:

\medskip

\noindent {\sc Theorem~\ref{thm:linearupper}.} \emph{Let $X = \sum_{i=1}^n a_i X_i$ be a weighted sum of unknown independent Bernoulli
random variables such that there are at most $k$ different values in the set
$\{a_1,\dots,a_n\}.$ Then there is an algorithm with the following properties:  given $n,$  $a_1,\dots,a_n$ and access to independent draws from $X$, it uses
$$\widetilde{O}(k/\eps^{2}) \cdot  \log(n) \cdot \log(1/\delta)$$
samples from the target distribution $X$,
runs in time
$$\poly \left( n^k \cdot (k/\eps)^{k\log^2(k/\eps)} \right) \cdot \log(1/\delta),$$
and with probability at least $1-\delta$ outputs a hypothesis vector $\hat{p} \in [0,1]^n$ defining independent Bernoulli random variables $\hat{X}_i$ with $\E[\hat{X}_i]=p_i$ such that $\dtv(\hat{X},X) \leq \eps,$ where $\hat{X}=\sum_{i=1}^n a_i \hat{X}_i$.}

\medskip

\blue{

\begin{remark}
\rocconote{A special case of a more general recent result} \cite{DDOST13} \costasnote{implies} a highly efficient algorithm for the special case of Theorem~\ref{thm:linearupper} in which the $k$ distinct values that $a_1,\dots,a_n$ can have are just $\{0,1,\dots,k-1\}$.  \costasnote{In this case, the} algorithm of \cite{DDOST13} draws $\poly(k,1/\eps)$ samples from the target distribution and, in the bit complexity model of this paper, has running time $\poly(k,1/\eps,\log n)$; thus its running time and sample complexity are both significantly better than Theorem~\ref{thm:linearupper}.  However, \rocconote{even the most general version of the \cite{DDOST13} result} \costasnote{cannot handle the full generality of Theorem~\ref{thm:linearupper}, which imposes} no conditions of any sort on the $k$ distinct weights --- they may be any real values. The \cite{DDOST13} result leverages known central limit theorems for total variation distance from probability theory that deal with sums of independent (small integer)-valued random variables.  We are not aware of such central limit theorems for the more general setting of arbitrary real values, and thus we take a different approach to Theorem~\ref{thm:linearupper}, via covers for PBDs, as described below.
\end{remark}
}

Given a vector $\overline{a}=(a_1,\dots,a_n)$ of weights,
we refer to a distribution $X=\sum_{i=1}^n a_i X_i$ (where $X_1,\dots,X_n$ are independent
Bernoullis which may have arbitrary means) as an \emph{$\overline{a}$-weighted sum of Bernoullis},
and we write ${\cal S}_{\overline{a}}$ to denote the space of all such distributions.

To prove Theorem~\ref{thm:linearupper} we first show that ${\cal S}_{\overline{a}}$ has an
$\eps$-cover that is not too large.  We then show that by running a ``tournament'' between
 all pairs of distributions in the cover, using the hypothesis testing subroutine from Section~\ref{sec:choosehypothesis}, it is possible to identify a distribution in the cover that is
 close to the target $\overline{a}$-weighted sum of Bernoullis.

\begin{lemma} \label{lem:smallcover}
There is an $\eps$-cover ${\cal S}_{\overline{a},\eps} \subset {\cal S}_{\overline{a}}$ of size
$|{\cal S}_{\overline{a},\eps}| \leq (n/k)^{3k} \cdot (k/\eps)^{k \cdot O(\log^2 (k/\eps))}$
that can be constructed in time $\poly(|{\cal S}_{\overline{a},\eps}|).$
\end{lemma}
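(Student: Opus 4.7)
\medskip

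\noindent \textbf{Proof plan for Lemma~\ref{lem:smallcover}.}

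The plan is to reduce to the PBD cover theorem (Theorem~\ref{thm: sparse cover theorem}) by exploiting the fact that there are only $k$ distinct weights. Let $b_1,\dots,b_k$ be the distinct values appearing among $a_1,\dots,a_n$, and for each $j\in\{1,\dots,k\}$ let $I_j=\{i : a_i=b_j\}$ and $n_j=|I_j|$, so that $\sum_j n_j = n$. Then any $\overline{a}$-weighted sum of Bernoullis can be written as
\[
X = \sum_{j=1}^k b_j\, Y_j, \qquad Y_j := \sum_{i\in I_j} X_i,
\]
where each $Y_j$ is a PBD of order $n_j$ and the $Y_j$'s are mutually independent.

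First I would apply Theorem~\ref{thm: sparse cover theorem} separately to each ``group'' $Y_j$, at accuracy $\eps/k$, to obtain a cover $\mathcal{T}_j$ of the PBDs of order $n_j$ of size
\[
|\mathcal{T}_j| \le n_j^3 \cdot O(k/\eps) + n_j \cdot (k/\eps)^{O(\log^2(k/\eps))}
\le n_j^3 \cdot (k/\eps)^{O(\log^2(k/\eps))}.
\]
Then I would define the candidate cover as the Cartesian-product construction
\[
\mathcal{S}_{\overline{a},\eps} := \Bigl\{ \sum_{j=1}^k b_j \hat{Y}_j \;:\; \hat{Y}_j \in \mathcal{T}_j \text{ for each } j \Bigr\}.
\]
To verify the cover property, given an arbitrary $X=\sum_j b_j Y_j \in \mathcal{S}_{\overline{a}}$, pick $\hat{Y}_j\in\mathcal{T}_j$ with $\dtv(Y_j,\hat{Y}_j)\le \eps/k$ for each $j$. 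The key step is the standard subadditivity of total variation distance under independent sums: if $Y_1,\dots,Y_k$ are independent and $\hat{Y}_1,\dots,\hat{Y}_k$ are independent, then $\dtv(\sum_j f_j(Y_j),\sum_j f_j(\hat{Y}_j)) \le \sum_j \dtv(Y_j,\hat{Y}_j)$ for any measurable $f_j$ (proved by a coupling argument, using that $\dtv$ is preserved under the deterministic map $y\mapsto b_j y$). This gives $\dtv(X, \sum_j b_j\hat{Y}_j)\le k\cdot(\eps/k)=\eps$.

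It remains to bound the size and the construction time. The size is at most
\[
\prod_{j=1}^k |\mathcal{T}_j| \;\le\; \prod_{j=1}^k n_j^3 \cdot (k/\eps)^{k \cdot O(\log^2(k/\eps))} \;\le\; (n/k)^{3k} \cdot (k/\eps)^{k\cdot O(\log^2(k/\eps))},
\]
where the last inequality is AM--GM applied to $n_1+\cdots+n_k=n$, giving $\prod_j n_j \le (n/k)^k$. For the construction, Theorem~\ref{thm: sparse cover theorem} yields each $\mathcal{T}_j$ in time linear in its size, and forming all Cartesian products and summing is straightforward, giving total time $\poly(|\mathcal{S}_{\overline{a},\eps}|)$. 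I expect no serious obstacle here: the main (minor) care-point is verifying the subadditivity/scaling step cleanly, since we are summing scaled copies rather than raw PBDs; everything else is bookkeeping on top of Theorem~\ref{thm: sparse cover theorem}.
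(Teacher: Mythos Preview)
Your proposal is correct and follows essentially the same approach as the paper's proof: decompose by the $k$ distinct weights, apply Theorem~\ref{thm: sparse cover theorem} at accuracy $\eps/k$ to each PBD block, take the Cartesian product, and bound the product of sizes via AM--GM. You are in fact slightly more explicit than the paper in justifying the subadditivity of $\dtv$ under independent (scaled) sums and the AM--GM step, both of which the paper asserts without elaboration.
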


\begin{proof}
Let $\{b_j\}_{j=1}^k$ denote the set of distinct weights in $a_1,\dots,a_n$, and let $n_j = \big|\{i \in [n] \mid a_i =b_j \}\big|$. With
this notation, we can write $X = \littlesum_{j=1}^k b_j S_{j} = g(S)$, where $S =(S_1, \ldots, S_k)$ with each $S_j$ a sum
of $n_j$ many independent Bernoulli random variables and $g(y_1, \ldots, y_k) = \sum_{j=1}^k b_jy_j$. Clearly we have
$\littlesum_{j=1}^k n_j = n$. By Theorem~\ref{thm: sparse cover theorem}, for each $j \in \{1,\dots,k\}$ the space of all possible $S_j$'s has an explicit $(\eps/k)$-cover
$\mathcal{S}^j_{\eps/k}$ of size $|\mathcal{S}^j_{\eps/k}| \le n_j^{{2}} + n \cdot (k/\eps)^{O(\log^2 (k/\eps))}$. By
independence across $S_j$'s, the product $\mathcal{Q} = \littleprod_{j=1}^k \mathcal{S}^j_{\eps/k}$ is an $\eps$-cover for the space of all possible $S$'s, and hence the set
\[
\{Q = \littlesum_{j=1}^k b_j S_j \ : \ (S_1,\dots,S_k) \in {\cal Q}\}
\]
is an $\eps$-cover for ${\cal S}_{\overline{a}}.$
So ${\cal S}_{\overline{a}}$ has an explicit $\eps$-cover of size $|\mathcal{Q}| = \littleprod_{j=1}^k
|\mathcal{S}^j_{\eps/k}| \leq (n/k)^{{2}k} \cdot (k/\eps)^{k \cdot O(\log^2 (k/\eps))}$.
\end{proof}

\begin{prevproof}{Theorem}{thm:linearupper}
We claim that the algorithm of Lemma~\ref{lem:log-cover-size} has the desired sample complexity and can be implemented to run in the claimed time bound.  The sample complexity bound follows directly from Lemma~\ref{lem:log-cover-size}.
It remains to argue about the time complexity. Note that the running time of the algorithm is $\poly(|{\cal S}_{\overline{a},\eps}|)$ times the running time of a competition. We will show that a competition between $H_1, H_2 \in {\cal S}_{\overline{a},\eps}$ can be carried out by an efficient algorithm.  This amounts to efficiently computing the probabilities $p_1 = H_1(\mathcal{W}_1)$ and $q_1 =
H_2(\mathcal{W}_1)$ {and efficiently computing $H_1(x)$ and $H_2(x)$ for each of the $m$ samples $x$ drawn in step (2) of the competition}. \blue{Note that each element $w \in \mathcal{W}$ (the support of
$X$ in the competition {\tt Choose-Hypothesis}) is a value $w = \sum_{j=1}^k b_j n'_j$ where $n'_j \in \{0,\dots,n_j\}.$}  \ignore{Note that $\mathcal{W} = \littlesum_{j=1}^k{b_i} \cdot \{0,1,\ldots,n_j \}$.} Clearly,
$|\mathcal{W}| \leq \littleprod_{j=1}^k (n_j+1)
= O((n/k)^k)$. It is thus easy to see that $p_1, q_1$ {and each of $H_1(x), H_2(x)$} can be efficiently
computed as long as there is an efficient algorithm for the following
problem: given $H=\sum_{j=1}^k b_jS_j \in {\cal S}_{\overline{a},\eps}$ and $w \in \mathcal{W}$,
compute $H(w)$.
Indeed, fix any such $H, w.$  We have
that
\[
H(w) = \sum_{m_1,\dots,m_k}
\littleprod_{j=1}^k \Pr_{H}[S_j = m_j],
\]
where the sum is over all $k$-tuples $(m_1,\dots,m_k)$ such that
$0 \leq m_j \leq n_j$ for all $j$ and $b_1 m_1 + \cdots + b_k m_k = w$
(as noted above there are at most $O((n/k)^k)$ such $k$-tuples).
To complete the proof of Theorem~\ref{thm:linearupper}
we note that $\Pr_{H}[S_j = m_j]$ can be computed in $O(n_j^2)$
time by standard dynamic programming. 
\end{prevproof}

We close this subsection with the following remark:  In \cite{DDS12:kmodallearn} the authors have given a $\poly(\ell,$ $\log(n),$ $1/\eps)$-time algorithm that learns any $\ell$-modal
distribution over $[n]$ (i.e., a distribution whose pdf has at most $\ell$ ``peaks'' and ``valleys'') using
$O(\ell \log(n)/\eps^3 + (\ell/\eps)^3 \log(\ell/\eps))$ samples.  It is natural to wonder whether this
algorithm could be used to efficiently learn a sum of $n$ weighted independent Bernoulli random variables
with $k$ distinct weights, and thus give an alternate algorithm for Theorem~\ref{thm:linearupper}, perhaps with better asymptotic guarantees.  However,   it is easy to
construct a sum $X=\sum_{i=1}^n a_i X_i$ of $n$ weighted independent Bernoulli random variables
with $k$ distinct weights such that $X$ is $2^k$-modal.  Thus, a naive application of the \cite{DDS12:kmodallearn}
result would only give an algorithm with sample complexity exponential in $k$, rather than
the quasilinear sample complexity of our current algorithm.  If the $2^k$-modality of the above-mentioned example is the worst case (which we do not know), then the \cite{DDS12:kmodallearn} algorithm would give
a $\poly(2^k,\log(n),1/\eps)$-time algorithm for our problem that uses $O(2^k \log(n)/\eps^3) +
2^{O(k)} \cdot \tilde{O}(1/\eps^3)$ examples (so comparing with Theorem~\ref{thm:linearupper}, exponentially worse sample complexity
as a function of $k$, but exponentially better running time as a function of $n$).  Finally, in the context of this question (how many modes can there be for a sum of $n$ weighted independent Bernoulli random variables with $k$ distinct weights), it is interesting to recall the result of K.-I. Sato \cite{Sato:93} which shows that for any $N$ there are two unimodal distributions $X,Y$ such that $X+Y$ has at least $N$ modes.


\subsection{Sample complexity lower bound for learning sums of weighted independent Bernoulli
random variables} \label{sec:negative}

Recall Theorem~\ref{thm:linearlower}:

\medskip

\noindent {\sc Theorem~\ref{thm:linearlower}.} \emph{Let
$X=\sum_{i=1}^n i \cdot X_i$ be a weighted sum of unknown independent Bernoulli random variables
(where the $i$-th weight is simply $i$). Let $L$ be any learning algorithm which, given $n$
and access to independent draws from $X$, outputs a hypothesis distribution $\hat{X}$ such that
$\dtv(\hat{X},X) \leq 1/25$ with probability at least
$e^{-o(n)}.$ Then $L$ must use $\Omega(n)$ samples.}

\medskip

\blue{The intuition underlying this lower bound is straightforward:  Suppose there are $n/100$
variables $X_i$, chosen uniformly at random, which have $p_i=100/n$ (call these the ``relevant variables''),
and the rest of the $p_i$'s are zero.
Given at most $c \cdot n$ draws from $X$ for a small constant $c$,
with high probability some constant fraction of the relevant $X_i$'s will not have been ``revealed''
as relevant, and from this it is not difficult to show that any hypothesis must have constant error.
A detailed argument follows.}

\begin{prevproof}{Theorem}{thm:linearlower}
We define a probability distribution over possible target probability distributions $X$ as follows:
A subset $S \subset \{n/2 + 1, \dots, n\}$ of size $|S|=n/100$ is
drawn uniformly at random from all ${n/2 \choose n/100}$ possible outcomes..
The vector $\overline{p}=(p_1,\dots,p_n)$ is defined as follows:
for each $i \in S$ the value $p_i$ equals $100/n = 1/|S|,$ and for
all other $i$ the value $p_i$ equals 0.  The $i$-th Bernoulli random variable $X_i$ has $\E[X_i]=p_i$, and
the target distribution is $X=X_{\overline{p}}=\sum_{i=1}^n i X_i.$

We will need two easy lemmas:

\begin{lemma} \label{lemma:l1}
Fix any $S,\overline{p}$ as described above.
For any $j \in \{n/2 + 1,\dots,n\}$ we have $X_{\overline{p}}(j) \neq 0$ if and
only if $j \in S$.  For any $j\in S$ the value  $X_{\overline{p}}(j)$ is exactly
$(100/n)(1 - 100/n)^{n/100 - 1} > 35/n$ (for $n$ sufficiently large),
and hence $X_{\overline{p}}(\{n/2+1,\dots,n\})>0.35$ (again for $n$ sufficiently large).

\end{lemma}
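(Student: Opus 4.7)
The plan is straightforward: exploit the fact that $p_i=0$ outside $S$ and that $S$ lives entirely in the upper half $\{n/2+1,\dots,n\}$, so the sum $X_{\overline{p}} = \sum_{i\in S} i X_i$ can take a value $\le n$ in only very restricted ways. First I would observe that if two or more of the $X_i$'s with $i \in S$ were simultaneously $1$, the resulting sum would be at least $2(n/2+1) = n+2 > n$, hence cannot equal any $j \in \{n/2+1,\dots,n\}$. Similarly the all-zero event gives $X=0$, not $j$. Therefore, for any $j$ in this upper range, $\{X_{\overline{p}} = j\}$ is the disjoint union over $i\in S$ of the events $\{X_i=1$ and $X_{i'}=0$ for all $i'\in S\setminus\{i\}\}$ restricted to $i=j$. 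This immediately yields the first claim: $X_{\overline{p}}(j)\ne 0$ iff $j\in S$.

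Next, for $j\in S$, the above analysis pins down the probability exactly:
\[
X_{\overline{p}}(j)\;=\;\Pr[X_j=1]\cdot\prod_{i\in S\setminus\{j\}}\Pr[X_i=0]\;=\;\frac{100}{n}\left(1-\frac{100}{n}\right)^{n/100-1},
\]
using $|S|=n/100$ and $p_i=100/n$ for $i\in S$. To finish the second claim I would invoke the elementary limit $(1-100/n)^{n/100-1}\to e^{-1}\approx 0.3679$ as $n\to\infty$, noting that the convergence is monotone from above for sufficiently large $n$ (since $\ln(1-100/n)=-100/n - \tfrac{1}{2}(100/n)^2-\cdots$, so $(n/100-1)\ln(1-100/n)\to -1$ from above once $n$ is large). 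Hence $(1-100/n)^{n/100-1}>0.35$ for all sufficiently large $n$, which gives $X_{\overline{p}}(j)>35/n$.

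For the third claim I would simply sum the second claim over $j\in S$:
\[
X_{\overline{p}}(\{n/2+1,\dots,n\})\;\ge\;\sum_{j\in S}X_{\overline{p}}(j)\;=\;|S|\cdot\frac{100}{n}\left(1-\frac{100}{n}\right)^{n/100-1}\;=\;\left(1-\frac{100}{n}\right)^{n/100-1}\;>\;0.35,
\]
where the last inequality is the bound just established. There is no real obstacle here; the only mild point to be careful about is the ``$n$ sufficiently large'' clause, which must handle the direction of convergence of $(1-100/n)^{n/100-1}$ to $e^{-1}$ so that one can safely replace $e^{-1}\approx 0.3679$ by the looser threshold $0.35$ for all large $n$.
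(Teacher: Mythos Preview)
Your proposal is correct and follows essentially the same approach as the paper's proof: the paper simply notes that any $c\ge 2$ elements of $\{n/2+1,\dots,n\}$ sum to more than $n$, deduces that $X_{\overline{p}}=j$ forces $X_j=1$ and all other $X_i=0$, and invokes $\lim_{x\to\infty}(1-1/x)^x=1/e$ for the numerical bound. Your write-up fills in slightly more detail (the monotone convergence of $(1-100/n)^{n/100-1}$ to $e^{-1}$ from above), but the argument is the same.
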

The first claim of the lemma holds because any set of $c \geq 2$ numbers from $\{n/2+1,\dots,n\}$
must sum to more than $n$.
The second claim holds because the only way a draw $x$ from $X_{\overline{p}}$ can have $x=j$ is if $X_j=1$ and all other $X_i$ are 0 (here we are using $\lim_{x \rightarrow \infty}(1 - 1/x)^{x} = 1/e$).

The next lemma is an easy consequence of Chernoff bounds:

\begin{lemma} \label{lemma:l2}
Fix any $\overline{p}$ as defined above, and consider a sequence of $n/2000$ independent draws from $X_{\overline{p}}
= \sum_{i} i X_i$.
With probability $1-e^{-\Omega(n)}$ the total number of indices
$j \in [n]$ such that $X_j$ is ever 1 in any of the $n/2000$ draws is at
most $n/1000$.
\end{lemma}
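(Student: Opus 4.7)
The plan is a one-line Chernoff bound. Consider the $n/2000$ independent draws from $X_{\overline{p}}$; in the $t$-th draw denote the Bernoulli indicators by $X^{(t)}_1,\dots,X^{(t)}_n$, each with $\Pr[X^{(t)}_j=1]=p_j$, all mutually independent. Let
\[
N \;=\; \sum_{t=1}^{n/2000}\sum_{j=1}^{n} X^{(t)}_j
\]
be the total number of pairs $(j,t)$ for which $X_j$ takes the value $1$ in draw $t$. Clearly, the quantity we wish to bound --- the number of distinct indices $j$ such that $X_j=1$ in at least one of the draws --- is at most $N$, so it suffices to show that $N\le n/1000$ with probability $1-e^{-\Omega(n)}$.

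Next I would compute $\E[N]$ and apply a multiplicative Chernoff bound. Since $\sum_j p_j = |S|\cdot(100/n)=(n/100)\cdot(100/n)=1$, we have $\E[N]=(n/2000)\cdot 1 = n/2000$. $N$ is a sum of $n\cdot(n/2000)$ independent $\{0,1\}$-valued random variables, so the standard multiplicative Chernoff bound gives
\[
\Pr\!\left[N > 2\,\E[N]\right] \;\le\; e^{-\E[N]/3} \;=\; e^{-n/6000} \;=\; e^{-\Omega(n)}.
\]
Since $2\,\E[N]=n/1000$, on the complementary event we have $N\le n/1000$, which as noted upper-bounds the number of distinct indices $j$ ever set to $1$. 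This yields the lemma.

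There is no real obstacle here: the only thing to notice is the trivial but useful bound ``number of distinct indices $\le$ total count with multiplicity,'' which lets us replace the combinatorially awkward ``union of supports'' quantity by a sum of independent Bernoullis to which Chernoff applies cleanly. The bound holds uniformly over every $\overline{p}$ arising from the construction (since we used only $\sum_j p_j=1$), so no averaging over the random choice of $S$ is needed.
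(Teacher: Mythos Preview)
Your argument is correct and is precisely the ``easy consequence of Chernoff bounds'' the paper alludes to without spelling out. Bounding the number of distinct indices by the total count $N$ with multiplicity, computing $\E[N]=n/2000$ from $\sum_j p_j=1$, and applying the multiplicative Chernoff bound $\Pr[N>2\E[N]]\le e^{-\E[N]/3}$ is exactly the intended one-line proof.
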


%
%

We are now ready to prove Theorem~\ref{thm:linearlower}.
Let $L$ be a learning algorithm that receives $n/2000$ samples.  Let $S \subset
\{n/2+1,\dots,n\}$ and $\overline{p}$ be chosen randomly
as defined above, and set the target to $X=X_{\overline{p}}.$

We consider an augmented learner $L'$ that is given ``extra information.''
For each point in the sample, instead of receiving the value of that draw from $X$
the learner $L'$ is given the entire vector
$(X_1,\dots,X_n) \in \{0,1\}^n$.  Let $T$ denote the set of elements $j \in \{n/2+1,\dots,n\}$ for which the learner is ever given a vector $(X_1,\dots,X_n)$ that has $X_j=1.$  By Lemma~\ref{lemma:l2} we have
$|T| \leq n/1000$ with probability at least $1 - e^{-\Omega(n)}$; we condition on the event
$|T| \leq n/1000$ going forth.

Fix any value $\ell \leq n/1000.$  Conditioned on $|T|=\ell,$
the set $T$ is equally likely to be any $\ell$-element subset of $S$,
and all possible ``completions'' of $T$ with
an additional $n/100-\ell \geq 9n/1000$ elements of $\{n/2+1,\dots,n\} \setminus T$
are equally likely to be the true set $S$.

Let $H$ denote the hypothesis distribution over $[n]$ that algorithm $L$ outputs.
Let $R$ denote the set $\{n/2 + 1,\dots,n\} \setminus T$; note that since
$|T|=\ell \leq n/1000$, we have $|R| \geq 499n/1000.$  Let $U$ denote the set
$\{i \in R \ : \ H(i) \geq 30/n\}.$  Since $H$ is a distribution we must have
$|U| \leq n/30.$  {It is easy to verify that we have $\dtv(X,H) \geq
{\frac 5 n} |S \setminus U|.$}
Since $S$ is a uniform random extension of $T$ with at most $n/100 - \ell \in [9n/1000,n/100]$ unknown elements of $R$ and $|R| \geq 499n/1000$, an easy calculation shows that
$\Pr[|S \setminus U| > 8n/1000]$ is $1 - e^{-\Omega(n)}$.  This means that with probability
$1 - e^{-\Omega(n)}$ we have $\dtv(X,H) \geq {\frac {8n}{1000}}\cdot {\frac {5}{n}} = 1/25$, and
the theorem is proved. \end{prevproof}

\section{Conclusion and open problems} \label{sec:conclusion}

{Since the initial conference
publication of this work \cite{DDS12pbdstoc},
some progress has been made on problems related to learning
Poisson Binomial Distributions.
The initial conference version \cite{DDS12pbdstoc} asked whether
log-concave distributions over $[n]$ (a generalization of
Poisson Binomial Distributions) can be learned to accuracy $\eps$ with $\poly(1/\eps)$
samples independent of $n$.  An affirmative answer to this question was subsequently provided
in \cite{CDSS13}.  More recently, \cite{DDOST13} studied a different
generalization of Poisson Binomial Distributions by considering
random variables of the form $X = \sum_{i=1}^n X_i$ where the $X_i$'s
are mutually independent (not necessarily identical) distributions
that are each supported on the integers $\{0,1,\dots,k-1\}$
(so, the $k=2$ case corresponds to Poisson Binomial Distributions).
\cite{DDOST13} gave an algorithm for learning these distributions
to accuracy $\eps$ using $\poly(k,1/\eps)$ samples (independent
of $n$).

While our results in this paper
essentially settle the sample complexity
of learning an unknown Poisson Binomial Distribution,
several goals remain for future work.  Our non-proper learning
algorithm is computationally more efficient than our proper learning
algorithm, but uses a factor of $1/\eps$ more samples.
An obvious goal is to obtain ``the best of both worlds'' by coming
up with an $O(1/\eps^2)$-sample algorithm which performs
$\tilde{O}(\log(n)/\eps^2)$ bit operations and learns an unknown PBD
to accuracy $\eps$ (ideally, such an algorithm
would even be proper and output a PBD as its
hypothesis).  Another goal is to sharpen the sample complexity
bounds of \cite{DDOST13}
and determine the correct polynomial dependence on $k$ and
$1/\eps$ for the generalized problem studied in that work.}

\ignore{
%
}

\bibliographystyle{alpha}

\bigskip

\bibliography{allrefs}

\appendix

\section{Extension of the Cover Theorem:
Proof of Theorem~\ref{thm: sparse cover theorem}} \label{appendix:proof of extended cover theorem}

\smallskip Theorem~\ref{thm: sparse cover theorem} is restating the main cover theorem (Theorem~1) of~\cite{DP:cover}, except that it claims an additional property, namely what follows the word ``finally'' in the statement of the theorem. (We will sometimes refer to this property as the {\em last part} of Theorem~\ref{thm: sparse cover theorem} in the following discussion.) Our goal is to show that the cover of~\cite{DP:cover} already satisfies this property without any modifications, thereby establishing Theorem~\ref{thm: sparse cover theorem}. To avoid reproducing the involved constructions of~\cite{DP:cover}, we will assume that the reader has some familiarity with them. Still, our proof here will be self-contained.

First, we note that the $\epsilon$-cover ${\cal S}_\epsilon$ of Theorem~1 of~\cite{DP:cover} is a subset of a larger ${\epsilon \over 2}$-cover ${\cal S}'_{\epsilon/2}$ of size $n^2+n\cdot (1/\epsilon)^{O(1/\epsilon^2)}$, which includes all the $k$-sparse and all the $k$-heavy Binomial PBDs (up to permutations of the underlying $p_i$'s), for some $k=O(1/\epsilon)$. Let us call ${\cal S}'_{\epsilon/2}$ the ``large ${\epsilon \over 2}$-cover'' to distinguish it from ${\cal S}_\epsilon$, which we will call the ``small $\epsilon$-cover.'' The reader is referred to Theorem~2 in~\cite{DP:cover} (and the discussion following that theorem) for a description of the large $\epsilon \over 2$-cover, and to Section 3.2 of~\cite{DP:cover} for how this cover is used to construct the small $\epsilon$-cover. In particular,  the small $\epsilon$-cover is a subset of the large ${\epsilon/2}$-cover, including only a subset of the sparse form distributions in the large ${\epsilon/2}$-cover. Moreover, for every sparse form distribution in the large $\epsilon/2$-cover, the small $\epsilon$-cover  includes at least one sparse form distribution that is $\epsilon/2$-close in total variation distance. 
Hence, if the large $\epsilon/2$-cover satisfies the last part of Theorem~\ref{thm: sparse cover theorem} (with $\epsilon/2$ instead of $\epsilon$ and ${\cal S}_{\epsilon/2}'$ instead of ${\cal S}_\epsilon$), it follows that the small $\epsilon$-cover also satisfies the last part of Theorem~\ref{thm: sparse cover theorem}.

\smallskip So we proceed to argue that, for all $\epsilon$, the large $\epsilon$-cover implied by Theorem~2 of~\cite{DP:cover} satisfies the last part of Theorem~\ref{thm: sparse cover theorem}. Let us first review how the large cover is constructed.  (See Section~4 of~\cite{DP:cover} for the details.) For every collection of indicators $\{X_i\}_{i=1}^n$ with expectations $\{\E[X_i]=p_i\}_i$, the collection is subjected to two filters, called the {\em Stage 1} and {\em Stage 2} filters, and described respectively in Sections 4.1 and 4.2  of~\cite{DP:cover}. Using the same notation as~\cite{DP:cover}, let us denote by $\{Z_i\}_i$ the collection output by the Stage 1 filter and by $\{Y_i\}_i$ the collection output by the Stage 2 filter. The collection $\{Y_i\}_i$ output by the Stage 2 filter satisfies $\dtv(\sum_i X_i,\sum_i Y_i)\le \epsilon$, and is included in the cover (possibly after permuting the $Y_i$'s). Moreover, it is in sparse or heavy Binomial form. This way, it is made sure that, for every $\{X_i\}_i$, there exists some $\{Y_i\}_i$ in the cover that is $\epsilon$-close and is in sparse or heavy Binomial form. We proceed to show that the cover thus defined satisfies the last part of Theorem~\ref{thm: sparse cover theorem}.

For $\{X_i\}_{i}$, $\{Y_i\}_{i}$ and $\{Z_i\}_{i}$ as above, let $(\mu, \sigma^2)$, $(\mu_Z, \sigma_Z^2)$ and $(\mu_Y, \sigma_Y^2)$ denote respectively the (mean, variance) pairs of the variables $X=\sum_i X_i$, $Z=\sum_i Z_i$ and $Y=\sum_i Y_i$. We argue first that the pair $(\mu_Z, \sigma_Z^2)$ satisfies $|\mu - \mu_Z| = O(\epsilon)$ and $|\sigma^2-\sigma_Z^2| = O(\epsilon \cdot (1+\sigma^2))$. Next we argue that, if the collection $\{Y_i\}_i$ output by the Stage 2 filter is in heavy Binomial form, then $(\mu_Y, \sigma_Y^2)$  satisfies $|\mu - \mu_Y| = {O(1)}$ and $|\sigma^2-\sigma_Y^2| = O(1 + \epsilon \cdot (1+\sigma^2))$, concluding the proof.
\begin{itemize}
\item Proof for $(\mu_Z, \sigma_Z^2)$: The Stage 1 filter only modifies the indicators $X_i$ with $p_i \in (0,1/k) \cup (1-1/k,1)$, for some well-chosen $k = O(1/\epsilon)$. For convenience let us define ${\cal L}_k=\{i~\vline~ p_i\in(0,1/k)\}$ and ${\cal H}_k=\{i~\vline~ p_i\in(1-1/k,1)\}$ as in~\cite{DP:cover}. The filter of Stage 1 rounds the expectations of the indicators indexed by ${\cal L}_k$ to some value in $\{0,1/k\}$ so that no single expectation is altered by more than an additive $1/k$, and the sum of these expectations is not modified by more than an additive $1/k$. Similarly, the expectations of the indicators indexed by ${\cal H}_k$ are rounded to some value in $\{1-1/k,1\}$. See the details of how the rounding is performed in Section 4.1 of~\cite{DP:cover}. Let us then denote by $\{p_i'\}_i$ the expectations of the indicators $\{Z_i\}_i$ resulting from the rounding. We argue that the mean and variance of $Z=\sum_i Z_i$ is close to the mean and variance of $X$. Indeed,
\begin{eqnarray}
|\mu - \mu_Z|&=&\left|\sum_i p_i - \sum_i p_i'\right| \nonumber \\
&=& \left|\sum_{i\in {\cal L}_k \cup {\cal H}_k} p_i - \sum_{i \in {\cal L}_k \cup {\cal H}_k} p_i'\right|\nonumber \\
& \le & O(1/k)=O(\epsilon). \label{eq: bound on the mean}
\end{eqnarray}
Similarly,
\begin{eqnarray*}
|\sigma^2 - \sigma_Z^2|&=&\left|\sum_i p_i(1-p_i) - \sum_i p_i'(1-p_i')\right|
\\
&\le& \left|\sum_{i\in {\cal L}_k} p_i(1-p_i) - \sum_{i \in {\cal L}_k} p_i' (1-p_i')\right| +  \left|\sum_{i\in {\cal H}_k} p_i(1-p_i) - \sum_{i \in {\cal H}_k} p_i' (1-p_i')\right|.
\end{eqnarray*}
We proceed to bound the two terms of the RHS separately. Since the argument is symmetric for ${\cal L}_k$ and ${\cal H}_k$ we only do ${\cal L}_k$.
We have
\begin{align*}
 \left|\sum_{i\in {\cal L}_k} p_i(1-p_i) - \sum_{i \in {\cal L}_k} p_i' (1-p_i')\right|
&= \left|\sum_{i\in {\cal L}_k} (p_i-p_i')(1-(p_i+p_i'))\right| \\
&= \left|\sum_{i\in {\cal L}_k} (p_i-p_i')-\sum_{i\in {\cal L}_k}(p_i-p_i')(p_i+p_i')\right| \\
&\le \left|\sum_{i\in {\cal L}_k} (p_i-p_i')\right|+\left|\sum_{i\in {\cal L}_k}(p_i-p_i')(p_i+p_i')\right| \\
&\le {1\over k}+\sum_{i\in {\cal L}_k} |p_i-p_i'|(p_i+p_i')\\
&\le {1\over k}+{1\over k} \sum_{i\in {\cal L}_k} (p_i+p_i')\\
&\le {1\over k}+{1\over k} \left(2 \sum_{i\in {\cal L}_k} p_i+1/k\right)\\
&{=} {1\over k}+{1\over k} \left({2 \over 1-1/k} \sum_{i\in {\cal L}_k} p_i (1-{1/k})+1/k\right)\\
&\le {1\over k}+{1\over k} \left({2 \over 1-1/k} \sum_{i\in {\cal L}_k} p_i (1-p_i)+1/k\right)\\
&\le {1\over k}+{1\over k^2}+{2\over k-1} \sum_{i\in {\cal L}_k} p_i (1-p_i).
\end{align*}
Using the above (and a symmetric argument for index set ${\cal H}_k$) we obtain:
\begin{align}
|\sigma^2 - \sigma_Z^2| \le{2\over k}+{2\over k^2}+{2\over k-1} \sigma^2 = O(\epsilon)(1+\sigma^2). \label{eq:bound on the variance}
\end{align}

\item Proof for $(\mu_Y, \sigma_Y^2)$: After the Stage 1 filter is applied to the collection $\{X_i\}_i$, the resulting collection of random variables $\{Z_i\}_i$ has expectations $p'_i \in \{0,1\} \cup [1/k,1-1/k]$, for all $i$. The Stage 2 filter has different form depending on the cardinality of the set ${\cal M}=\{i~|~p_i' \in [1/k,1-1/k]\}$. In particular, if $|{\cal M}| > k^3$ the output of the Stage 2 filter is in heavy Binomial form, while if $|{\cal M}| \le k^3$ the output of the Stage 2 filter is in sparse form. As we are only looking to provide guarantee for the distributions in heavy Binomial form, it suffices to  only consider  the former case next.
\begin{itemize}
\item $|{\cal M}| > k^3$: Let $\{Y_i\}_i$ be the collection produced by Stage 2 and let $Y=\sum_i Y_i$. Then Lemma 4 of~\cite{DP:cover} implies that
$$|\mu_Z- \mu_Y| = O(1)~~\text{and}~~|\sigma^2_Z - \sigma^2_Y| = O(1).$$
Combining this with~\eqref{eq: bound on the mean} and~\eqref{eq:bound on the variance} gives
$$|\mu - \mu_Y| = O(1)~~\text{and}~~|\sigma^2 - \sigma^2_Y| = O(1 + \epsilon \cdot (1+\sigma^2)).$$

\end{itemize}

\end{itemize}

This concludes the proof of Theorem~\ref{thm: sparse cover theorem}. \qed

\section{Birg\'{e}'s theorem:  Learning unimodal distributions} \label{ap:birge}

Here we briefly explain how Theorem~\ref{thm:Birge unimodal} follows from \cite{Birge:97}.  We assume that the reader is moderately familiar with the paper \cite{Birge:97}.

Birg\'{e} (see his Theorem~1 and Corollary~1) upper bounds the expected variation distance between the target distribution (which he denotes $f$) and the hypothesis distribution that is constructed by his algorithm (which he denotes $\hat{f}_n$; it should
be noted, though, that his ``$n$'' parameter denotes the number of samples used by the algorithm, while we will denote this by ``$m$'', reserving ``$n$'' for the domain $\{1,\dots,n\}$ of the distribution). More
precisely, \cite{Birge:97}  shows that this expected variation distance is at most that of the Grenander estimator (applied
to learn a unimodal distribution when the mode is known) plus a lower-order term.  For our Theorem~\ref{thm:Birge unimodal} we take Birg\'{e}'s ``$\eta$'' parameter to be $\eps$.  With this choice of $\eta,$ by the results of \cite{Birge:87,Birge:87b} bounding the expected error of the Grenander estimator, if $m=O(\log(n)/\eps^3)$ samples
are used in Birg\'{e}'s algorithm then the expected variation distance between the target distribution and his hypothesis distribution is at most $O(\eps).$
To go from expected error ${O(\eps)}$ to an ${O(\eps)}$-accurate
hypothesis with probability at least $1-\delta$, we run the above-described algorithm $O(\log(1/\delta))$ times so that with
probability at least $1-\delta$ some hypothesis obtained is ${O(\eps)}$-accurate.  Then we use our hypothesis testing procedure of Lemma~\ref{lem:choosehypothesis}, or, more precisely, the extension provided in Lemma~\ref{lem:log-cover-size}, to identify an $O(\eps)$-accurate hypothesis {from within this pool of $O(\log(1/\delta))$ hypotheses}.  (The use of Lemma~\ref{lem:log-cover-size} is why the running time of Theorem~\ref{thm:Birge unimodal} depends quadratically on $\log(1/\delta)$ {and why the sample complexity contains the second ${\frac 1 {\eps^2}} \log {\frac 1 \delta} \log \log {\frac 1 \delta}$ term.})

It remains only to argue that a single run of Birg\'{e}'s algorithm on a sample of size $m = O(\log(n)/\eps^3)$
can be carried out in $\tilde{O}(\log^2(n)/\eps^3)$ bit operations (recall that each sample is a
$\log(n)$-bit string).  His algorithm  begins by locating an $r \in [n]$ that approximately minimizes the value of his
function $d(r)$ (see Section~3 of \cite{Birge:97}) to within an additive $\eta = \eps$ (see Definition~3 of his
paper); intuitively this $r$ represents his algorithm's ``guess'' at the true mode of the distribution.  To locate
such an $r$, following Birg\'{e}'s suggestion in Section~3 of his paper, we begin by identifying two consecutive
points in the sample such that $r$ lies between those two sample points.  This can be done using $\log m$ stages
of binary search over the (sorted) points in the sample, where at each stage of the binary search we compute the two functions $d^-$ and $d^+$ and proceed in the appropriate direction.  To compute the function $d^-(j)$ at a given point $j$ (the computation of $d^+$ is analogous), we recall that $d^-(j)$ is defined as the maximum difference over $[1,j]$ between the empirical cdf and its convex minorant over $[1,j]$.  The convex minorant of the empirical cdf
(over $m$ points) can be computed in $\tilde{O}((\log n)m)$ bit-operations (where the $\log n$ comes from the fact that each sample point is an element of $[n]$), and then by enumerating over all points in the sample that lie in $[1,j]$
(in time $O((\log n)m)$) we can compute $d^-(j).$  Thus it is possible to identify two adjacent points in the sample
such that $r$ lies between them in time $\tilde{O}((\log n)m).$  Finally, as Birg\'{e} explains in the last paragraph of Section~3 of his paper, once two such points have been identified it is possible to again use binary search to find
a point $r$ in that interval where $d(r)$ is minimized to within an additive $\eta.$  Since the maximum difference
between $d^-$ and $d_+$ can never exceed 1, at most $\log(1/\eta)=\log(1/\eps)$ stages of binary search are required
here to find the desired $r$.

Finally, once the desired $r$ has been obtained, it is straightforward to output the final hypothesis (which Birg\'{e}
denotes $\hat{f}_n$).  As explained in Definition~3, this hypothesis is the derivative of $\tilde{F}^r_n$, which is essentially the convex minorant of the empirical cdf to the left of $r$ and the convex majorant of the empirical cdf
to the right of $r$.  As described above, given a value of $r$ these convex majorants and minorants can be computed in $\tilde{O}((\log n)m)$ time, and the derivative is simply a collection of uniform distributions as claimed.  This concludes our sketch of how Theorem~\ref{thm:Birge unimodal} follows from \cite{Birge:97}.

\ignore{
By performing a binary search over the $m$ points
of the sample, such an $r$ can be found in time $O(\log(m))$ times the time required to evaluate $d(\cdot)$
on a single input.}

\section{Efficient Evaluation of the Poisson Distribution} \label{app:poisson}

In this section we provide an efficient algorithm to compute an additive
approximation to  the Poisson probability mass function. It seems
that this should be a basic operation in numerical analysis, but we were not able to find
it explicitly in the literature. Our main result for this section is the following.

\begin{theorem} \label{thm:poisson}
There is an algorithm that, on input a rational number $\lambda >0$, and integers $k  \ge 0$ and $t>0$, produces an estimate $\widehat{p_k}$ such that
$$\left|\widehat{p_k} - p_k\right| \le {1 \over t},$$
where $p_k={\lambda^k e^{-\lambda} \over k!}$ is the probability that the Poisson distribution of parameter $\lambda$ assigns to integer $k$. The running time of the algorithm is $\tilde{O}(\bit{t}^3 + \bit{k}\cdot \bit{t} +\bit{\lambda} \cdot \bit{t})$.
\end{theorem}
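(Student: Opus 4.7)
\medskip

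\noindent\textbf{Proof proposal.}
The plan is to reduce the problem to computing $\log p_k = k \log \lambda - \lambda - \log k!$ to absolute precision roughly $1/(2t)$, and then exponentiating. A simple calculation shows that if we have $\tilde L$ with $|\tilde L - \log p_k| \le 1/(2t)$, then
\[
|e^{\tilde L} - p_k| \;=\; p_k \cdot |e^{\tilde L - \log p_k}-1| \;\le\; p_k \cdot \tfrac{1}{t} \;\le\; \tfrac{1}{t},
\]
so any additional $O(1/t)$ error incurred when exponentiating $\tilde L$ is also within budget. Moreover, if our computed $\tilde L$ satisfies $\tilde L < -\log(2t)$, we may safely output $0$: in that case $p_k \le e^{\tilde L + 1/(2t)} \le \tfrac{1}{t}$. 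So we only ever have to exponentiate values $\tilde L$ of magnitude $O(|t|)$, which keeps things well-conditioned.

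To compute $\tilde L$, I would handle each of the three terms separately.  For $\log \lambda$, write $\lambda = \lambda_1/\lambda_2$, reduce its numerator and denominator to $p = |t|+|k|+O(1)$ bits of precision (which is needed because we then multiply by $k$, amplifying additive error by a factor of $k \le 2^{|k|}$), and invoke a standard high-precision logarithm routine (e.g.\ Taylor series after range reduction, or AGM). This takes $\tilde O(|\lambda|+|k|+|t|)$ time, and multiplying the result by $k$ costs $\tilde O((|k|+|t|)\cdot |k|)$ using schoolbook multiplication, contributing the $|k|\cdot|t|$ term in the target bound. For $\log k!$, I would split into two cases: if $k$ is small (say $k \le |t|$) compute $\log k! = \sum_{i=1}^{k} \log i$ term by term, each $\log i$ to precision $2^{-|t|-O(\log |t|)}$; if $k$ is large, use the Stirling series
\[
\log k! \;=\; k\log k - k + \tfrac{1}{2}\log(2\pi k) + \sum_{n=1}^{N-1}\frac{B_{2n}}{2n(2n-1)\,k^{2n-1}} + R_N,
\]
truncated at $N = O(|t|/\log k)$ terms, for which the standard remainder bound $|R_N| \le |B_{2N}|/(2N(2N-1) k^{2N-1})$ is at most $2^{-|t|}$. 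Finally subtract $\lambda$ (represented to precision $2^{-|t|}$). All of this fits inside $\tilde O(|t|^2 + |k|\cdot|t| + |\lambda|\cdot|t|)$.

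For the exponentiation, if $\tilde L \ge -\log(2t)$ write $\tilde L = -q \ln 2 - r$ with $q\in \mathbb{Z}_{\ge 0}$ and $r \in [0,\ln 2)$; here $q = O(|t|)$ by the threshold step. Compute $e^{-r}$ via its Taylor series $\sum_{j=0}^{N}(-r)^j/j!$ truncated at $N = O(|t|)$ terms (which gives tail error $\le 1/N! \le 2^{-|t|}$), and output $\widehat{p_k} = 2^{-q}\cdot e^{-r}$ (a free shift). Using schoolbook multiplication on $O(|t|)$-bit operands, each Taylor term costs $\tilde O(|t|^2)$ and there are $O(|t|)$ terms, yielding $\tilde O(|t|^3)$; this is the step that dominates the overall running time and produces the $|t|^3$ in the stated bound. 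The main obstacle, and the one requiring the most care, is precision bookkeeping: ensuring that each of the three ingredients ($\log\lambda$ multiplied by $k$, $\log k!$, and the final exponentiation) is computed to the right absolute precision so that the accumulated error is at most $1/t$, while keeping the intermediate bit-lengths small enough to meet the $\tilde O(|t|^3 + |k|\cdot|t| + |\lambda|\cdot|t|)$ running time. The case split for $\log k!$ (direct summation versus Stirling) is the cleanest way I see to avoid a $\mathrm{poly}(k)$ dependence when $k$ is astronomically large while still handling small $k$ where Stirling's asymptotic series is useless.
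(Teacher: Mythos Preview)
Your proposal is correct and follows essentially the same strategy as the paper: compute $E_k=\log p_k=k\log\lambda-\lambda-\log k!$ to additive accuracy $O(1/t)$ using Stirling's asymptotic series for $\log k!$ and fast logarithm routines for $\log\lambda$, then exponentiate. The only notable differences are at the subroutine level: the paper exponentiates by binary-searching over the grid $\{i/(4t)\}_{i=1}^{4t}$ using fast log evaluations (their Lemma~\ref{lemma:exp-approx}) rather than your range-reduced Taylor series, and the paper applies Stirling unconditionally without your small-$k$ case split---your split is in fact the more careful choice, since the asymptotic Stirling remainder bound degrades when $k$ is small relative to the number of terms retained.
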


\begin{proof}
Clearly we cannot just compute $e^{-\lambda}$, $\lambda^k$ and $k!$ separately, as this will take time exponential in the description complexity of $k$ and $\lambda$. We  follow instead an indirect approach. We start by rewriting the target probability as follows
$$p_k = e^{-\lambda + k \ln(\lambda)-\ln(k!)}.$$
Motivated by this formula, let $$E_k:=-\lambda + k \ln(\lambda)-\ln(k!).$$
Note that $E_k \leq 0$. Our goal is to approximate $E_k$ to within high
enough accuracy and then use this approximation to approximate $p_k$.

In particular, the main part of the argument involves an efficient algorithm to compute an approximation $\widehat{\widehat{E_k}}$ to $E_k$ satisfying
\begin{equation} \label{eqn:epsilon-hat-hat}
\Big|\widehat{\widehat{{E}_k}}-E_k \Big| \le {1 \over 4t} \le {1 \over 2t} - {1\over 8 t^2 }.
\end{equation}
This approximation will have bit complexity $\tilde{O}(\bit{k} +\bit{\lambda}+\bit{t})$ and
be computable in time $\tilde{O}(\bit{k} \cdot \bit{t}+\bit{\lambda}+\bit{t}^3)$.

\medskip We  show that if we had such an approximation, then we would be able to complete the proof.
For this, we claim that it suffices to approximate $e^{\widehat{\widehat{{E_k}}}}$ to within
an additive error  ${1\over 2t}$. Indeed, if $\widehat{p_k}$ were the result of this approximation, then we would have:
\begin{eqnarray*}
\widehat{p}_k &\le& e^{\widehat{\widehat{E_k}}}+{1 \over 2t} \\
&\le& e^{{E}_k + {1 \over 2t} - {1\over 8 t^2 }} + {1 \over 2t}\\ &\le& e^{{E}_k + \ln(1+ {1 \over 2t})} + {1 \over 2t} \\&\le& e^{E_k}\left(1+{1 \over 2t}\right) + {1 \over 2t} \le p_k +{1\over t};
\end{eqnarray*}
and similarly
\begin{eqnarray*}
\widehat{p}_k &\ge& e^{\widehat{\widehat{E_k}}}-{1 \over 2t} \\
&\ge& e^{{E}_k - ( {1 \over 2t} - {1\over 8 t^2 })} - {1 \over 2t}\\ &\ge&
e^{{E}_k - \ln(1+ {1 \over 2t})} - {1 \over 2t} \\ &\ge& e^{E_k}\Big/\left(1+{1 \over 2t}\right) - {1 \over 2t}\\ &\ge&  e^{E_k}\left(1-{1 \over  2t}\right) - {1 \over 2t} \ge p_k -{1\over t}.
\end{eqnarray*}
To approximate $e^{\widehat{\widehat{{E_k}}}}$ given ${\widehat{\widehat{{E_k}}}}$, we need the following lemma:

\begin{lemma} \label{lemma:exp-approx}
Let $\alpha \leq 0$ be a rational number. There is an algorithm that computes an estimate
$\widehat{e^{\alpha}}$ such that
$$ \left| \widehat{e^{\alpha}} - e^{\alpha} \right| \leq {1 \over 2t}$$
and has running time $\tilde{O}(\bit{\alpha}\cdot \bit{t}+\bit{t}^2).$
\end{lemma}

\begin{proof}
Since  $e^{\alpha} \in [0,1]$, the point of the additive grid $\{ {i \over4t} \}_{i=1}^{4t}$
closest to $e^{\alpha}$ achieves error at most $1/(4t)$.
Equivalently, in a logarithmic scale, consider the grid $\{ \ln{i \over4t} \}_{i=1}^{4t}$ and  let
$j^{\ast}:=\arg\min_j \left\{ \Big|\alpha - {\ln({j \over 4t})}\Big| \right\}$. Then, we have
that $$\left|{j^{\ast} \over  (4t)} - e^{\alpha}\right| \leq {1 \over 4t}.$$
The idea of the algorithm is to approximately identify the point $j^{\ast}$, by computing
approximations to the points of the logarithmic grid combined with a binary search procedure.
Indeed, consider the ``rounded'' grid $\{ \widehat{\ln{i \over4t}} \}_{i=1}^{4t}$ where each
$\widehat{\ln({i \over 4t})}$ is an approximation to  $\ln({i \over 4t})$ that is accurate
to within an additive ${1 \over 16t}$. Notice that, for $i=1,\ldots,4t$:
$$\ln\left({i +1 \over 4t}\right)-\ln\left({i \over 4t}\right) = \ln\left(1+{1 \over i}\right)
\ge \ln\left(1+{1 \over 4t}\right) > 1/8t.$$
Given that our approximations are accurate to within an additive $1/16t$, it follows that the rounded grid
$\{ \widehat{\ln{i \over4t}} \}_{i=1}^{4t}$ is  monotonic in $i$.

The algorithm does not construct the points of this grid explicitly, but adaptively as it needs them.
In particular, it performs a binary search in the set $\{1,\ldots, 4t\}$ to find the point
$i^{\ast} := \arg\min_i \left\{ \Big|\alpha - \widehat{\ln({i \over 4t})}\Big| \right\}$.
In every iteration of the search, when the algorithm examines the point $j$, it
needs to compute the approximation  $g_j = \widehat{\ln({j \over 4t})}$ and evaluate the distance
$|\alpha - g_j |$.  {It is known that the logarithm of a number $x$ with a binary fraction of $L$ bits and an exponent of $o(L)$ bits can be computed to within a relative error $O(2^{-L})$ in time $\tilde{O}(L)$~\cite{brentzeroes}.} It follows from this that $g_j$ has $O(\bit{t})$ bits and can be computed in time $\tilde{O}(\bit{t})$. The subtraction takes linear time, i.e., it uses $O(\bit{\alpha}+\bit{t})$ bit operations. Therefore, each step of the binary search can be done
in time $O(\bit{\alpha})+ \tilde{O}(\bit{t})$ and thus the overall algorithm has
$O(\bit{\alpha} \cdot \bit{t})+ \tilde{O}(\bit{t}^2)$ running time.

The algorithm outputs $i^{\ast} \over 4t$ as its final approximation to $e^{\alpha}$.
We argue next that the achieved error is at most an additive $1 \over 2t$.
Since the distance between two consecutive points of
the grid $\{ \ln{i \over4t} \}_{i=1}^{4t}$ is more than $1/(8t)$
and our approximations are accurate to within an additive $1/16t$,
a little thought reveals that $i^{\ast} \in \{j^{\ast}-1,j^{\ast},j^{\ast}+1\}$.
This implies that $i^{\ast} \over 4t$ is within an additive $1/2t$ of $e^{\alpha}$ as desired,
and the proof of the lemma is complete.
\end{proof}


Given Lemma~\ref{lemma:exp-approx}, we describe how we could approximate $e^{\widehat{\widehat{{E_k}}}}$ given ${\widehat{\widehat{{E_k}}}}$.
Recall that we want to output an estimate $\widehat{p_k}$ such that
$|\widehat{p_k} -e^{\widehat{\widehat{{E_k}}}}| \leq 1/(2t)$.
We distinguish the following cases:
\begin{itemize}
\item If $\widehat{\widehat{{E_k}}} \ge 0$, we output $\widehat{p_k}:=1$. Indeed, given that $\Big|\widehat{\widehat{{E}_k}}-E_k \Big| \le {1 \over 4t}$ and $E_k \le 0$, if  $\widehat{\widehat{{E_k}}} \ge 0$ then $\widehat{\widehat{{E_k}}} \in [0,{1 \over 4t}]$. Hence, because $t\ge 1$, $e^{\widehat{\widehat{{E_k}}}} \in [1,1+1/2t]$, so $1$ is within an additive $1/2t$ of the right answer.

\item Otherwise, $\widehat{p_k}$ is defined to be the estimate obtained by applying Lemma~\ref{lemma:exp-approx} for $\alpha:= \widehat{\widehat{E_k}}$.  Given the bit complexity of $\widehat{\widehat{E_k}}$, the running time of this procedure will be  $\tilde{O}(\bit{k} \cdot \bit{t} +\bit{\lambda}\cdot \bit{t} + \bit{t}^2)$.
\end{itemize}
Hence, the overall running time is $\tilde{O}(\bit{k} \cdot \bit{t} + \bit{\lambda}\cdot \bit{t}+\bit{t}^3)$.

\bigskip In view of the above, we only need to show how to compute $\widehat{\widehat{E_k}}$.
There are several steps to our approximation:
\begin{enumerate}
\item (Stirling's Asymptotic Approximation): Recall Stirling's asymptotic approximation (see e.g.,~\cite{Whittaker:80} p.193), which says that $\ln k!$ equals
$$k \ln(k) - k + (1/2)\cdot \ln(2\pi) +\sum_{j=2}^m { \frac{B_j \cdot (-1)^j}{j(j-1) \cdot k^{j-1}} }+O(1/k^m)$$
where $B_k$ are the Bernoulli numbers.
We define an approximation of $\ln{k!}$ as follows:
$$\widehat{\ln k!}: =k \ln(k) - k + (1/2)\cdot \ln(2\pi) +
\sum_{j=2}^{m_0} { \frac{B_j \cdot (-1)^j}{j(j-1) \cdot k^{j-1}} }$$
for $m_0:= O\left( \left\lceil {\bit{t} \over \bit{k}} \right\rceil +1\right).$

\item (Definition of an approximate exponent $\widehat{E_k}$):
Define $\widehat{E_k}:=-\lambda + k \ln(\lambda)-\widehat{\ln(k!)}$.
Given the above discussion, we can calculate the distance of $\widehat{E_k}$
to the true exponent $E_k$ as follows:
\begin{align}
|E_k - \widehat{E_k}| \le |\ln(k!)-\widehat{\ln(k!)}|  &\le O(1/k^{m_0})\\
&\le {1 \over 10t}. \label{eq: loss due to sterling and pi}
\end{align}
So we can focus our attention to approximating $\widehat{E_k}$.
Note that  $\widehat{E_k}$ is the sum of $m_0+2 = O({\log t \over \log k})$ terms. To approximate it within error
$1/(10t)$, it suffices to approximate each summand within an additive error of $O(1/(t \cdot \log t))$. Indeed, we so approximate each summand and our final approximation $\widehat{\widehat{E_k}}$ will be the sum
of these approximations. We proceed with the analysis:

\item (Estimating $2\pi$): Since $2\pi$ shows up in the above expression,
we should try to approximate it. It is known that the first $\ell$ digits of $\pi$
can be computed exactly in time $O(\log \ell \cdot M(\ell))$, where $M(\ell)$ is the time
to multiply two $\ell$-bit integers~\cite{Salamin,brent}. For example, if we use the
Sch\"onhage-Strassen algorithm for  multiplication~\cite{SS71}, we get
$M(\ell)=O(\ell \cdot \log \ell \cdot \log \log \ell)$. Hence, choosing
$\ell:=\lceil \log_2(12t \cdot \log t)\rceil$, we can obtain in time $\tilde{O}(\bit{t})$
an approximation $\widehat{2\pi}$ of $2\pi$ that has a binary fraction of $\ell$ bits and satisfies:
$$|\widehat{2\pi}-2\pi| \le 2^{-\ell}~~\Rightarrow~~
(1-2^{-\ell}) 2 \pi \le \widehat{2\pi} \le (1+2^{-\ell}) 2\pi.$$
Note that, with this approximation, we have
$$ \left| \ln(2\pi) - \ln(\widehat{2\pi}) \right| \leq \ln(1-2^{-\ell})\leq 2^{-\ell} \leq 1/(12t \cdot \log t).$$

\item (Floating-Point Representation): We will also need accurate approximations to $\ln{\widehat{2\pi}}$, $\ln k$ and $\ln \lambda$. We think of $\widehat{2\pi}$ and $k$ as multiple-precision floating point numbers base $2$. In particular,
\begin{itemize}
\item $\widehat{2\pi}$ can be described with a binary fraction of $\ell+3$ bits and a constant size exponent; and
\item $k \equiv 2^{\lceil \log k\rceil}\cdot {k \over 2^{\lceil \log k\rceil}}$ can be described with a binary fraction of $\lceil \log k \rceil$, i.e., $\bit{k}$, bits and an exponent of length $O( \log  \log k)$, i.e., $O(\log \bit{k})$.
\end{itemize}
Also, since $\lambda$ is a positive rational number, $\lambda={\lambda_1 \over \lambda_2}$, where $\lambda_1$ and $\lambda_2$ are positive integers of at most $\bit{\lambda}$ bits. Hence, for $i=1,2$, we can think of $\lambda_i$ as a multiple-precision floating point number base $2$ with a binary fraction of $\bit{\lambda}$ bits and an exponent of length $O(\log \bit{\lambda})$. Hence, if we choose $L = \lceil \log_2(12(3k+1)t^2 \cdot k \cdot \lambda_1 \cdot \lambda_2) \rceil = O(\bit{k}+\bit{\lambda}+\bit{t})$,  we can represent all numbers $\widehat{2\pi}, \lambda_1,\lambda_2, k$ as multiple precision floating point numbers with a binary fraction of $L$ bits and an exponent of $O(\log L)$ bits.

\item (Estimating the logs): It is known that the logarithm of a number $x$ with a binary fraction of $L$ bits and an exponent of $o(L)$ bits can be computed to within a relative error $O(2^{-L})$ in time $\tilde{O}(L)$~\cite{brentzeroes}. Hence, in time $\tilde{O}(L)$ we can obtain approximations $\widehat{\ln \widehat{2\pi}}, \widehat{\ln k}, \widehat{\ln{\lambda_1}}, \widehat{\ln{\lambda_2}}$ such that:
\begin{itemize}
\item $|\widehat{\ln k} - {\ln k}| \le 2^{-L} {\ln k} \le {1 \over 12(3k+1)t^2}$; and similarly
\item $|\widehat{\ln \lambda_i} - {\ln \lambda_i}| \le {1 \over 12(3k+1)t^2}$, for $i=1,2$;
\item $|\widehat{\ln \widehat{2\pi}} - {\ln \widehat{2\pi}}| \le {1 \over 12(3k+1)t^2}.$
\end{itemize}

\item (Estimating the terms of the series):
To complete the analysis, we also need to approximate each term of the form
$c_j = \frac{B_j}{j(j-1) \cdot k^{j-1}}$ up to an additive error of $O(1/(t \cdot \log t))$. We do this as follows: We compute the numbers $B_j$ and $k^{j-1}$ exactly, and we perform the division approximately.

Clearly, the positive integer  $k^{j-1}$ has description complexity $j \cdot \bit{k} = O(m_0 \cdot \bit{k}) = O(\bit{t}+\bit{k})$, since $j  = O(m_0)$.  We compute $k^{j-1}$ exactly using repeated squaring in time
$\tilde{O}(j \cdot \bit{k}) = \tilde{O}(\bit{t}+\bit{k})$.
It is known~\cite{Fillebrown:92} that the rational number $B_j$ has $\tilde{O}(j)$ bits and can be computed in $\tilde{O}(j^2) = \tilde{O}(\bit{t}^2)$ time. Hence, the approximate evaluation of the term $c_j$ (up to the desired additive error of $1/(t \log t)$)  can be done in $\tilde{O}(\bit{t}^2+\bit{k})$, by a rational division operation (see e.g.,~\cite{Knuth:81}).
The sum of all the approximate terms takes linear time, hence the approximate evaluation of the entire truncated series (comprising at most $m_0 \leq \bit{t}$ terms) can be done in $\tilde{O}(\bit{t}^3+\bit{k} \cdot \bit{t})$ time overall.

\medskip

Let $\widehat{\widehat{E_k}}$ be the approximation arising if we use all the aforementioned approximations. It follows from the above computations that
$$\Big |\widehat{\widehat{E_k}} - \widehat{E_k} \Big| \le {1 \over 10t}.$$

\item (Overall Error): Combining the above computations we get:
$$\Big |\widehat{\widehat{E_k}} - {E_k} \Big| \le {1 \over 4t}.$$

The overall time needed to obtain $\widehat{\widehat{E_k}}$ was
$\tilde{O}(\bit{k} \cdot \bit{t}+\bit{\lambda}+\bit{t}^3)$ and the proof of Theorem~\ref{thm:poisson} is complete. \qed
\end{enumerate}
\end{proof}

\end{document}